\newcommand{\lcc}{{\widehat\nabla}} 
\newcommand{\Dlcc}{{\widehat D}} 
\newcommand{\Klcc}{{\widehat K}} 
\newcommand{\Rlcc}{{\widehat R}} 
\newcommand{\bnw}{{\nabla}} 
\newcommand{\lvf}{{\ul\lambda}} 
\newcommand{\efnwt}[1]{{{\mr e}^{\widetilde{#1}}}} 
\newcommand{\efnsim}[1]{{{\mr e}^{#1^\sim}}} 
\numberwithin{equation}{section}
\theoremstyle{theorem}
\newtheorem{Theorem}{Theorem}[section]
\newtheorem{Proposition}[Theorem]{Proposition}
\newtheorem{Propdef}[Theorem]{Proposition-Definition}
\newtheorem{Lemma}[Theorem]{Lemma}
\newtheorem{Corollary}[Theorem]{Corollary}
\newcommand{\thistheoremname}{}
\newtheorem{Genthm}[Theorem]{\thistheoremname}
\newenvironment{Namedthm}[1]
  {\renewcommand{\thistheoremname}{#1}%
   \begin{Genthm}}
   {\end{Genthm}}
\theoremstyle{definition}
\newtheorem{Definition}[Theorem]{Definition}
\theoremstyle{remark}
\newtheorem{Remark}[Theorem]{Remark}
\newtheorem{Example}[Theorem]{Example}
\newcommand{\btm}{\begin{Theorem}}
\newcommand{\etm}{\end{Theorem}}
\newcommand{\ben}{\begin{enumerate}}
\newcommand{\een}{\end{enumerate}}
\newcommand{\bit}{\begin{itemize}}
\newcommand{\eit}{\end{itemize}}
\newcommand{\bca}{\begin{cases}}
\newcommand{\eca}{\end{cases}}
\newcommand{\bre}{\begin{Remark}\rm}
\newcommand{\ere}{\end{Remark}}
\newcommand*{\bbm}{\begin{Remark}}
\newcommand*{\ebm}{\end{Remark}}
\newcommand{\ble}{\begin{Lemma}}
\newcommand{\ele}{\end{Lemma}}
\newcommand*{\bsz}{\begin{Proposition}}
\newcommand*{\esz}{\end{Proposition}}
\newcommand{\beq}{\begin{equation}}
\newcommand{\eeq}{\end{equation}}
\newcommand{\bbma}{\begin{bmatrix}}
\newcommand{\ebma}{\end{bmatrix}}
\newcommand*{\bbs}{\begin{Example}}
\newcommand*{\ebs}{\end{Example}}
\newcommand*{\bfg}{\begin{Corollary}}
\newcommand*{\efg}{\end{Corollary}}
\newcommand*{\bdf}{\begin{Definition}}
\newcommand*{\edf}{\end{Definition}}
\newcommand*{\bbw}{\begin{proof}}
\newcommand*{\ebw}{\end{proof}}
\newcommand*{\bpf}{\begin{proof}}
\newcommand*{\epf}{\end{proof}}
\newcommand{\bibend}{.}
\newcommand{\II}{\mathbbm{1}}
\newcommand{\CC}{{\mathbb{C}}}
\newcommand{\NN}{{\mathbb{N}}}
\newcommand{\RR}{{\mathbb{R}}}
\newcommand{\R}{\mathbb{R}}
\newcommand{\N}{\mathbb{N}}
\newcommand{\Z}{\mathbb{Z}}
\newcommand{\mb}{\mathbf}
\newcommand*{\res}{\upharpoonright}
\newcommand{\SO}{{\mr{SO}}}
\newcommand{\SU}{{\mr{SU}}}
\newcommand{\su}{\mf{su}}
\newcommand{\al}[1]{\begin{align} #1 \end{align}}
\newcommand{\ala}[1]{\begin{align*} #1 \end{align*}}
\DeclareMathOperator{\Ad}{Ad}
\DeclareMathOperator{\im}{im}
\DeclareMathOperator{\tr}{tr}
\newcommand{\mc}[1]{\mathcal{#1}}
\newcommand{\mf}[1]{\mathfrak{#1}}
\newcommand{\mr}[1]{\mathrm{#1}}
\newcommand{\comment}[1]{}
\newcommand{\verweis}[1]{}
\newcommand{\todo}[1]{}
\renewcommand{\d}{{\mr d}}
\newcommand{\ve}{\varepsilon}
\newcommand{\vp}{\varphi}
\newcommand{\yi}{\upsilon}
\newcommand{\ctg}{\mr T^\ast}
\newcommand{\tg}{\mr T}
\newcommand{\rref}[1]{{\rm \ref{#1}}}
\newcommand{\ol}[1]{\overline{#1}}
\newcommand{\ul}[1]{\underline{#1}}
\newcommand{\abs}{\hspace*{2.5mm}}
\newcommand{\g}{\mathfrak{g}}
\newcommand{\Zeta}{\mathrm{Z}}
\DeclareMathOperator{\End}{End}
\DeclareMathOperator{\Mat}{Mat}
\DeclareMathOperator{\ad}{ad}
\newcommand{\calC}{\mathcal{C}}
\newcommand{\calD}{\mathcal{D}}
\newcommand{\calA}{\mathcal{A}}
\newcommand{\calI}{\mathcal{I}}
\newcommand{\red}{\slash\!\slash}
\newcommand{\id}{\operatorname{id}}
\newcommand{\ext}{\mathsf{e}}
\newcommand{\rest}{\mathsf{r}}
\newcommand{\CE}{\operatorname{CE}}
\newcommand{\Hom}{\operatorname{Hom}}
\newcommand{\imi}{\mr i}
\newcommand{\gp}{G}
\newcommand{\la}{\mf g}
\newcommand{\mm}{J}
\newcommand{\pt}{\Lambda}
\begin{document}

\title{Deformation quantization and homological reduction of a lattice gauge model}

\author{M.J.~Pflaum, G.~Rudolph, M.~Schmidt}

\maketitle

\begin{abstract}
\noindent
For a compact Lie group $G$ we consider a lattice gauge model given by the $G$-Hamiltonian
system which consists of the cotangent bundle of a power of $G$ with its canonical symplectic structure
and standard moment map. We explicitly construct a Fedosov quantization of the
underlying symplectic manifold using the Levi-Civita connection of the Killing
metric on $G$. We then explain and refine quantized homological reduction
for the construction of a star product on the symplectically reduced space in the singular case.
Afterwards we show that for $G = \SU (2)$ the main hypotheses ensuring the method of quantized homological
reduction to be applicable hold in the case of our lattice gauge model. For that
case, this implies that the - in general singular - symplectically reduced phase space
of the corresponding lattice gauge model carries a star product.
\end{abstract}

\newpage

\tableofcontents


\section{Introduction}

In this paper, we apply the homological approach to singular reduction in deformation quantization 
developed in \cite{BHP} to a model of gauge theory obtained via lattice approximation of Yang-Mills theory within the Hamiltonian approach. 
We refer to the classical paper \cite{KS} for the formulation of the full model (including matter fields) on a finite lattice and for its canonical quantization. In geometric terms, 
the underlying classical phase space is a product of copies of the cotangent bundle over the gauge group manifold $G$, endowed with the canonical symplectic structure, and the canonical 
moment map is given by the Gauss constraint generator. In \cite{KR1,KR2}, the canonical quantization procedure of this model was taken up in the language of $C^*$-algebras. 
The authors of these papers studied the structure of both the field and the observable algebras of the model including a discussion of the Gauss law 
and the classification of the irreducible representations of the algebra of observables. The latter is, by definition, the quotient of the algebra of gauge invariant operators by the 
ideal generated by the Gauss law. In  \cite{GR1,GR2}, this structural analysis was continued with emphasis on the construction of the thermodynamical limit including also 
the quantum dynamics of the system. Here, we limit our attention to pure Yang-Mills theory (without matter fields) in the finite lattice context. 

It should be clear that within the above approach the algebra of observables rather than the space of states plays the primary role. On the other hand, by standard $C^*$-algebraic arguments 
or, alternatively, by the theory of systems of imprimitivity, one has a unique field algebra representation (the generalized Schr\"odinger representation) and, therefore, it is quite 
straightforward to reduce the gauge symmetry after quantization yielding an identification of the observable algebra of pure lattice Yang-Mills theory with the algebra of compact operators 
on the Hilbert space of square integrable functions over a product of copies of $G$ (the classical configuration space). 
As we are dealing with reduction after quantization here, this algebra a priori 
does not contain any information about the classical gauge orbit stratification of the reduced phase space, the latter being obtained via singular symplectic reduction for the moment 
map at level zero. However,  using the polar decomposition map, the unreduced phase space may be identified with the product of copies of the complexification of $G$, 
this way aquiring a natural K\"ahler structure. Thus, a concept developed by Huebschmann \cite{H} combined with results of Hall \cite{Hall:SBT} may be applied, yielding a costratification of the physical Hilbert space, which may be viewed as the quantum counterpart of the classical stratification. We refer to \cite{HRS,FuRS,FJRS} for the study of this structure including 
a discussion of its possible physical relevance. Recently \cite{KRS}, we have also clarified how to implement the classical stratification on the level of the observable algebra, leading to a stratification of the latter that is, in a sense, dual to the costratification of the physical Hilbert space. In a sense, the above observable algebra endowed with this additional stratified structure may be viewed as a reasonable substitute for a (sometimes desired) theory obtained via quantization after reduction, which within the above approach has not been worked out yet.

Deformation quantization is another quantization procedure which heavily rests on the Hamiltonian structure of the classical phase space and on Marsden--Weinstein reduction. In this respect, 
it is rather close in spirit to the above described approach. On the other hand, in some aspects it differs drastically
from the $C^*$-algebraic approach. 
To be more precise, what we are dealing with here is Fedosov's formal deformation quantization \cite{F} of the unreduced phase space defined above. Then, various options  for the star product can be chosen, see 
\cite{BNW1,BNW2,G1,G2,GR}. Using the above mentioned K\"ahler structure, a Fedosov star product of Wick type can be taken as well, see \cite{BW,N}.
It would be desirable to compare these options, but in this paper we merely choose one of them, namely the product of the standard order type. In \cite{FedosovLMP}, Fedosov has 
shown that there is a natural deformation quantization analog of classical regular symplectic reduction. Next, this issue was taken up by Bordemann, Herbig and Waldmann \cite{BHW}, 
who developed a deformation quantization formulation of the BRST-method. They proved that, under appropriate regularity properties of the group action, the BRST-procedure induces a star 
product on the reduced phase space. In \cite{BHP}, Bordemann, Herbig and Pflaum showed that this method may be extended to singular symplectic reduction, provided the following assumptions 
are fulfilled:
\begin{enumerate}
\item[(GH)\hspace{-1mm}]
The components of the moment map $J$ generate the vanishing ideal of the zero level set $J^{-1} (0)$.
\item[(AC)\hspace{-1mm}]
The Koszul complex on $J$ in the ring of smooth functions on phase space is acyclic. 
\end{enumerate}
Moreover, the star product of the underlying unreduced quantum deformation theory has to fulfill some equivariance conditions. 
The main ideas of this reduction procedure are as follows. 
\begin{enumerate}
\item
For a given $G$-Hamiltonian system $(M, \omega, \Psi, J)$, one constructs the classical BRST-complex $(\calA^\bullet,\calD)$ by taking the graded tensor product of the Chevalley--Eilenberg complex
$\CE^\bullet (\g, \calC^\infty(M))$ associated with the $\g$-module $\calC^\infty (M)$ with the Koszul complex $(K^\bullet,\partial)$ on the moment map $J$ and endows it with the structure of  a differential graded commutative $\calC^\infty (M)$-algebra. Moreover, one shows that the latter carries a natural Poisson structure. 
Now, one can prove that, under the assumptions (GH) and (AC),
the classical symplectically reduced space  is representable (via a deformation retract) as the zeroth cohomology of this BRST-complex  with its natural structure of a differential graded
Poisson algebra.
\item 
Assume we are given a star product $\star$ obtained by formal deformation quantization of the $G$-Hamiltonian system $(M, \omega, \Psi, J)$, fulfilling some natural invariance conditions to be discussed later. Combining this star product with the natural product on the Gra{\ss}mann part, one can endow the $\CC[[\lambda]]$-module $\calA^\bullet[[\lambda]]$ of formal power series with values in $\calA^\bullet$ with a star product $\ast$. Moreover, one  constructs a deformation $\bm{\calD}$ of the classical BRST-differential, thus arriving at a formal deformation quantization $(\calA^\bullet[[\lambda]], \ast , \bm{\calD})$ (called the quantum BRST algebra) of the classical BRST algebra  $\calA^\bullet$. Finally, one can prove that there exists a deformed version of the contraction mentioned under point 1, giving rise 
to a star product on the symplectically reduced space. 
\end{enumerate}

The main result of the present paper consists in the proof that the above conditions (GH) and (AC),
together with the needed equivariance conditions on the star product $\star$, are fulfilled for the gauge model under consideration with gauge group $G = \SU (2)$, see Section \ref{sec:application}. That is, we have proved that homological reduction may be applied to lattice gauge theory. 
Clearly, the star product on the reduced phase space is given in a complicated implicit way. To make it more explicit, one has to study the deformation retract structure entering the whole construction. This will be done in future work. 

There are two further results holding true for any compact connected gauge group $G$ which should be mentioned. First, we have calculated the (standard order) star product for the unreduced theory in detail (Section \ref{sec:fedosov}), thus, 
in particular extending results contained in \cite{BNW1} and, second, we have provided the reader with a deeper analysis of the assumptions needed for the deformation retract 
method used in various places of the paper, see Theorem \ref{Thm:ExCompatibleEquivariantAnalyticStructures} and Theorem \ref{Thm:KoszulResolutionContraintSurface} which is an improved version of Theorem 3.2 in \cite{BHP}. 

One final remark is in order. Throughout this paper, we have exclusively discussed formal deformation quantization. It is a challenge for future work to clarify whether the homological 
reduction method may be developed for strict deformation quantization (see e.g.\ \cite{Landsman}) as well. This would make it possible to compare the quantum observable algebra structure 
obtained here with the observable algebra obtained via canonical quantization described above in closer terms.  \vspace{2mm}

\textbf{Acknowledgements:}
M.J.P.~thanks  DESY Hamburg and the Max-Planck-Institut f\"ur Mathematik Bonn for hospitality and
support of his research stays. He also thanks the Universities of Leipzig and Bonn for hospitality.
Travel support by the Simons Foundation through award nr.~359389 and support by the NSF through
award OAC 1934725 is gratefully acknowledged.   
M.S.~acknowledges funding by DFG under grant SCHM1652/2.
The authors also thank the referees for constructive advice.


\section{The model}\label{sec:model}

Throughout the paper $G$ will denote a compact Lie group and $\g$ its Lie algebra. The lattice gauge model for which
we construct a deformation quantization is best represented as a particular $G$-Hamiltonian system.
Recall, \cite[Sec.~10.1]{BuchI}, that by a $G$-Hamiltonian system or a Hamiltonian $G$-manifold one
understands a quadruple $(M,\omega,\Psi,J)$ such that $(M,\omega)$ is a symplectic manifold,
$\Psi:G \times M \to M$ a smooth action of $G$ on $M$ by symplectomorphisms and such that
$J :M \to \g^*$ is a smooth map called the \emph{moment map} 
which is $G$-equivariant and which satisfies 
\begin{equation}
   d \mm_X  = - X_M \lrcorner \omega \quad \text{for all  } X \in \g \ . 
\end{equation}
Here, $J_X: M \to \R$ denotes the function which maps a point $p\in M$ to the pairing $\langle J(p),X\rangle$
and $X_M$ is the fundamental vector field of $X\in \g$ on $M$. The symplectically reduced space
$M\red G$ is now defined as the quotient space $M_0 /G$ of the zero level set $M_0 = \mm^{-1} (0)$
by the group action. Note that $M_0$, which often is also called the \emph{constraint surface},
is invariant under the group action by equivariance of the moment map and might possess singularities in case
$0$ is not a regular value of the moment map.

To define our lattice gauge model, let $\Lambda$ be a finite spatial lattice. Its sets of zero-dimensional, one-dimensional and two-dimensional elements are denoted by, respectively, $\Lambda^0$, $\Lambda^1$ and $\Lambda^2$ and are called, respectively, sites, links and plaquettes. We also assume that for the links and plaquettes an
arbitrary orientation has been chosen.
In the Hamiltonian approach to lattice gauge theory, gauge fields, or in other words the
variables, are approximated by their parallel transporters along links. Gauge transformations representing the symmetries are approximated by their values at the
lattice sites. The classical configuration space can then be identified with the space $G^{\Lambda^1}$ of maps
$\Lambda^1 \to G$, the classical symmetry group is the group $G^{\Lambda^0}$ of maps $\Lambda^0 \to G$ with pointwise
multiplication and the action of $g \in G^{\Lambda^0}$ on $a \in G^{\Lambda^1}$ is given by  
\beq
\label{G-Wir-voll}
(g \cdot a)(\ell) := g(x) a(\ell) g(y)^{-1}\,,
\eeq
where $\ell \in \Lambda^1$ and $x$, $y$ denote the starting point and the endpoint of $\ell$, respectively. The classical phase space is given by the associated Hamiltonian $G$-manifold \cite{AbrahamMarsden,BuchI} and the reduced classical phase space is obtained by symplectic reduction \cite{OrtegaRatiu,BuchI,SjamaarLerman}. 
Dynamics is governed by the classical counterpart of the Kogut-Susskind lattice Hamiltonian. After identifying $\ctg G$ with $G \times \mf g$, and thus $\ctg G^{\Lambda^1}$ with $G^{\Lambda^1} \times \mf g^{\Lambda^1}$, by means of left-invariant vector fields, the classical Hamiltonian is given by 
\beq
\label{Hamiltonian-C}
H(a,E)
 = 
\frac{\kappa^2}{2 \delta} \sum_{\ell \in \Lambda^1}^N \|E(\ell)\|^2
 -
\frac{1}{\kappa^2 \delta} \sum_{p \in \Lambda^2} \left(\tr a(p) + \ol{\tr a(p)}\right)
 \,,
\eeq
where $a \in G^{\Lambda^1}$, $\kappa$ denotes the coupling constant, $\delta$ denotes the lattice spacing and $a(p)$ is the product of $a(\ell)$ along the boundary of the plaquette $p \in \Lambda^2$ in the induced orientation. The trace is taken in some chosen unitary representation. Due to unitarity, the  Hamiltonian does not depend on the choice of plaquette orientations. Finally, $ E \in \mf g^{\Lambda^1}$ is the canonically conjugate momentum (classical colour electric field).

In the analysis of the orbit type stratification in continuum gauge theory it is reasonable to first factorize with respect to the free action of pointed gauge transformations. This leads to an action of the compact gauge group $G$ on the quotient manifold. This procedure can also be applied to the case of lattice gauge theory under consideration. Given a lattice site $x_0$, it is easy to see that the normal subgroup 
\beq
\label{G-ptgautrf}
\{g \in G^{\Lambda^0} : g(x_0) = \II\}\,,
\eeq
where $\II$ denotes the unit element of $G$, acts freely on $G^{\Lambda^1}$. Hence, one may pass to the quotient manifold and the residual action by the quotient Lie group of $G^{\Lambda^0}$ with respect to this normal subgroup. Obviously, the quotient Lie group is  isomophic to $G$. Let us explain how to identify the quotient manifold with a direct product of copies of $G$ and the quotient action with the action of $G$ by diagonal conjugation. Choose a maximal tree $\mc T$ in the graph $\Lambda^1$ and define the tree gauge of $\mc T$ as the subset
$$
\{a \in G^{\Lambda^1} : a(\ell) = \II  \text{ for all } \ell \in \mc T\}
$$
of $G^{\Lambda^1}$.
One can easily show that every element of $G^{\Lambda^1}$ is conjugate under $G^{\Lambda^0}$
to an element in the tree gauge of $\mc T$ and that two elements in the tree gauge of
$\mc T$ are conjugate under $G^{\Lambda^0}$ if they are conjugate under the action of $G$
via constant gauge transformations. As a consequence, the natural inclusion map of the
tree gauge into $G^{\Lambda^1}$ descends to a $G$-equivariant diffeomorphism from that
tree gauge onto the quotient manifold of $G^{\Lambda^1}$ with respect to the action of the
subgroup \eqref{G-ptgautrf}.
Finally, by choosing a numbering of the off-tree links in $\Lambda^1$, we can identify the tree gauge with the direct product of $N$ copies of $G$, where $N$ denotes the number of off-tree links. The number $N$ does not depend on the choice of $\mc T$. Under this identification, the action of $G$ on the tree gauge via constant gauge transformations translates into the action of $G$ on $G^N$ by diagonal conjugation
\beq\label{G-Wir-Q}
  \underline{\Psi} : G \times G^N \to G^N, \:
  (g, \ul a) = \big( g, (a_1 , \dots  , a_N) \big) \mapsto
  g \cdot \ul a =  (g a_1 g^{-1}, \dots  , g a_N g^ {-1}) \ .
\eeq
To summarize, for the analysis of the role of orbit types we may pass from the original Hamiltonian system with symmetries, given by the configuration space $G^{\Lambda^1}$, the symmetry group $G^{\Lambda^0}$ and the action \eqref{G-Wir-voll}, to the reduced Hamiltonian system with symmetries given by the configuration space 
$Q := G^N$,
the symmetry group $G$ and the action of $G$ on $Q$ given by diagonal conjugation \eqref{G-Wir-Q}. This is the system we will discuss in this paper. The classical phase space is given by the associated Hamiltonian $G$-manifold and the reduced classical phase space is obtained by symplectic reduction. First, by regular symplectic reduction, we obtain the partially reduced phase space $\ctg Q = \ctg G^N$ endowed with its canonical cotangent bundle projection
$\pi: \ctg Q \to Q$.
The action of $G$ on $Q$ lifts to a symplectic action on $\ctg Q$ admitting the standard moment map
\[
  \mm : \ctg Q \to \mf g^\ast \, , \quad \mm(\xi)\big(X)
  := \langle \xi , X_{\ctg Q}(p) \rangle\,,
\]
where $p\in Q$, $\xi \in \ctg_p Q$, $X \in \mf g$ and $X_{T^*Q}$ denotes
the fundamental vector field on $\ctg Q$ defined by $X$.
So one obtains a $G$-Hamiltonian system $(T^*G^N, \omega, \ul\Psi,J)$ which in
the following we will briefly refer to as the \emph{lattice gauge model} for
the Lie group $G$. Its
reduced phase space is obtained from $\ctg Q$ by singular symplectic reduction at $\mm = 0$,
$$
\ctg Q/\!/G := J^{-1}(0)/G\,.
$$
That is, it is the set of orbits of the $G$-action on the invariant subset $\mm^{-1}(0) \subset \ctg Q$, endowed with the quotient topology induced from the relative topology on this subset. In gauge theory, the condition $\mm=0$ corresponds to the Gau{\ss} law constraint. It turns out that the action of $G$ on $\mm^{-1}(0)$ has the same orbit types as that on $Q$. By definition, the orbit type strata of $\ctg Q/\!/G$ are the connected components of the orbit type subsets of $\ctg Q/\!/G$. They are called strata, because they provide a stratification \cite{PflaumBook} of $\ctg Q/\!/G$ \cite{SjamaarLerman,OrtegaRatiu}. By the theory of singular symplectic reduction, the orbit type strata of $\ctg Q/\!/G$ are endowed with symplectic manifold structures yielding a stratified symplectic space. As $\mm$ is linear on the fibers of $\ctg Q$ and hence $\mm^{-1}(0)$ contains the zero section of $\ctg Q$, the bundle projection $\pi : \ctg Q \to Q$ induces a surjective map 
$\ctg Q/\!/G \to Q/G$.
This map need not preserve the orbit type though.




\section{Fedosov deformation quantization of $\ctg\gp^N$}
\label{sec:fedosov}


We carry out Fedosov deformation quantization with respect to the Levi-Civita
connection of the Killing metric on $G^N$. The subsequent presentation rests
on the results of \cite{BNW1}. For our purposes, we have to discuss some points
in more detail. In particular, we present an explicit formula for the lift of
the Levi-Civita connection to $\ctg G^N$ and we calculate the bidifferential
operators in the corresponding Fedosov star product explicitly.


\subsection{Notation and conventions}


First, we have to develop the necessary calculus on $G^N$ and $\ctg G^N$. Given $k$-vector fields $X_1 , \dots , X_N$ on $\gp$, we can define a $k$-vector field $\ul X = (X_1 , \dots , X_N)$ on $\gp^N$ by 
$$
\ul X_{\ul a} = \big ((X_1)_{a_1} , \dots, (X_N)_{a_N}\big)
 \,,\quad 
\ul a \in \gp^N\,.
$$
By analogy, given $k$-forms $\xi_1 , \dots , \xi_N$ on $\gp$, we can define a $k$-form $\ul\xi = (\xi_1 , \dots , \xi_N)$ on $\gp^N$ by
$$
\ul\xi_{\ul a} = \big ((\xi_1)_{a_1} , \dots, (\xi_N)_{a_N}\big)
 \,,\quad 
\ul a \in \gp^N\,.
$$
 Evaluation of $\ul \xi$ on the $k$-vector field $\ul X$ then yields
\beq\label{G-Eval}
\ul \xi(\ul X) = \sum_{i=1}^N \xi_i(X_i) \in C^\infty(\gp^N)\,.
\eeq
All the vector fields and differential forms we will meet are of this specific type. For example, the left-invariant vector fields on $\gp^N$ are given by $\ul X \in \la^N$ and the left-invariant 1-forms by $\ul\xi \in \la^\ast{}^N$. Clearly,
$$
[\ul X,\ul Y] = \big([X_1,Y_1] , \dots , [X_N,Y_N]\big)\,.
$$
We will identify $\ctg \gp^N \cong \gp^N \times \la^\ast{}^N$ via  the global trivialization by left translation, 
\beq\label{G-Trvis}
\gp^N \times \la^\ast{}^N \to \ctg \gp^N
 \,,\qquad
(\ul a,\ul\alpha) \mapsto \ul\alpha_{\ul a}\,.
\eeq
Accordingly, 
\beq\label{G-Spl-Fa}
\tg_{(\ul a,\ul\alpha)} (\gp^N \times \la^\ast{}^N) 
= 
(\tg_{\ul a} \gp	^N) \oplus (\tg_{\ul\alpha} \la^\ast{}^N)
= 
\la^N \oplus \la^\ast{}^N
\,.
\eeq
Thus, we arrive at the identification
$$
\gp^N \times \la^\ast{}^N \times \la^N \times \la^\ast{}^N 
\cong
\tg (\ctg \gp^N)
\,,
$$
where the tuple $(\ul a , \ul\alpha , \ul X , \ul\xi)$ corresponds to the element of $\tg (\ctg \gp^N)$ which under \eqref{G-Trvis} is represented by the curve 
\beq\label{G-TaV-Ku}
t \mapsto \big(\ul a \exp(t\ul X) , \ul\alpha + t \ul\xi\big)
\,.
\eeq
For $\ul X \in \la^N$ and $\ul\xi \in \la^\ast{}^N$, let $(\ul X,\ul\xi)$ denote the vector field on $\ctg\gp^N$ defined by 
\beq\label{G-D-StdVF}
(\ul X , \ul\xi)_{(\ul a,\ul\alpha)} = (\ul a,\ul\alpha,\ul X,\ul \xi)
 \,,\qquad
(\ul a,\ul\alpha) \in \ctg\gp^N\,.
\eeq
Vector fields of this type will be referred to as standard vector fields on $\ctg\gp^N$. The flow of standard vector fields is given by
\beq\label{G-StdVF-Flow}
\Phi^{(\ul X,\ul\xi)}_t\big((\ul a,\ul\alpha)\big)
 =
\big(\ul a \exp(t\ul X) , \ul\alpha + t \ul\xi\big)
\eeq
and their commutator reads
\beq\label{G-StdVF-Ktr}
\big[(\ul X,\ul \xi) , (\ul Y,\ul \upsilon)\big]
 =
\big([\ul X,\ul Y] , 0\big)
\,.
\eeq
Correspondingly, elements of $\ctg (\ctg\gp^N)$ will be written in the form $(\ul a,\ul\alpha,\ul\xi,\ul X)$, where $(\ul\xi,\ul X)$ represents a cotangent vector at the point $(\ul a,\ul\alpha)$ via \eqref{G-Trvis} and the identification
$$
\ctg_{(\ul a,\ul\alpha)}\big(\gp^N \times \la^\ast{}^N\big) 
= 
\big(\ctg_{\ul a} \gp^N\big) \oplus \big(\ctg_{\ul \alpha} \la^\ast{}^N\big)
=
\la^\ast{}^N \oplus \la^N\,.
$$
In this description, the natural pairing between tangent vectors and cotangent vectors is given by 
$$
\langle (\ul a,\ul\alpha,\ul\xi,\ul X) , (\ul a,\ul\alpha,\ul Y,\ul \yi) \rangle
 =
\langle \ul\xi , \ul Y \rangle + \langle \ul\yi , \ul X \rangle\,.
$$
For $\ul\xi \in \la^\ast{}^N$ and $\ul X \in \la^N$, let $(\ul\xi,\ul X)$ denote the $1$-form on $\ctg\gp^N$ defined by 
\beq
\label{dec-ctg-ctgQ}
(\ul\xi , \ul X)_{(\ul a,\ul\alpha)} = (\ul a,\ul\alpha,\ul\xi,\ul X)
 \,,\quad
(\ul a,\ul\alpha) \in \ctg\gp^N\,.
\eeq
$1$-forms of this type will be referred to as standard $1$-forms on $\ctg\gp^N$. Recall that every vector field $Z$ on $\gp^N$ defines a tautological smooth function $\tilde{Z}$ on $\ctg \gp^N$ by 
\beq\label{G-D-tautFn}
\tilde Z(\eta) := \eta(Z_{\ul a})
 \,,\quad
\eta \in \ctg_{\ul a}\gp^N\,.
\eeq
For left-invariant vector fields $\ul X \in \la^N$,
\beq\label{G-LIVF-Fn}
\tilde{\ul X}(\ul a,\ul\alpha) = \ul\alpha(\ul X)\,.
\eeq
Together with \eqref{G-TaV-Ku}, this yields
$$
\left\langle \mr d \tilde{\ul X} , (\ul a,\ul\alpha,\ul Y,\ul\yi) \right\rangle
=
\left\langle \ul\yi , \ul X \right\rangle
$$
and hence, in the sense of \eqref{dec-ctg-ctgQ},
\beq\label{G-dtautFn}
\mr d \tilde{\ul X} = (0 , \ul X)
\,.
\eeq
Recall further the coadjoint representations $\Ad^\ast$ of $G$ and $\ad^\ast$ of $\la$, defined by 
$$
\langle \Ad^\ast (a) \xi, Y\rangle  = \langle \xi, \Ad(a^{-1})Y\rangle \, \quad 
\,,\qquad
\langle \ad^\ast (X) \xi, Y\rangle = - \langle \xi, [X, Y]\rangle \, \quad  
$$
for all $a \in G$, $X,Y \in \la$ and $\xi \in \la^\ast$.

Finally, for concrete calculations we will occasionally need to fix a basis $\{E_1 , \dots , E_d\}$
in $\la$. We then agree on the following conventions. The corresponding dual basis will always be
denoted $\{\ve^1 , \dots , \ve^d\}$. Let $\mc I := \{1 , \dots , N\} \times \{1 , \dots , d\}$. For
$I = (n , i) \in \mc I$ we write 
\beq\label{G-frame}
\ul E_I := (0, \dots , 0 , E_i , 0 , \dots , 0)
 \,,\qquad
\ul \ve^I := (0, \dots , 0 , \ve^i , 0 , \dots , 0)
\,,
\eeq
with the nonzero entry at the $n$-th place. The families $\{\ul E_I : I \in \mc I\}$ and $\{\ul\ve^I : I \in \mc I\}$ are then dual bases in $\la^N$ and $\la^\ast{}^N$, respectively, and thus provide dual global frames in $\tg G^N$ and
$\ctg G^N$, respectively. Let $C_{ij}^k$ denote the structure constants of $\la$ with respect to
the basis $(E_1 , \dots , E_d)$. Then, the structure constants $C_{IJ}^K$ of $\la^N$ with respect to the basis $\{\ul E_I\}$ are given by
\beq\label{G-D-CIJK}
C_{(n,i),(m,j)}^{(l,k)}
:= 
\begin{cases}
C_{ij}^k & |\abs n=m=l\,,
\\
0 & |\abs \text{otherwise.}
\end{cases}
\eeq


\subsection{Symplectic structure and Poisson structure}
\label{sec:SympStrucCotangBdl}


Let us denote the tautological $1$-form of $\ctg G$ by $\theta$, the corresponding standard symplectic form by $\omega = \d \theta$ and the corresponding standard Poisson tensor by $\pt$. Then, 
\beq\label{G-theta-N}
\ul\theta = (\theta , \stackrel{N}{\dots}, \theta)
 \,,\qquad
\ul\omega = \mr d \ul\theta = (\omega , \stackrel{N}{\dots}, \omega)
 \,,\qquad
\ul\Lambda = (\Lambda , \stackrel{N}{\dots}, \Lambda)
\eeq
represent, respectively, the tautological 1-form, the symplectic form and the Poisson tensor of $\ctg G^N$. As usual, the Hamiltonian vector field generated by a function $f \in C^\infty(\ctg\gp^N)$ will be denoted by $X_f$. We choose the convention $X_f \lrcorner\,  \ul \omega = - \d f$.

We derive formulae for the symplectic structure and the Poisson structure of $\ctg\gp^N$ under the identification \eqref{G-Trvis}. For $f \in C^\infty(\ctg\gp^N)$, define partial differentials 
$$
\mr d_\gp f : \ctg\gp^N \to \la^\ast{}^N 
 \,,\qquad
\mr d_{\la^\ast} f : \ctg\gp^N \to \la^N 
$$
by
\beq\label{G-dGf}
\mr d f \big((\ul a,\ul\alpha,\ul X,\ul\xi)\big)
 =
\langle \mr d_\gp f(\ul a,\ul\alpha) , \ul X \rangle
 + 
\langle \mr d_{\la^\ast} f(\ul a,\ul\alpha) , \ul \xi \rangle
\,.
\eeq

\ble\label{L-Fml-splStr}

For all $(\ul a,\ul\alpha) \in \ctg\gp^N$, all standard vector fields $(\ul X,\ul\xi)$ and all functions $f , g$ on $\ctg\gp^N$, one has
 \al{\label{G-theta}
\ul\theta_{(\ul a,\ul\alpha)}\big((\ul X,\ul\xi)\big)
 & = 
\ul\alpha(\ul X)\,,
\\ \label{G-omega}
\ul\omega_{(\ul a,\ul\alpha)}\big((\ul X,\ul\xi) , (\ul Y,\ul\zeta)\big)
 & = 
\ul\xi(\ul Y) - \ul\zeta(\ul X) - \ul\alpha([\ul X,\ul Y])\,,
\\ \label{G-HaVF}
(X_f)_{(\ul a,\ul\alpha)}
 & = 
 \big(
\ul a,\ul\alpha
 ,
\mr d_{\la^\ast} f(\ul a,\ul\alpha)
 , 
- \ad^\ast(\mr d_{\la^\ast} f(\ul a,\ul\alpha)) \ul\alpha - \mr d_\gp f(\ul a,\ul\alpha)
 \big)\,,
\\ \nonumber
\{f,g\}(\ul a,\ul\alpha)
 & = 
\langle \mr d_\gp g(\ul a,\ul\alpha) , \mr d_{\la^\ast} f(\ul a,\ul\alpha) \rangle
 -
\langle \mr d_\gp f(\ul a,\ul\alpha) , \mr d_{\la^\ast} g(\ul a,\ul\alpha) \rangle
\\ \label{G-PoKl}
 & \hspace{3cm} +
\ul\alpha\big([\mr d_{\la^\ast} f(\ul a,\ul\alpha)\,, \mr d_{\la^\ast} g(\ul a,\ul\alpha)]\big) \ ,
\\ \label{G-ImpAbb}
\mm(\ul a,\ul\alpha) 
 & =
\sum_{i=1}^N \big( \Ad^\ast(a_i) \alpha_i - \alpha_i \big) \ .
 }

\ele

\bpf

\eqref{G-theta} and \eqref{G-omega} follow by straightforward calculation. To prove  \eqref{G-HaVF}, we plug the ansatz $(X_f)_{(\ul a,\ul\alpha)} = (\ul a,\ul\alpha,\ul X,\ul\xi)$ into the equation 
$$
\ul\omega_{(\ul a,\ul\alpha)}\big(X_f , (\ul Y,\ul\zeta)\big)
 = 
- \mr d f\big((\ul Y,\ul\zeta)\big)
$$
with a standard vector field $(\ul Y,\ul\zeta)$. In view of \eqref{G-dGf} and \eqref{G-omega}, this yields
$$
\ul\xi(\ul Y) - \ul\zeta(\ul X) - \ul\alpha([\ul X,\ul Y])
 =
- \langle \mr d_\gp f(\ul a,\ul\alpha) , \ul Y \rangle
- \langle \mr d_{\la^\ast} f(\ul a,\ul\alpha) , \ul \zeta \rangle
$$
for all $\ul Y\in \la^N$ and $\ul\zeta \in \la^\ast{}^N$. Putting $\ul Y = 0$, we read off $\ul X = \mr d_{\la^\ast} f(\ul a,\ul\alpha)$. Putting then $\ul\zeta = 0$, we find $\ul\xi = - \ad^\ast(\mr d_{\la^\ast} f(\ul a,\ul\alpha)) \ul\alpha - \mr d_\gp f(\ul a,\ul\alpha)$.
Formula \eqref{G-PoKl} then follows from $\{f,g\} = \omega(X_f,X_g)$.
To prove \eqref{G-ImpAbb}, we observe that the fundamental vector field on $\gp^N$ generated by $B \in \la$ via the action by diagonal conjugation is given by 
$$
(B_{\gp^N})_{\ul a}
 =
\big(\mr R_{a_1}'B_\II - \mr L_{a_1}'B_\II , \dots , \mr R_{a_N}'B_\II - \mr L_{a_N}'B_\II\big)
\ .
$$
Hence, by left-invariance, 
 \ala{
\langle \mm(\ul a,\ul\alpha),B \rangle
 =
\langle \ul\alpha_{\ul a} , (B_{G^N})_{\ul a}\rangle
 =
\sum\nolimits_i \langle (\alpha_i)_{a_i} , \mr R_{a_i}'B_\II - \mr L_{a_i}'B_\II \rangle
 =
\sum\nolimits_i \langle \alpha_i , \Ad(a_i^{-1}) B - B \rangle \ .
 }
This yields the assertion.
\epf


\subsection{Lift of the Levi-Civita connection}


To derive the Fedosov standard ordered star product with respect to the Levi-Civita connection of the Killing metric on $G^N$, we first have to find a homogeneous and symplectic lift of this connection to $\ctg G^N$. Recall that, given a Riemannian manifold $Q$ with Levi-Civita connection $\lcc$, a torsion-free linear connection $\check\nabla$ on $\ctg Q$ is called

\ben

\item a lift of $\lcc$ if
$
\pi' \circ (\check\nabla_Z W)
=
(\lcc_{\widehat Z} {\widehat W}) \circ \pi
$
for all vector fields $Z$, $W$ on $\ctg Q$ and $\widehat Z$, $\widehat W$ on $Q$ satisfying $\pi' \circ Z = \widehat Z \circ \pi$ and $\pi' \circ W = \widehat W \circ \pi$,

\item symplectic if
$
\check\nabla \omega = 0
$,

\item homogeneous if
$
[\lvf,\check\nabla_U V] - \check\nabla_{[\lvf,U]} V - \check\nabla_U [\lvf V] = 0
$
for all vector fields $U$, $V$ on $\ctg Q$, where $\lvf$ denotes the Liouville vector field.

\een

It turns out that homogeneous symplectic lifts are not unique, see e.g.\ \cite{BCGRS}. As observed in \cite{BNW1}, one option to make the lift unique is to impose the additional condition that
$$
\omega\big(U_1,\check R(V,U_2)U_3+\check R(V,U_3)U_2\big)
+
(\text{cyclic permutations of } U_1, U_2, U_3)
=
0
$$
for all vector fields $U_i$, $V$ on $\ctg Q$, where $\check R$ denotes the curvature tensor of $\check\nabla$, viewed as a $2$-form on $\ctg Q$ with values in the $1,1$-tensor fields on $\ctg Q$. Let us refer to this connection as the BNW lift of $\lcc$ and let us denote it by $\bnw$. To write it down, we need the following lifting operations. First, $\lcc$ defines a horizontal lifting operator $\mr h$ by mapping every vector field $Z$ on $Q$ to a vector field $\mr h Z$ on $\ctg Q$, its horizontal lift, which is uniquely determined by the conditions 
\beq\label{G-hor}
\pi' \circ (\mr h Z) = Z \circ \pi
\,,\qquad
\Klcc \circ (\mr h Z) = 0
\,,
\eeq
where $\Klcc : \tg(\ctg Q) \to \ctg Q$ is the connection mapping of $\lcc$. Second, the structure of the cotangent bundle defines a (metric-independent) vertical lifting operator mapping every $1$-form $\zeta$ on $Q$ to the vertical vector field $\mr v\zeta$ on $\ctg Q$ induced by the complete flow 
$$
\ctg Q \times \RR \to \ctg Q
\,,\qquad
(p,t) \mapsto p + t \zeta\big(\pi(p)\big)
\,.
$$
Third, the lift of $1$-forms and the operation sending vector fields $Z$ on $Q$ to their tautological functions $\tilde Z$ on $\ctg Q$ combine to a lifting operation which turns $1,1$-tensor fields on $Q$ into vertical vector fields on $\ctg Q$, $T \mapsto \mr v T$. By definition, for $1,1$-tensor fields of the form $T = Z \otimes \zeta$ with a vector field $Z$ and a $1$-form $\zeta$,
$$
\mr v(Z \otimes \zeta) = \tilde Z (\mr v \zeta)
\,.
$$
According to \cite{BNW1}, the BNW lift of $\lcc$ to $\ctg Q$ is given by the formulae
\ala{
\bnw_{\mr v \zeta} (\mr v \beta)
 & :=
0
\,,\qquad
\bnw_{\mr v \zeta} (\mr h Z)
 :=
0
\,,\qquad
\bnw_{\mr h Z}\left(\mr v \zeta\right)
 :=
\mr v\left(\lcc_Z \zeta\right)
\,,
\\
\bnw_{\mr h Z}(\mr h W)
 & :=
\mr h \left(\lcc_Z W\right)
 +
\mr v
 \left(
\frac 1 2 \Rlcc(Z,W)\,\cdot
 + 
\frac 1 6 \Rlcc(Z,\cdot)W
 +
\frac 1 6 \Rlcc(W,\cdot)Z
 \right)
 }
holding true for all vector fields $Z$, $W$ on $Q$ and $1$-forms $\zeta$, $\beta$ on $Q$. Here, $\Rlcc$ denotes the Riemann curvature tensor of $\lcc$, and the corresponding terms are $1,1$-tensor fieldes on $Q$, viewed as mappings of vector fields, with the dot representing the variable.

\begin{Remark}\label{Bem-BNW}
  The BNW lift can be obtained by standard symplectification, see e.g.\ \cite{BCGRS}, of the complete lift of $\lcc$ to $\ctg Q$
  in the sense of Yano and Patterson \cite{YP1}. This was observed in \cite{Plebanski} and has also been proved in
  \cite{RudSchmiConnection}.
\end{Remark}

Let us determine $\bnw$ for $Q=G^N$ endowed with the Killing metric. It suffices to do this for $Z=\ul X$ and $W=\ul Y$ with $\ul X$ and $\ul Y$ being left-invariant vector fields on $G^N$ and for $\zeta=\ul\xi$ and $\beta=\ul\upsilon$ with $\ul\xi$ and $\ul\upsilon$ being left-invariant $1$-forms on $G^N$. Recall that for such fields, the Levi-Civita connection is given by 
\beq\label{G-lcc}
\lcc_{\ul X} \ul Y = \frac 1 2 [\ul X , \ul Y]
\,,\qquad
\lcc_{\ul X} \ul\xi = \frac 1 2 \ad^\ast(\ul X) \ul\xi
\,.
\eeq
As a preparation, we derive the lifting operators. Clearly, the vertical lift of a left-invariant $1$-form $\ul\xi$ on $G^N$ is given by 
\beq\label{G-lift-1Fm}
(\mr v \ul\xi)_{(\ul a,\ul\alpha)}
= 
\left( \ul a , \ul \alpha , 0 , \ul \xi \right)
\,.
\eeq
To find the horizontal lifting operator $\mr h$, we have to compute the connection mapping $\Klcc$. We use that 
$$
\Klcc(\zeta' Z) = \lcc_Z \zeta
$$
for any vector field $Z$ and any $1$-form $\zeta$ on $G^N$ \cite[Prop.\ 1.5.6]{BuchII}, and that $\Klcc$ acts on $\tg_{(\ul a,\ul \alpha)}(\ctg_{\ul a} G^N)$ as the natural identification with $\ctg_{\ul a} G^N$. We find
\ala{
\Klcc(\ul a,\ul \alpha,\ul X,\ul \xi)
& = 
\Klcc(\ul a,\ul \alpha,\ul X,0) + \Klcc(\ul a,\ul \alpha,0,\ul \xi)
\\
& = 
\Klcc(\ul \alpha' (\ul a, \ul X) ) + (\ul a,\ul \xi)
\\
& = 
\left(\ul a,\ul \xi + \frac 1 2 \ad^\ast(\ul X)\ul \alpha\right)
\,.
}
Hence, from \eqref{G-hor} we read off that for left-invariant vector fields $Z=\ul X$, the horizontal lift is given by
\beq\label{G-lift-VF}
(\mr h \ul X)_{(\ul a,\ul\alpha)} 
= 
\left(\ul a , \ul \alpha , \ul X , - \frac 1 2 \ad^\ast(\ul X) \ul \alpha\right)
\,.
\eeq

\bsz\label{S-Lift}

For $\ul a \in G^N$, $\ul\alpha \in \mf g^\ast{}^N$ and $\ul X \in \mf g^N$, $\ul\xi \in \mf g^\ast{}^N$,
\al{
\label{G-lift-hv}
\left(\bnw_{\mr h \ul X} \mr v \ul \xi\right)_{(\ul a,\ul\alpha)}
& =
\left(
\ul a
,
\ul\alpha
,
0
,
\frac 1 2 \ad^\ast(\ul X) \, \ul\xi
\right)
,
\\
\label{G-lift-hh}
\left(\bnw_{\mr h \ul X} \mr h \ul Y\right)_{(\ul a,\ul\alpha)}
& =
\left(
\ul a
,
\ul\alpha
,
\frac 1 2 [\ul X,\ul Y]
,
\frac 1 6 \ad^\ast(\ul Y) \ad^\ast(\ul X) \, \ul\alpha
-
\frac{1}{12} \ad^\ast(\ul Y) \ad^\ast(\ul X) \, \ul\alpha
\right)
\,.
}

\esz

\bbw

Eq.\ \eqref{G-lift-hv} is a direct consequence of \eqref{G-lcc} and \eqref{G-lift-1Fm}, because $\ad^\ast(\ul X)\,\ul\xi$ is a left-invariant $1$-form on $G^N$, so that \eqref{G-lift-1Fm} applies. To prove Eq.\ \eqref{G-lift-hh}, it remains to calculate the vertical lifts of the curvature terms. For that purpose, we observe that 
$
\mr v(T) \, \tilde Z = \big(T(Z)\big)^\sim
$
for all $1,1$-tensor fields $T$ and all vector fields $Z$ on a manifold $Q$ and that 
$$
\Rlcc(\ul X,\ul Y) = - \frac 1 4 \ad\big([\ul X,\ul Y]\big)
$$
for all left-invariant vector fields $\ul X$, $\ul Y$ on $G^N$. Using this, we check that
\al{\label{G-vRXY.}
\mr v\big(\Rlcc(\ul X,\ul Y)\big)_{(\ul a,\ul\alpha)}
& =
\left(\ul a,\ul\alpha,0,\frac 1 4 \ad^\ast\big([\ul X,\ul Y]\big) \, \ul\alpha \right)
,
\\
\label{G-vRX.Y}
\mr v\big(\Rlcc(\ul X,\cdot)\ul Y\big)_{(\ul a,\ul\alpha)}
& =
\left(\ul a,\ul\alpha,0,\frac 1 4 \ad^\ast(\ul X) \ad^\ast(\ul Y) \, \ul\alpha \right)
.
}
Now, \eqref{G-lift-hh} follows by plugging \eqref{G-lcc}, \eqref{G-lift-VF}, \eqref{G-vRXY.} and \eqref{G-vRX.Y} into the defining formula for $\bnw_{\mr h \ul X} (\mr h \ul Y)$.
\ebw

Another useful formula can be obtained by calculating $\bnw$ for standard vector fields.

\bsz
\label{S-Lift-SVF}

Let $\ul X$, $\ul Y \in \mf g^N$ and $\ul \xi$, $\ul \upsilon \in \mf g^\ast{}^N$. Then, for all $\ul a \in G^N$ and $\ul\alpha \in \mf g^\ast{}^N$,
\al{\nonumber
\left(\bnw_{(\ul X,\ul \xi)}(\ul Y,\ul \upsilon)\right)_{(\ul a,\ul \alpha)}
 & =
 \left(
\ul a \, , \, \ul \alpha \, , \, \frac 1 2 [\ul X , \ul Y]
 \, , \, 
\frac 1 2 \ad^\ast(\ul X) \ul\upsilon + \frac 1 2 \ad^\ast(\ul Y) \ul\xi
 \right.
\\ \label{G-Zh-Lift}
 & \hspace{2cm}
 \left.
 +
\frac 1 6 \big(\ad^\ast(\ul X) \ad^\ast(\ul Y) + \ad^\ast(\ul Y) \ad^\ast(\ul X)\big) \ul \alpha
 \right)
 .
 }

\esz

\bbw

Choose a basis $\{\ul \ve^I\}$ in $\mf g^\ast{}^N$, expand $\ul \alpha = \ul \alpha_I \ul \ve^I$ (summation convention) and define coefficient functions 
\beq\label{G-star-s-p}
p_I : \ctg G^N \to \RR
 \,,\qquad
p_I(\ul a , \ul \alpha) := \ul \alpha_I
\,.
\eeq
According to \eqref{G-lift-VF},
$$
(\ul X , \ul \xi)
 =
\mr h \ul X
 + 
\mr v \ul \xi
 + 
\frac 1 2  \, p_I \, \mr v\left(\ad^\ast(\ul X) \ul \ve^I\right)
\,.
$$
Plugging this decomposition for $(\ul X , \ul \xi)$ and $(\ul Y , \ul \upsilon)$ into $\bnw_{(\ul X,\ul \xi)}(\ul Y,\ul \upsilon)$, we find
 \ala{
\bnw_{(\ul X,\ul \xi)}(\ul Y , \ul \upsilon)
 & =
\bnw_{\mr h \ul X} (\mr h \ul Y)
 + 
\bnw_{\mr h \ul X} (\mr v \ul \upsilon)
 + 
\frac 1 2 \, \big((\ul X,\ul \xi) p_I\big) \, \mr v\left(\ad^\ast(\ul Y) \ul \ve^I\right)
\\
 & \hspace{5.5cm}
 +
\frac 1 2 \, p_I \, 
\bnw_{\mr h \ul X} 
 \left(
\mr v\left(\ad^\ast(\ul Y) \ul \ve^I\right)
 \right)
\,.
 }
Using the formulae of Proposition \rref{S-Lift} and 
$
(\ul X , \ul \xi)_{(\ul a,\ul \alpha)} p_I
 = 
\ul \xi_I
\,,
$
we obtain the assertion.
\ebw

For later purposes, let us prove that the BNW lift of the Levi-Civita connection defined by the Killing metric on $G^N$ is $G$-invariant. The lifted $G$-action on $\ctg G^N \cong G^N \times \mf g^\ast{}^N$ reads
\beq
\label{G-Trvis-G}
(g,(\ul a,\ul\alpha)) \mapsto \Psi_g(\ul a, \ul \alpha) = (g \ul a g^{-1}, \Ad^*(g) \ul \alpha) \,, \quad g \in G
\eeq
and the induced action on $\tg(\ctg \gp^N) \cong (\gp^N \times \mf g^\ast{}^N) \times (\mf g^N \times \mf g^\ast{}^N)$ via the tangent mapping of $\Psi_g$ is given by 
\beq
\label{G-action-t-B}
\Psi'_g(\ul a, \ul \alpha,\ul X,\ul \xi) 
= 
(g \ul a g^{-1}, \Ad^*(g) \ul \alpha, \Ad(g) \ul X, \Ad^*(g)\ul \xi)
\,.
\eeq

\bsz\label{S-nabla-inv}

The BNW lift of the Levi-Civita connection on $G^N$ defined by the Killing metric is $G$-invariant, that is, $(\Psi_g)_\ast \bnw = \bnw$.

\esz

\bbw

It suffices to show that
$$
\bnw_{(\Psi_g)_*(\ul X,\ul \xi)}(\Psi_g)_*(\ul Y,\ul \upsilon)
=
(\Psi_g')_\ast \left(\bnw_{(\ul X,\ul \xi)}(\ul Y,\ul \upsilon)\right)
$$
for all standard vector fields $(\ul X,\ul\xi)$ and $(\ul Y,\ul\upsilon)$ on $\ctg G^N$. Evaluating both sides at a point $(\ul a,\ul\alpha)$ by means of \eqref{G-Zh-Lift} and \eqref{G-action-t-B} and using the equivariance properties
$$
\ad\big((\Psi_g)_* \ul X\big) = (\Psi_g)_* \circ \ad(\ul X) \circ (\Psi_{g}^{-1})_* 
\,,\quad
\ad^*\big((\Psi_g)_* \ul X\big) = (\Psi_{g}^{-1})^* \circ \ad^*(\ul X) \circ (\Psi_{g})^* \, ,
$$
we obtain the assertion by direct inspection. 
\ebw

In the general case, $G$-invariance of $\bnw$ can be obtained by direct inspection of the defining formulae for $\bnw$ using the equivariance of $\mr h$ and $\mr v$. Alternatively, it follows from the geometric interpretation of $\bnw$ provided by Remark \rref{Bem-BNW}.


\subsection{Fedosov star product}
\label{Fedosov-star}


Now, we are prepared to derive the Fedosov star product of standard order type corresponding to the lifted connection $\bnw$. First, let us briefly recall the Fedosov construction \cite{F}. The starting point is the formal Weyl algebra bundle $W(M)$ over the symplectic manifold $M = \ctg G^N$. Recall that $W(M)$ is fiberwise defined as the $\CC[[\lambda]]$-module 
$$
W(M)_p := \left( \prod_{k = 0}^\infty {\rm S}^k (\ctg_p M) \right) [[\lambda]] \, , 
$$   
that is, elements of $W(M)_p$ must be viewed as formal power series in the parameter 
$\lambda$ and as formal series in the symmetric degree of symmetric tensors over $\ctg_p M $.
Let us denote by ${\cal W}(M) := \Gamma^\infty\big(W(M)\big)$ the corresponding space of sections.
%
%
In the sequel, the basic object will be $W(M)$  tensorized with the bundle 
$ \Lambda^\bullet M$ of exterior forms on $M$, that is, 
$W(M) \otimes  \Lambda^\bullet M$.
This bundle  may be endowed (pointwise) with a natural associative 
and supercommutative product
$$
\mu(a \otimes b) := ab := (f \vee g) \otimes (\alpha \wedge \beta) \, , 
$$
for elements $a = f \otimes \alpha$ and $b = g \otimes \beta$.
Its (graded) algebra of sections is the tensor product 
$\Gamma^\infty\big(W(M) \otimes  \Lambda^\bullet M \big) \cong {\cal W}(M) \otimes_{\calC^\infty (M)} \Omega^\bullet (M)$.
We refer to Section 6.4 of \cite{Wald} for further details. In the next step, one deforms $\mu$ by using a
fiberwise Moyal-type product $a \circ_s b$.
We use the standard ordered type. In the case at hand, it is given in terms of the dual global frames $\{\ul E_I : I \in \mc I\}$ in $\tg G^N$ and $\{\ul\ve^I : I \in \mc I\}$ in $\ctg G^N$ by 
\beq
\label{stand-order-M}
a \circ_s b
 := 
\mu \circ {\mr e}^{ \frac{\lambda}{\imi} 
i_s\left(\mr v\ul \ve^I\right)
 \, \otimes \, 
i_s \left(\mr h\ul E_I\right) } 
a \otimes b
\eeq
(summation convention), where $i_s $ means the operation of symmetric insertion and $\mr v$ and $\mr h$ are given by \eqref{G-lift-1Fm} and \eqref{G-lift-VF}. Formula \eqref{stand-order-M} explains how the connection $\bnw$ enters the Fedosov construction. It is easy to see that the product $\circ_s$ does not depend on the choice of frames. Next, we wish to define the star product of standard ordered type for functions on $M$. For that purpose, we denote by 
$$
  \sigma: 
  {\cal W}(M) \otimes_{\calC^\infty (M)} \Omega^\bullet (M)\to C^\infty(M)[[\lambda]] 
$$ 
the canonical projection onto the part of symmetric and antisymmetric degree zero. 
Now, the key idea of the Fedosov construction consists in distinguishing a subalgebra of 
${\cal W} (M)$ such that $\sigma$ restricted to that subalgebra is bijective. 
Then, the associative product $\circ_s $ may be pulled back to $C^\infty(M)[[\lambda]]$ 
via this bijection yielding an associative $\CC[[\lambda]]$-bilinear product. Such a  subalgebra may be obtained as the kernel of a superderivation 
$D: {\cal W} (M) \to {\cal W} (M) \otimes_{\calC^\infty (M)} \Omega^1 (M)$ of antisymmetric degree one fulfilling $D^2 = 0$, called the Fedosov derivation. It is constructed using the BNW lift $\bnw$, see formula (71) in \cite{BNW1} for the standard order Fedosov derivation $D_s$. Associated with $D_s$, for every $f \in C^\infty (M) [[\lambda]]$, there exists a unique 
element $\tau_s (f) \in \ker D_s \cap {\cal W}(M)$, such that
$\sigma (\tau_s (f) ) = f$ and the  mapping $\tau_s :  C^\infty (M) [[\lambda]] \to {\cal W}(M) $ (called the Fedosov Taylor series) is $\CC [[\lambda]]$-linear. According to Theorem 3.3 in \cite{F} (or Theorem 2 in \cite{BNW1}), it can be determined recursively. Now, the Fedosov star product is defined as follows:
\beq
\label{Fed-star-pr}
f \star g := \sigma( \tau_s(f) \circ_s \tau_s (g)) \, , 
\eeq
for any $f,g \in C^\infty(M)[[\lambda]]$. One can derive an explicit formula for $\circ_s$ in the case of a general cotangent bundle, see Theorem 9 in \cite{BNW1}.

Here, we wish to determine this star product explicitly for the case under consideration. For that purpose, we recall that there is a canonical representation of the star product algebra 
$(C^\infty(M) [[\lambda]], \star)$, called the standard order representation:
\beq
\label{stand-ord-rep}
\rho (f) \psi := i^\ast( f \star \pi^\ast \psi ) \, ,
\eeq
for any $f \in C^\infty(M) [[\lambda]]$ and $\psi \in C^\infty(G^N) [[\lambda]]$. Here, 
$i: G^N  \to M = \ctg G^N$ denotes the canonical embedding via the zero section. Now, a key observation is that the calculations may be performed in the representation $\rho$, see \cite[Cor.~10]{BNW1}. More precisely, this corollary states that the restriction of $\rho$ to the subalgebra of smooth complex-valued functions polynomial in the momenta as well as to the subalgebra of formal power series with coefficients in the functions which are analytic in the fiber variables is injective. 
Thus, let us analyze formula \eqref{stand-ord-rep} for these two classes of functions. 

A function $f \in C^\infty(\ctg Q)$ is called fiber-homogeneous if it is of the form
\beq\label{G-D-fapnmFn}
f = \pi^\ast f^{J_1 \dots J_l} p_{J_1} \cdots p_{J_l} \, ,
\eeq
with a symmetric tensor field $f^{J_1 \dots J_l}$ on $G^N$. Here, $p_I$ denote the coefficient functions
with respect to the global frame $( \ul\ve^I)$ in $\ctg G^N$ given by \eqref{G-star-s-p}.

\bsz\label{S-MultFml}

For fiber-homogeneous functions $f$ of degree $l$, one has
\beq\label{G-star-s-rho}
\rho(f)
 = 
\left(\frac{\lambda}{\imi}\right)^l 
f^{J_1 \dots J_l} \, \ul E_{J_1} \cdots \ul E_{J_l}
\eeq
(symmetric operator ordering). 

\esz

\bbw

Let $\psi \in C^\infty(G^N)$ be given. Using \eqref{stand-ord-rep}, \eqref{Fed-star-pr} and the relation
$
\tau_s \circ \pi^\ast = \pi^\ast \circ \tau_0
\,,
$
where $\tau_0$ is the Fedosov-Taylor series with respect to $\lcc$, we obtain
$$
\rho(f) \psi
 =
\sigma \circ i^\ast \big(\tau_s(f) \circ_s \pi^\ast (\tau_0(\psi))\big)
\,.
$$
By \eqref{stand-order-M}, then
 \ala{
\rho(f) \psi
 & =
\sum_{r=0}^\infty \frac{1}{r!} \left(\frac{\lambda}{\imi}\right)^r 
 \left(
\sigma \circ i^\ast 
 \left\{
i_s(\mr v\ul \ve^{I_1}) \dots i_s(\mr v\ul \ve^{I_r}) \tau_s(f)
 \right\} 
 \right)
\\
 & \hspace{4cm}
 \cdot \left(
\sigma \circ i^\ast 
 \Big\{
i_s\left(\mr h\ul E_{I_1}\right) \dots i_s\left(\mr h\ul E_{I_r}\right)
\pi^\ast(\tau_0(\psi))
 \Big\}\right)
\,.
 }
The second factor yields $\sigma\left(i_s(\ul E_{I_1}) \dots i_s(\ul E_{I_1}) \tau_0(\psi)\right)$. By Theorem 4 in \cite{BNW1}, we have
$\tau_0(\psi) = \mr e^{\Dlcc} \psi$,
where $\Dlcc = \ul \ve^I \vee \lcc_{\ul E_I}$. Since $\sigma$ projects onto degree $0$, in $\mr e^{\Dlcc}$ only the term of order $r$ survives. Thus, 
 \ala{
\rho(f) \psi
 & =
\sum_{r=0}^\infty \frac{1}{r!} \left(\frac{\lambda}{\imi}\right)^r 
 \left(
\sigma \circ i^\ast 
 \left\{
i_s(\mr v\ul \ve^{I_1}) \dots i_s(\mr v\ul \ve^{I_r}) \tau_s(f)
 \right\} 
 \right)
 \left(
\ul E_{I_1} \cdots \ul E_{I_r} \psi
 \right) \,,
 }
where we have used that the first factor is symmetric under permutation of indices. In the first factor, we use $\sigma \circ i^\ast = i^\ast \circ \sigma$. According to \cite[Lem.~7]{BNW1}, given $f$ and $r$, there exist local sections $\vp_I$ such that the term of order $r$ of $\tau_s(f)$ can locally be written as
$$
\tau_s(f)^{(r)} = D^{(r)} f + (\ul\ve^I,0) \vee \vp_I
\,,
$$
where $D = (\ul \ve^I,0) \vee \bnw_{(\ul E_I,0)} + (0,\ul E_I) \vee \bnw_{(0,\ul \ve^I)}$. Using this and
$
\mr v\ul \ve^I = (0,\ul\ve^I) = \frac{\partial}{\partial p_I}
\,,
$
for the first factor we obtain 
$$
i^\ast
 \left(
\frac{\partial}{\partial p_{I_1}} \cdots \frac{\partial}{\partial p_{I_r}} f
 \right)
\,.
$$
In view of \eqref{G-D-fapnmFn}, this trivially vanishes for $r > l$. It vanishes for $r < l$, too, because $i^\ast p_I = 0$. Thus, the first factor yields
$
\delta_{rl} \, l! \, f^{I_1 \dots I_l}
\,.
$
This proves \eqref{G-star-s-rho}. 
\ebw

We immediately read off the following special case.

\bfg
\label{st-rep-lin}

For a function $f$ which is linear in the momenta,
\beq
\label{st-rep-lin-1}
\rho(f) \psi = \frac{\lambda}{\mr i} {\ul X} (\psi), 
\eeq
with $ \ul X$ defined by $f ( \ul a, \alpha) = {\ul \alpha} (\ul X)$. 
\qed

\efg

Following \cite{BNW1}, we first derive a formula for the star product $\star$ of  exponentials of tautological functions of left-invariant vector fields on $G^N$ and then, using this formula, we extend $\star$ to arbitrary functions on $\ctg G^N$.

\ble[Bordemann, Neumaier, Waldmann {\cite[Sec.~8, Lem.~10]{BNW1}}]
\label{sstarprod}

For functions of the form $\efnwt{\ul X}$ with $\ul X$ being a left-invariant vector field on $G^N$, the standard ordered star product is given by
\beq\label{G-star-s-e}
\efnwt{\ul X} \star \efnwt{\ul Y}
=
\efnsim{
\frac{\imi}{\lambda} H(\frac{\lambda}{\imi} \ul X , \frac{\lambda}{\imi}\ul Y)
}
\,,
\eeq
where $H$ denotes the Baker-Campbell-Hausdorff series. 

\ele

\bbw

By \eqref{st-rep-lin-1}, for all $\psi \in C^\infty (G^N)$, we have 
\beq
\label{rho-e-X}
\rho\left(\efnwt{\ul X}\right) \psi 
= 
\exp \left( \circ \frac{\lambda}{\mr i} \ul X \right) \psi
\,,
\eeq
where $\circ $ denotes the composition of vector fields viewed as differential operators on $C^\infty (G)$. Using the representation property and \eqref{rho-e-X}, we calculate
$$
\rho
\left(\efnwt{\ul X} \star \efnwt{\ul Y}\right) 
\psi 
= 
\rho 
\left(
\efnsim{
\frac{\mr i}{\lambda} 
H\left(\frac{\lambda}{\mr i} \ul X, \frac{\lambda}{\mr i} \ul Y \right)
}
\right)
\psi
\,.
$$
In view of Corollary 10 in \cite{BNW1}, this yields the assertion.
\ebw

Define operators
$$
B_m : C^\infty(\ctg G^N) \times C^\infty(\ctg G^N) \to C^\infty(\ctg G^N)
$$
by
\beq
\label{G-star-s-M}
\efnwt{\ul X} \star \efnwt{\ul Y}
 =
\sum_{m=0}^\infty 
\left(\frac{\lambda}{\imi}\right)^m 
B_m\left(\efnwt{\ul X} , \efnwt{\ul Y}\right)
\,.
\eeq
Explicit expressions for $B_m$ will be derived below. 

Now, we can extend formula \eqref{G-star-s-e} to arbitrary functions on $\ctg G^N$.

\bsz[Bordemann, Neumaier, Waldmann {\cite[Sec.~8, Prop.~11]{BNW1}}]\label{S-star-s}

For $f,g \in C^\infty(\ctg G^N)$, one has
\beq\label{G-star-s}
f \star g
 =
\sum_{m=0}^\infty 
\left(\frac{\lambda}{\imi}\right)^m
\sum_{n=0}^m
\frac{1}{n!}
B_{m-n}\left((0,\ul \ve^{J_1}) \cdots (0,\ul \ve^{J_n}) f
 , 
(\ul E_{J_1},0) \cdots (\ul E_{J_n},0) g\right)
\,.
\eeq

\esz

\bbw

We follow the proof of Proposition 11 in loc.~cit. It suffices to check \eqref{G-star-s} for fiber-homogeneous functions $f$ and $g$ of degree $l$ and $k$, respectively. We show that \eqref{G-star-s} holds under application of $\rho$. Since both sides of formula \eqref{G-star-s} belong to the subspace generated by the fiber-homogeneous functions, this formula then follows from \cite[Cor.~10]{BNW1}. By the representation property and by \eqref{G-star-s-rho},
$$
\rho(f \star g)
 =
\rho(f) \rho(g)
 =
\left(\frac{\lambda}{\imi}\right)^{l+k} 
\left(f^{J_1 \dots J_l} \, \ul E_{J_1} \cdots \ul E_{J_l}\right)
\left(g^{I_1 \dots I_k} \, \ul E_{I_1} \cdots \ul E_{I_k}\right)
\,.
$$
Since $f^{J_1 \dots J_l}$ is symmetric under permutation of indices, we can apply the Leibniz rule
to rewrite the right hand side as
$$
\left(\frac{\lambda}{\imi}\right)^{l+k} 
\sum_{n=0}^l
{l \choose n}
f^{J_1 \dots J_l} 
\left(\ul E_{J_1} \cdots \ul E_{J_n} g^{I_1 \dots I_k}\right)
\ul E_{J_{n+1}} \cdots \ul E_{J_l} \ul E_{I_1} \cdots \ul E_{I_k}
\,.
$$
Using the symmetry of $f^{J_1 \dots J_l}$ and \eqref{G-star-s-rho}, we can replace $\ul E_{J_{n+1}} \cdots \ul E_{J_l}$ by
$$
\frac{1}{(l-n)!} \sum_{\sigma \in \mr S_{l-n}} 
\ul E_{J_{n+\sigma(1)}} \cdots \ul E_{J_{n+\sigma(l-n)}}
 =
\left(\frac{\lambda}{\imi}\right)^{n-l} \rho\left(p_{J_{n+1}} \cdots p_{J_l}\right)
\,. 
$$
By analogy, we can replace $\ul E_{I_1} \cdots \ul E_{I_k}$ by
$\left(\frac{\lambda}{\imi}\right)^{-k} \rho\left(p_{I_1} \cdots p_{I_k}\right)$.
Hence,
\ala{
\rho&(f \star g) =
\sum_{n=0}^l
\left(\frac{\lambda}{\imi}\right)^n 
{l \choose n}
f^{J_1 \dots J_l} 
\left(\ul E_{J_1} \cdots \ul E_{J_n} g^{I_1 \dots I_k}\right)
\rho\left(p_{J_{n+1}} \cdots p_{J_l} \star p_{I_1} \cdots p_{I_k}\right)
\,.
 }
Since $p_{J_{n+1}} \cdots p_{J_l}$ and $p_{I_1} \cdots p_{I_k}$ are invariant under the point transformations generated by left translations on $G^N$, we can apply \eqref{G-star-s-M} to get
$$
( p_{J_{n+1}} \cdots p_{J_l}) \star (p_{I_1} \cdots p_{I_k})
 =
\sum_{r=0}^\infty 
\left(\frac{\lambda}{\imi}\right)^r 
B_r\left(p_{J_{n+1}} \cdots p_{J_l} , p_{I_1} \cdots p_{I_k}\right)
\,.
$$
Since for functions $\vp$ on $G^N$ and $h$ on $\ctg G^N$ we have
$
\vp \rho(h) = \rho\big((\pi^\ast \vp) h\big)
\,,
$
and since the differential operators $B_r$ vanish on functions of the form $\pi^\ast \vp$, we obtain
 \ala{
\rho(f \star g)
 & =
\sum_{r=0}^\infty \sum_{n=0}^l
\left(\frac{\lambda}{\imi}\right)^{n+r} 
{l \choose n}
\rho
 \left(
B_r
 \left(
\left(\pi^\ast f^{J_1 \dots J_l}\right) p_{J_{n+1}} \cdots p_{J_l} 
\,,\,
 \right.
 \right.
\\
 & \hspace{5cm}
 \left.
 \left.
\pi^\ast \left(\ul E_{J_1} \cdots \ul E_{J_n} g^{I_1 \dots I_k}\right) 
p_{I_1} \cdots p_{I_k}
 \right)
 \right)
\,.
 }
The second argument of $B_r$ can be rewritten as
$$
 \left(
(\ul E_{J_1},0) \cdots (\ul E_{J_n},0) \left(\pi^\ast g^{I_1 \dots I_k}\right) 
 \right) 
p_{I_1} \cdots p_{I_k}
 =
(\ul E_{J_1},0) \cdots (\ul E_{J_n},0) g
\,,
$$
because $(\ul E_J,0) p_I = 0$. The first argument can be rewritten as
$$
\frac{(l-n)!}{l!} 
 \,
\frac{\partial}{\partial p_{J_1}} \cdots \frac{\partial}{\partial p_{J_n}} f
 =
\frac{(l-n)!}{l!}
 \,
(0,\ul \ve^{J_1}) \cdots (0,\ul \ve^{J_n}) f
\,.
$$
Thus,
$$
\rho(f \star g)
=
\rho
 \left(
\sum_{r=0}^\infty \sum_{n=0}^l
\left(\frac{\lambda}{\imi}\right)^{n+r}
\frac{1}{n!}
B_r
 \left(
(0,\ul \ve^{J_1}) \cdots (0,\ul \ve^{J_n}) f
 ,
(\ul E_{J_1},0) \cdots (\ul E_{J_n},0) g
 \right)
 \right)
.
$$
The summation over $n$ can be extended to $\infty$, because $(0,\ul \ve^{J_1}) \cdots (0,\ul \ve^{J_n}) f = 0$ for $n > l$. Finally, we replace the summation variable $r$ by $m = r+n$. Then,
$$
\rho(f \star g)
=
\rho
 \left(
\sum_{m=0}^\infty \sum_{n=0}^m
\left(\frac{\lambda}{\imi}\right)^m
\frac{1}{n!}
B_{m-n}
 \left(
(0,\ul \ve^{J_1}) \cdots (0,\ul \ve^{J_n}) f
 ,
(\ul E_{J_1},0) \cdots (\ul E_{J_n},0) g
 \right)
 \right)
.
$$
This proves \eqref{G-star-s}.
 \ebw

We use the Baker-Campbell-Hausdorff formula to determine the bi-differential operators $B_m$ explicitly. In what follows, let $\mu : C^\infty(\ctg G^N) \times C^\infty(\ctg G^N)\to C^\infty(\ctg G^N)$ be the multiplication mapping. Writing $\NN$ for the set of nonnegative integers, we define $\mc K_r$ to be the set of all triples 
$\mb k = (\vec k_1,\vec k_2,k\big) \in \NN^\kappa \times \NN^\kappa \times \NN$, where $\kappa \in \NN$, satisfying the conditions
$$
k_{1i} + k_{2i} > 0 \text{ for all } i = 1 , \dots , \kappa
 \,,\qquad
|\vec k_1| + |\vec k_2| + k = r-1
\,,
$$
where $|\vec k_a| = k_{a1} + \cdots + k_{a\kappa}$. Given $\mb k = (\vec k_1 , \vec k_2 , k) \in \mc K_r$, let $\mc B_{\mb k}$ be the set of all pairs $(\tilde I , \tilde J)$, where
 \ala{
\tilde I
 & = 
 \big(
(I_{1,1} , \dots , I_{1,k_{11}})
 , \dots , 
(I_{\kappa,1} , \dots , I_{\kappa,k_{1\kappa}})
 ,
(I_1 , \dots , I_k)
 \big)
\,,
\\
\tilde J
 & = 
 \big(
(J_{1,1} , \dots , J_{1,k_{21}})
 , \dots , 
(J_{\kappa,1} , \dots , J_{\kappa,k_{2\kappa}})
 ,
J
 \big)
\,,
 }
with $I_{i,j} , I_i , J_{i,j} , J \in \mc I$ belonging to the same copy of $G$, i.e., having coinciding first entries. Given $(\tilde I , \tilde J) \in \mc B_{\mb k}$ and $\ul X$, define functions $\ul E_{\tilde I , \tilde J}$ by 
 \ala{
\ul E_{\tilde I , \tilde J}
 & :=
 \frac{
(-1)^\kappa
 }{
(\kappa + 1)(|\vec k_2| + 1) k_{11}! k_{21}! \cdots k_{1\kappa}! k_{2\kappa}! k!
 }
\\
 & \hspace{2cm}
 \bigg(
\ad\left(\ul E_{I_{1,1}}\right) \cdots \ad\left(\ul E_{I_{1,k_{11}}}\right) 
\ad\left(\ul E_{J_{1,1}}\right) \cdots \ad\left(\ul E_{J_{1,k_{21}}}\right) 
 \cdots 
\\
 & \hspace{3cm} 
 \cdots
\ad\left(\ul E_{I_{\kappa,1}}\right)\cdots\ad\left(\ul E_{I_{\kappa,k_{1\kappa}}}\right) 
\ad\left(\ul E_{J_{\kappa,1}}\right)\cdots\ad\left(\ul E_{J_{\kappa,k_{2\kappa}}}\right) 
\\
 & \hspace{2cm} 
\ad\left(\ul E_{I_1}\right)\cdots\ad\left(\ul E_{I_k}\right) 
\ul E_J
 \bigg)^\sim
.
 }

\bsz
\label{S-Bm}

The bidifferential operators $B_m$ are given by 
 \al{\nonumber
B_m
 & =
{\sum_{\vec n}}^\ast
\frac{B_{\vec n}}{|\vec n|!} \prod_{r=2}^\infty 
\sum_{~~\mb k^1 , \dots , \mb k^{n_r} \in \mc K_r ~~} 
\sum_{(\tilde I^i,\tilde J^i) \in \mc B_{\mb k^i}} 
\ul E_{\tilde I^1 , \tilde J^1} \cdots \ul E_{\tilde I^{n_r} , \tilde J^{n_r}} 
\\ \label{G-Mm-ve}
 & \hspace{6cm}
 \, \mu \circ
 \left(
\left(\ul \ve^{\tilde I^1} \cdots \ul \ve^{\tilde I^{n_r}}\right)
 \otimes 
\left(\ul \ve^{\tilde J^1}  \cdots \ul \ve^{\tilde J^{n_r}}\right)
 \right)
\,,
 }
where $\sum_{\vec n}^\ast$ stands for the sum over all finite sequences $\vec n = (n_2 , \dots , n_s)$ of nonnegative integers satisfying $\sum_{r=2}^s (r-1) n_r = m$ and
 \ala{
B_{\vec n} 
& = 
{|\vec n| \choose n_2} {|\vec n| - n_2 \choose n_3} {|\vec n| - n_2 - n_3 \choose n_4} \cdots
\\
\ul \ve^{\tilde I}
 & =
(0,\ul\ve^{I_{1,1}}) \cdots (0,\ul\ve^{I_{1,k_{11}}})
 \cdots\cdots
(0,\ul\ve^{I_{\kappa,1}}) \cdots (0,\ul\ve^{I_{\kappa,k_{1\kappa}}})
(0,\ul\ve^{I_1}) \cdots (0,\ul\ve^{I_k})
\,,
\\
\ul \ve^{\tilde J}
 & =
(0,\ul\ve^{J_{1,1}}) \cdots (0,\ul\ve^{J_{1,k_{21}}})
 \cdots\cdots
(0,\ul\ve^{J_{\kappa,1}}) \cdots (0,\ul\ve^{J_{\kappa,k_{2\kappa}}})
(0,\ul\ve^J)
\,.
 }

\esz

\bbw

Recall that
\beq\label{G-BCH}
H(\ul X,\ul Y) = \ul X + \ul Y + \sum_{r = 2}^\infty H_r(\ul X,\ul Y)
\,,
\eeq
where the Lie algebra elements $H_r(\ul X,\ul Y)$ are given by
\beq\label{G-BCH-r}
H_r(\ul X,\ul Y)
 =
\sum_{\mb k \in \mc K_r} 
(-1)^\kappa
 \frac{
\ad(\ul X)^{k_{11}} \ad(\ul Y)^{k_{21}} \cdots \ad(\ul X)^{k_{1\kappa}} \ad(\ul Y)^{k_{2\kappa}} \ad(\ul X)^k Y
 }{
(\kappa + 1)(|\vec k_2| + 1) k_{11}! k_{21}! \cdots k_{1\kappa}! k_{2\kappa}! k!
 }
\,.
\eeq
Plugging \eqref{G-BCH} into \eqref{G-star-s-e}, we find
$$
\left(\efnwt{\ul X} \star \efnwt{\ul Y}\right) (\ul a , \ul \alpha)
 =
\mr e^{\ul \alpha(\ul X)} \mr e^{\ul \alpha(\ul Y)}
\mr e^{\sum_{r=2}^\infty (\frac{\lambda}{\imi})^{r-1} \ul \alpha(H_r(\ul X , \ul Y))}
\,.
$$
Expanding the last exponential and using the binomial formula, we obtain 
$$
\left(\efnwt{\ul X} \star \efnwt{\ul Y}\right) (\ul a , \ul \alpha)
 =
\mr e^{\ul \alpha(\ul X)} \mr e^{\ul \alpha(\ul Y)}
\sum_{m=0}^\infty 
\left(\frac{\lambda}{\imi}\right)^m
{\sum_{\vec n}}^\ast
\frac{B_{\vec n}}{|\vec n|!} \prod_{r=2}^\infty 
\big(\ul \alpha(H_r(\ul X , \ul Y))\big)^{n_r}
\,,
$$
with $\sum_{\vec n}^\ast$ and $B_{\vec n}$ given as in the proposition. Comparison with \eqref{G-star-s-M} then yields
\beq\label{G-BCH-Mm-1}
B_m(\efnwt{\ul X},\efnwt{\ul Y})(\ul a,\ul \alpha)
 =
\left(\mr e_{\ul X} \mr e_{\ul Y}\right)(\ul a,\ul \alpha)
{\sum_{\vec n}}^\ast
\frac{B_{\vec n}}{|\vec n|!} \prod_{r=2}^\infty 
\big(\ul \alpha(H_r(\ul X , \ul Y))\big)^{n_r}
\,.
\eeq
To read off a formula for $B_m$ in terms of a bidifferential operator, we expand $\ul X$ and $\ul Y$ with respect to the basis $\{\ul E_I\}$ in $\mf g^N$ and plug this into \eqref{G-BCH-r}. In the condensed notation
\ala{
\ul X^{\tilde I}
 & :=
\ul X^{I_{1,1}} \cdots \ul X^{I_{1,k_{11}}}
 \cdots\cdots
\ul X^{I_{\kappa,1}} \cdots \ul X^{I_{\kappa,k_{1\kappa}}}
\ul X^{I_1} \cdots \ul X^{I_k} \ ,
\\
\ul Y^{\tilde J}
 & :=
\ul Y^{J_{1,1}} \cdots \ul Y^{J_{1,k_{21}}}
 \cdots\cdots
\ul Y^{J_{\kappa,1}} \cdots \ul Y^{J_{\kappa,k_{2\kappa}}}
\ul Y^J
,
}
this yields
$$
\ul \alpha(H_r(\ul X , \ul Y))
 =
 \left(
\sum_{\mb k \in \mc K_r} 
\sum_{(\tilde I , \tilde J) \in \mc B_{\mb k}} 
\ul E_{\tilde I , \tilde J} \ul X^{\tilde I} \ul Y^{\tilde J}
 \right)
(\ul a, \ul \alpha)
$$
and thus
 \al{\nonumber
\big(\ul \alpha&(H_r(\ul X , \ul Y))\big)^{n_r}
\\ \label{G-BCH-aHr}
 & =
 \left(
\sum_{\mb k^1 , \dots , \mb k^{n_r} \in \mc K_r ~~} 
\sum_{(\tilde I^i,\tilde J^i) \in \mc B_{\mb k^i}} 
\ul E_{\tilde I^1 , \tilde J^1} \cdots \ul E_{\tilde I^{n_r} , \tilde J^{n_r}} 
\ul X^{\tilde I^1} \ul Y^{\tilde J^1} \cdots \ul X^{\tilde I^{n_r}} \ul Y^{\tilde J^{n_r}}
 \right)
(\ul a, \ul \alpha) \ .
}
Plugging \eqref{G-BCH-aHr} into \eqref{G-BCH-Mm-1} and using that $\ul X^I \efnwt{\ul X} = (0 , \ul\ve^I)\efnwt{\ul X}$, we obtain the assertion.
\ebw

\begin{Remark}\label{Bem-Mr}
For $B_0$, $B_1$ and $B_2$, we obtain 
 \ala{
B_0 & = \mu
\,,
\\
B_1
 & = 
\frac 1 2 \,\, [\ul E_I , \ul E_J]^\sim
 \,\,
\mu \circ \left((0,\ul\ve^I) \otimes (0,\ul\ve^J)\right)
\,,
\\
B_2
 & = 
\frac{1}{24} 
 \left\{
2 \, [\ul E_I , [\ul E_J , \ul E_K]]^\sim \,\,
 \mu \circ
\left(
(0,\ul\ve^I) (0,\ul\ve^J) \otimes (0,\hat\ve^K)
 - 
(0,\ul\ve^K) \times (0,\ul\ve^I) \hat\ve^J
 \right)
 \right.
\\
 & \hspace{2.5cm} +
 \left.
3 \, [\ul E_I , \ul E_L]^\sim \, [\ul E_K , \ul E_L]^\sim \,\,
\mu
 \circ
 \left(
(0,\ul\ve^I) (0,\ul\ve^K) \otimes (0,\ul\ve^J) (0,\ul\ve^L)
 \right)
 \right\}
\,.
 }
\end{Remark}



\section{Homological reduction }
\subsection{The method}

Classical homological reduction of a $G$-Hamiltonian system essentially goes back to the work of
Batalin--Fradkin--Vilkoviski \cite{BF,BV77,BV83,BV85} and was later interpreted mathematically
in terms of the tensor product of a Koszul-Tate resolution of the constraint ideal with the
Chevalley--Eilenberg complex of the Lie algebra of the symmetry group \cite{McMullan,Stasheff}.
In the case of a regular $G$-Hamiltonian system Bordemann--Herbig--Waldmann \cite{BHW} constructed
a star product on the reduced symplectic space via homological perturbation of the classical 
homological reduction \'a la Batalin--Fradkin--Vilkoviski; see also \cite{GW,EspKraWal}.
In \cite{Reichert}, Reichert relates the characteristic classes of the unreduced with
the reduced star product and thus shows that, under reasonable
assumptions on the initial data of the Hamiltonian system, deformation quantization commutes
with homological reduction. The method from \cite{BHW} was generalized by Herbig \cite{HPhD} and
Bordemann--Herbig--Pflaum \cite{BHP} to the singular case under the condition that the zero level set
is a complete intersection and that its vanishing ideal is generated by the components
of the moment map. Let us explain the main ideas behind classical homological reduction and its quantized version 
within the framework of deformation theory. For the necessary
tools from homological algebra and homological perturbation theory we refer the reader to \cite{HilStaCHA,GelMan,Weibel,LamHPT,CrainicPerturbation}
and to Appendix \ref{ToolsHomAlg}.

Assume that $(M,\omega,\Psi,J)$ is a $G$-Hamiltonian system where $G$, as before, is assumed to be a compact Lie group. Denote by  $\pi : M_0 \to M\red G$ the canonical projection from the the zero level set
$M_0 = J^{-1}(0)$ onto the symplectically reduced space. The reduced phase space $M\red G$ becomes in a
natural way a commutative locally ringed space with structure sheaf
$\calC^{\infty}_{M\red G}$ given by
\[
  \calC^{\infty}_{M\red G}  (U)=\left( \calC^\infty (\widetilde{U})\right)^G \big/ \left(\calI_{M_0} (\widetilde{U})\right)^G \ .
\]
Here, $U$ runs through the open sets of $M\red G$, $\widetilde{U}$ denotes for given $U \subset M\red G$
an open subset of $M$ such that $\widetilde{U}\cap M_0 = \pi^{-1} (U)$,
$\calI_{M_0} \subset \calC^\infty_M$ is the vanishing ideal sheaf of the constraint surface,
and $\big( - \big)^G$ denotes the $G$-invariant part. One can prove that the ringed space
$\left(M\red G, \calC^{\infty}_{M\red G} \right)$ is a differentiable space in the sense of Spallek \cite{SpallekDR},
cf.~also \cite{NGonzalezSanchoBook}, and that it has a natural minimal Whitney stratification \cite{SjamaarLerman}.
More importantly from the point of view of geometric mechanics is the observation by Sjamaar and Lerman \cite{SjamaarLerman}
that the so-called algebra of smooth functions 
$\calC^{\infty} (M\red G) := \calC^{\infty}_{M\red G} (M\red G)$ on the reduced space carries a Poisson structure
\[
  \big\{ - , - \big\}_{\! M\red G} : \calC^{\infty} (M\red G) \times \calC^{\infty} (M\red G) \to \calC^{\infty} (M\red G) \ .
\]   
This Poisson structure is uniquely determined by the condition that it is compatible with the natural Poisson
bracket $\big\{ - , - \big\}_{\! M}$ on the symplectic manifold $(M,\omega)$.
This means that the Poisson bracket of two elements $f,g\in \calC^{\infty} (M\red G)$
is given by
\begin{equation}
  \label{eq:Poissonbracketreducedspace}
  \big\{ f,g\big\}_{\! M\red G} \circ \pi =
  \big\{\widetilde{f},\widetilde{g}\big\}_{\! M} \big|_{J^{-1} (0)} \ , 
\end{equation}
where  $\widetilde{f},\widetilde{g} \in \calC^{\infty} (M)$
are chosen to be $G$-invariant and to satisfy
$\widetilde{f} |_{M_0}  = f\circ \pi$ and $\widetilde{g} |_{M_0}  = g\circ \pi$. It was shown in \cite{SjamaarLerman} that the strata $S$ of the natural stratification of $M\red G$ are symplectic manifolds and that the embeddings $(S,\calC^\infty_S) \to \left(M\red G, \calC^{\infty}_{M\red G} \right)$ are Poisson.

In homological reduction, the so constructed Poisson algebra of smooth functions on a
symplectically reduced space is expressed in terms of the zeroth cohomology
of a certain cochain complex carrying the structure of a graded Poisson algebra.
Under certain assumptions, the latter can be deformed along the graded Poisson
structure and the zeroth cohomology of the deformed algebra is a deformation
quantization of the original Poisson algebra.  
Before we can describe the details of this method we need the following.

\subsection{A tool combining real algebraic with symplectic geometry}
A crucial ingredient for homological reduction to work in the singular case is a
certain solution to (a variant of) the so-called extension
problem in real algebraic geometry, cf.~\cite{WhitneyAEDFDCS,WhitneyEDF,MalIDF,Schwarzbier,Fefferman}.
By that one understands the following. Assume that $Z$ is a closed subset of a
smooth manifold $M$, $I_Z\subset \calC^\infty (M)$ the vanishing ideal, and
$\rest : \calC^\infty (M) \to \calC (Z)$, $f\mapsto f|_Z$ the restriction map.
Then $I_Z$ is a closed ideal, so one obtains a short exact sequence of
Fr\'echet algebras
\[
  0 \longrightarrow I_Z \longrightarrow \calC^\infty (M)
  \overset{\rest}{\longrightarrow} \calC^\infty (Z) \longrightarrow 0 \ ,
\]
where $\calC^\infty (Z) \subset \calC (Z)$ denotes the image of $\rest$ equipped
with the quotient topology. The question now arises under which conditions on
$M$ and $Z$ this sequence has a continuous split, meaning that a continuous map
$\ext:  \calC^\infty (Z) \to \calC^\infty (M)$ exists such that
$\rest \circ \ext = \id$. If such a continuous split exists, one says that
$Z\subset M$ has the \emph{extension property} \cite[Sec.~7.1]{Schwarzbier}.
According to the solution of the extension problem by Bierstone and Schwarz \cite[Thm.~0.2.1]{Schwarzbier},
every Nash subanalytic subset $Z$ of a  real analytic manifold $M$ has the extension property;
see \cite[Def.~0.1.2]{Schwarzbier} for the definition of Nash subanalytic sets. 
Note that every semianalytic hence every analytic subset of a real analytic manifold
is Nash subanalytic by \cite[\S 17]{Lojasiewicz}. 

Two important results which entail that the  extension theorem by Bierstone and Schwarz can be applied to our
situation are the observation by Kutzschebauch and Loose \cite{KutzschebauchLoose} that every symplectic manifold
carries a real analytic structure in which the symplectic form is real analytic and
\cite[Theorem 1.3]{MatumotoShiota} by Matumoto and Shiota that every smooth manifold with a compact Lie group action carries an analytic structure
in which the $G$-action is real analytic, see also \cite{Illman}. Note that in either case
the real analytic structure is not unique but only unique up to isomorphism. Therefore it is not immediately
clear that a real analytic structure on the underlying space of a given $G$-Hamiltonian system
can be chosen so that both the group action and the symplectic form are real analytic.
Below we show that this is indeed the case. We also verify that, as a consequence, the moment map
of a $G$-Hamiltonian system equipped with such a compatible real analytic structure
is real analytic as well, so its  zero level set is analytic and therefore
has the desired extension property. Note that hereby we assume that all manifolds are
second countable.

\begin{Theorem}
  \label{Thm:ExCompatibleEquivariantAnalyticStructures}
  Let $(M,\omega)$ be a symplectic manifold. Then the following holds true:
  \begin{enumerate}[{\rm (i)}]
  \item\label{ite:analyticstr}
    There exists a real analytic structure on $M$ that means an atlas of $M$
    with real analytic transition maps in regard to which $\omega$ becomes a real
    analytic $2$-form.
  \end{enumerate}
  Under the assumption that $G$ is a compact Lie group with a Hamiltonian action $\Psi$ on
  $M$ and $J: M \to \g^*$ the corresponding moment map the following additional
  statements are satisfied:
  \begin{enumerate}[{\rm (i)}]
  \setcounter{enumi}{1}
  \item\label{ite:analyticaction}
    The real analytic structure in {\rm (\ref{ite:analyticstr})}
    can be chosen so that the $G$-action on $M$ and the symplectic form $\omega$ are real analytic.
    In regard to such a real analytic structure the moment map $J$ is
    real analytic as well. 
  \item\label{ite:zerolevelextprop}
    The zero level set $M_0 = \{p \in M: \mm(p) =0 \}$ has the extension property.
    Moreover, the extension map $\ext:  \calC^\infty (M_0) \to \calC^\infty (M)$
    can be chosen to be equivariant.
  \end{enumerate}
\end{Theorem}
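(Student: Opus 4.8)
The plan is to prove the three assertions in turn, the recurring device being averaging over the compact group $G$ with respect to normalized Haar measure to turn non-equivariant statements into equivariant ones. Assertion (\ref{ite:analyticstr}) is literally the theorem of Kutzschebauch and Loose \cite{KutzschebauchLoose}, so there is nothing to add. For (\ref{ite:analyticaction}) I would start from \cite[Thm.~1.3]{MatumotoShiota}, which provides a real analytic structure $M_{\mathrm{an}}$ on $M$ for which the action $\Psi$ is real analytic; with respect to it, $\omega$ is a priori only a $G$-invariant smooth symplectic form. The goal is to produce a $G$-equivariant diffeomorphism $\phi$ of $M$ with $\phi^{\ast}\omega$ real analytic on $M_{\mathrm{an}}$: transporting $M_{\mathrm{an}}$ by $\phi$ then gives a real analytic structure making $\omega$ analytic, and still making $\Psi$ analytic precisely because $\phi$ is $G$-equivariant.

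To construct $\phi$ I would carry out the Kutzschebauch--Loose perturbation in an equivariant fashion. First choose, using that on a real analytic manifold the complex of real analytic differential forms computes de Rham cohomology, a closed real analytic $2$-form representing $[\omega]$, and correct it by $d$ of a fine real analytic approximation of a smooth primitive of the difference, obtaining a closed real analytic $\tilde\omega_0$ with $[\tilde\omega_0]=[\omega]$ that is $C^1$-close to $\omega$. Then set $\tilde\omega := \int_G \Psi_g^{\ast}\tilde\omega_0\,\mathrm dg$, which is again real analytic (integrating a real analytic family over the compact group with its analytic Haar measure preserves analyticity), closed, $G$-invariant, still in $[\omega]$ since $\omega$ is $G$-invariant, and still $C^1$-close to $\omega$, hence nondegenerate and so symplectic. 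Finally apply the Moser argument to the path $\omega_t = (1-t)\omega + t\tilde\omega$ of $G$-invariant symplectic forms in the fixed class $[\omega]$: writing $\tfrac{\mathrm d}{\mathrm dt}\omega_t = d\bar\sigma$ with $\bar\sigma$ a $G$-invariant $1$-form (average a primitive), let $\phi$ be the time-one flow of the $G$-invariant time-dependent vector field $X_t$ determined by $X_t \lrcorner\, \omega_t = -\bar\sigma$; then $\phi$ is $G$-equivariant and $\phi^{\ast}\tilde\omega = \omega$. Once $\Psi$ and $\omega$ are both real analytic, the fundamental vector field $X_M$ is real analytic for every $X\in\g$, hence $\mathrm dJ_X = -X_M \lrcorner\, \omega$ is a closed real analytic $1$-form, and by the analytic Poincar\'e lemma $J_X$ is real analytic; running over a basis of $\g$ shows that $J$ is real analytic.

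For (\ref{ite:zerolevelextprop}), the moment map $J$ being real analytic for the structure from (\ref{ite:analyticaction}), its zero set $M_0 = J^{-1}(0)$ is a real analytic subset of $M$, hence semianalytic and therefore Nash subanalytic by \cite[\S 17]{Lojasiewicz}. The extension theorem of Bierstone and Schwarz \cite[Thm.~0.2.1]{Schwarzbier} then provides a continuous linear splitting $\ext_0 : \calC^\infty(M_0) \to \calC^\infty(M)$ of the restriction map $\rest$, i.e.\ $\rest\circ\ext_0 = \id$. Since $M_0$ is $G$-invariant, $G$ acts continuously on $\calC^\infty(M_0)$ and on $\calC^\infty(M)$ and $\rest$ is $G$-equivariant, so $\ext(f) := \int_G g \cdot \ext_0(g^{-1}\cdot f)\,\mathrm dg$ is a continuous linear $G$-equivariant right inverse of $\rest$, as required.

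The main obstacle is (\ref{ite:analyticaction}): passing from the two a priori different real analytic structures — one making $\Psi$ analytic, the other making $\omega$ analytic — to a single common one. Concretely this amounts to making the Kutzschebauch--Loose perturbation $G$-equivariant while keeping $\tilde\omega$ in the de Rham class $[\omega]$ so that Moser applies, together with controlling completeness of the Moser vector field $X_t$ in the noncompact case (which can be arranged by taking all approximations fine enough relative to a complete $G$-invariant Riemannian metric, so that $X_t$ is bounded and its flow exists up to time one). Given (\ref{ite:analyticaction}), the real analyticity of $J$ and assertion (\ref{ite:zerolevelextprop}) are comparatively routine.
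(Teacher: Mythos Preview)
Your proposal is correct and follows essentially the same route as the paper: start from the Matumoto--Shiota analytic structure with analytic $G$-action, produce a $G$-invariant real analytic symplectic form close to $\omega$ by approximation and averaging, and then run an equivariant Moser argument; part~(\ref{ite:zerolevelextprop}) is handled identically. The one substantive difference is your argument for real analyticity of $J$: you invoke the analytic Poincar\'e lemma (a local analytic primitive of the closed analytic $1$-form $dJ_X$ differs from $J_X$ by a locally constant function, hence $J_X$ is analytic), whereas the paper isolates a direct lemma showing that any smooth function with real analytic differential is itself real analytic, proved by elementary derivative estimates. Both work; yours is shorter, the paper's is more self-contained.
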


To prove the theorem, we need some preliminary results. As before, $G$ will always denote a compact
Lie group with its canonical real analytic structure.
%
%
Recall first from \cite[Chapter 2]{Hirsch} or \cite{KutzschebauchLoose} the definition of the Whitney topology on $\calC^\infty (M)$ for  a smooth $n$-dimensional manifold $M$.
Let $\Phi = (U_i,x_i=(x_i^1,\ldots,x_i^n))_{i\in I}$ with $I\subset \N$ be a locally finite smooth atlas of $M$, $K =(K_i)_{i\in I}$
a family of compact subsets $K_i\subset U_i$, $m = (m_i)_{i\in I}$ a family of positive integers, and
$\varepsilon = (\varepsilon_i)_{i\in I}$ a family of positive real numbers. We call such a quadruple $(\Phi,K,m,\varepsilon)$
a \emph{limiting cover} of $M$. 
Associated to every limiting cover and every $f\in \calC^\infty (M)$ is the basic neighborhood
\[
  N (f; \Phi,K,m,\varepsilon) =
  \left\{ g \in \calC^\infty (M) : \sup_{p\in K_i \atop \alpha \in \N^n, \: |\alpha|\leq m_i}
  \left| \frac{\partial^{|\alpha|} g }{\partial x^\alpha}(p) - \frac{\partial^{|\alpha|} f}{\partial x^\alpha}(p)\right|
  \leq  \varepsilon_i \text{ for all } i \in I\right\} \ .
\]
One verifies that the basic neighborhoods $N (f; \Phi,K,m,\varepsilon)$ where $f$ runs through the
elements of $\calC^\infty (M)$ and $(\Phi,K,m,\varepsilon)$ through the limiting covers of $M$ forms
a basis of a  topology.

The topology generated by this basis on $\calC^\infty (M)$ is  translation invariant by construction. 
It is called the \emph{Whitney topology}.
The definition of the Whitney topology can be extended in a straightforward way to
the space $\Omega^k (M)$ of smooth $k$-forms on $M$. A fundamental observation by Whitney
\cite[Lem.~6]{WhitneyAEDFDCS} was that for an open subset $U\subset \R^n$ the space of real analytic
functions on $U$ is dense in $\calC^\infty (U)$ with respect to the Whitney topology.
More generally, the Grauert--Morrey embedding theorem \cite{Grauert} together with Whitney's
result imply that $\calC^\omega (M)$ is dense in $\calC^\infty (M)$ in the Whitney topology
for any real analytic manifold $M$, see \cite[Thm.~13.4]{Illman}.

\begin{Lemma}
  \label{Lem:KL2}
  Let $M$ be a real analytic manifold equipped with an analytic $G$-action, $k \in \N_{>0}$,
  and $N \subset \Omega^{k-1} (M)$ an open zero neighborhood in the Whitney topology.

  If $\omega$ is a smooth and $G$-invariant closed
  $k$-form on $M$, then there exists an invariant $\theta\in N$ such that
  $\omega^\textup{a} = \omega - d\theta$ is real analytic. In particular this means that
  one can find a $G$-invariant analytic representative within the 
  de Rham cohomology class of $\omega$.
\end{Lemma}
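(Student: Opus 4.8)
\emph{Plan.}
I would first record that the ``in particular'' clause is automatic once the rest is proved: if $\theta$ is $G$-invariant with $\omega - d\theta$ real analytic, then $\omega^{\mathrm a}:=\omega-d\theta$ satisfies $d\omega^{\mathrm a}=d\omega=0$, so it is a closed real-analytic $G$-invariant form representing $[\omega]$ in $H^k_{\mathrm{dR}}(M)$. Hence it suffices to produce a $G$-invariant $\theta\in N$ with $\omega-d\theta$ real analytic. The strategy is to establish this first without the group action, and then to make the correction term invariant by averaging over $G$.

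\emph{Step 1: the non-equivariant version.}
This is in substance due to Kutzschebauch and Loose \cite{KutzschebauchLoose}: for a smooth closed $k$-form $\omega$ on the real-analytic manifold $M$ and any open zero neighbourhood $N_0\subset\Omega^{k-1}(M)$ in the Whitney topology, there is $\theta_0\in N_0$ with $\omega-d\theta_0$ real analytic. I would argue as follows. By the real-analytic de~Rham theorem --- which follows from the analytic Poincaré lemma (the standard homotopy operator on a ball preserves real-analyticity) together with Cartan's Theorem~B for coherent real-analytic sheaves on a Stein complexification of $M$, cf.~\cite{Grauert} --- the class $[\omega]$ admits a closed real-analytic representative $\omega^{\mathrm a}_0$. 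Choose a smooth $\eta_0\in\Omega^{k-1}(M)$ with $d\eta_0=\omega-\omega^{\mathrm a}_0$. By Whitney's approximation theorem \cite[Lem.~6]{WhitneyAEDFDCS} combined with the Grauert--Morrey embedding \cite{Grauert}, \cite[Thm.~13.4]{Illman}, the real-analytic $(k-1)$-forms are dense in $\Omega^{k-1}(M)$ for the Whitney topology (here second countability of $M$ enters). Since $\eta_0-N_0$ is a Whitney neighbourhood of $\eta_0$, we may pick a real-analytic $\eta^{\mathrm a}$ with $\theta_0:=\eta_0-\eta^{\mathrm a}\in N_0$; then $\omega-d\theta_0=\omega^{\mathrm a}_0+d\eta^{\mathrm a}$ is real analytic.

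\emph{Step 2: averaging.}
Let $P\colon\Omega^\bullet(M)\to\Omega^\bullet(M)$, $P\alpha:=\int_G\Psi_g^*\alpha\,dg$ with normalised Haar measure, be the $G$-averaging operator. It is $\R$-linear, commutes with $d$ (differentiation under the integral sign), restricts to the identity on $G$-invariant forms, and is continuous for the Whitney topology: for a compact $K_i$ inside a chart, the orbit $\Psi_G(K_i)$ is compact, so the derivatives of $\Psi_g^*\alpha$ on $K_i$ up to any fixed order are bounded uniformly in $g$ by finitely many derivatives of $\alpha$ on a fixed compact set. Moreover $P$ maps real-analytic forms to real-analytic forms, since the $G$-action is real analytic, so the integrand $\Psi_g^*\alpha$ is jointly real analytic in $(g,p)$ on the compact $G\times M$, and integrating out $g$ preserves real-analyticity in $p$ (e.g.\ by complexifying $p$ and invoking Morera's theorem). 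As $P$ is continuous with $P(0)=0$, the set $N_0:=P^{-1}(N)$ is an open zero neighbourhood in $\Omega^{k-1}(M)$. Applying Step~1 to this $N_0$ gives $\theta_0\in N_0$ with $\omega-d\theta_0$ real analytic; put $\theta:=P\theta_0$. Then $\theta$ is $G$-invariant, $\theta\in P(N_0)\subseteq N$, and, using $P\omega=\omega$ and $Pd=dP$,
\[
  \omega-d\theta=P\omega-P\,d\theta_0=P(\omega-d\theta_0)
\]
is real analytic because $P$ preserves analyticity. Thus $\omega^{\mathrm a}:=\omega-d\theta$ has all the required properties.

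\emph{Main obstacle.}
The genuinely delicate point is to carry out the correction $\theta$ so that it is simultaneously $G$-invariant and Whitney-small; this is exactly what the averaging projector $P$ achieves, once one checks that $P$ is Whitney-continuous and preserves real-analyticity (the latter resting on real-analyticity of the action). The other input deserving care, entering Step~1, is the real-analytic de~Rham theorem supplying an analytic representative of $[\omega]$, whose proof relies on Grauert's complexification of a real-analytic manifold together with Cartan's Theorem~B.
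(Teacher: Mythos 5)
Your proof is correct and follows essentially the same route as the paper: invoke the non-equivariant result of Kutzschebauch--Loose to produce a Whitney-small $\theta_0$ with $\omega - d\theta_0$ real analytic, then apply the $G$-averaging projector (which commutes with $d$, is Whitney-continuous, and preserves real-analyticity) to pull $\theta_0$ back from the preimage neighbourhood $P^{-1}(N)$ and land on the invariant $\theta = P\theta_0 \in N$. The only difference is cosmetic: where you sketch the Whitney-continuity and analyticity-preservation of the averaging operator directly, the paper simply cites Illman (Thm.~15.4 and Prop.~14.4), and where you outline a proof of the Kutzschebauch--Loose lemma via the real-analytic de~Rham theorem and Whitney density, the paper just cites that lemma.
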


\begin{proof}[Proof of Lemma \ref{Lem:KL2}]
  Consider the averaging operator $A:\Omega^\bullet (M) \to \Omega^\bullet (M)$ which is defined by
  integration with respect to the normalized Haar measure on $G$:
  \begin{equation}
    \label{eq:averagingop}
     A \varrho = \int_G {\mr L}^*_g \varrho \, dg \quad \text{for all } \varrho \in  \Omega^k (M) \ .
  \end{equation}
  The operator $A$ then is a projection onto the space of $G$-invariant forms, commutes with the
  exterior differential, and maps real analytic forms to real analytic forms by \cite[Prop.~14.4]{Illman}.
  By \cite[Theorem 15.4]{Illman}, $A:\Omega^\bullet (M) \to \Omega^\bullet (M)$
  is also continuous with respect to the Whitney topology. So $N_1 = A^{-1}N$ is
  a zero neighborhood, and there exists, by \cite[Lem.~2]{KutzschebauchLoose},
  an element $\theta_1 \in N_1$ so that $\omega_1 = \omega - d\theta_1$ is real analytic.
  Then $\omega^\textup{a} = A\omega_1$ is  $G$-invariant by construction and real analytic by
  \cite[Prop.~14.4]{Illman}. Moreover,  $\omega^\textup{a} = \omega  - d \theta$, where
  $\theta = A\theta_1 \in N$.   
\end{proof}

\begin{Lemma}
\label{Lem:DerAnaImpliesAna}
  Let $M$ be a real analytic manifold and $f : M \to \R$ a smooth function
  such that $df$ is a real analytic $1$-form. Then $f$ is real analytic.
\end{Lemma}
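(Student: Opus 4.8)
The plan is to reduce the statement to a purely local one and then invoke the standard characterization of real analyticity via Cauchy-type bounds on all derivatives. Since real analyticity is a local property that is preserved under real-analytic coordinate changes, it suffices to assume $M$ is a connected open set $U \subseteq \R^n$ with its standard coordinates and to prove that $f$ is real analytic in a neighbourhood of an arbitrary point $p \in U$. Under this identification, the hypothesis that $df$ is a real analytic $1$-form says precisely that each partial derivative $\partial_i f := \partial f / \partial x^i$, $i = 1, \dots, n$, is a real analytic function on $U$.

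Next I would use the well-known criterion that a smooth function $g$ defined near $p$ is real analytic at $p$ if and only if there are an open neighbourhood $W \ni p$ and constants $C, R > 0$ with $\sup_{x \in W} \left| \partial^\alpha g(x) \right| \le C\, \alpha!\, R^{-|\alpha|}$ for every multi-index $\alpha \in \N^n$. Applying this to each $\partial_i f$ gives, after intersecting neighbourhoods, a common relatively compact $W \ni p$ and constants $C_i, R_i > 0$ with $\sup_W |\partial^\alpha \partial_i f| \le C_i\, \alpha!\, R_i^{-|\alpha|}$. For any $\beta$ with $|\beta| \ge 1$, choose $i$ with $\beta_i \ge 1$ and write $\partial^\beta f = \partial^{\beta - e_i}(\partial_i f)$; using $(\beta - e_i)! = \beta!/\beta_i \le \beta!$ and $R_i^{-(|\beta|-1)} = R_i R_i^{-|\beta|}$ one obtains $\sup_W |\partial^\beta f| \le C'\, \beta!\, R'^{-|\beta|}$ with $C' = \max_i C_i R_i$ and $R' = \min_i R_i$. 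The case $\beta = 0$ follows from continuity of $f$ on the relatively compact $W$. Taking $C := \max(C', \sup_W |f|)$ yields Cauchy-type bounds for $f$ on $W$, so $f$ is real analytic at $p$, and since $p$ was arbitrary, on all of $M$.

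As an alternative route I would integrate $df$: on a ball $B \subseteq U$ centred at $p$ set $g(x) := f(p) + \int_0^1 \sum_{i=1}^n \partial_i f\big(p + t(x - p)\big)\,(x_i - p_i)\, dt$. The integrand is real analytic jointly in $(t,x)$, and integration of a real analytic function over the fixed compact interval $[0,1]$ again produces a real analytic function of $x$ — extend the integrand holomorphically in $x$ to a complex neighbourhood and differentiate under the integral sign. Hence $g$ is real analytic on $B$; since $dg = df$ on $B$ by the fundamental theorem of calculus, $g(p) = f(p)$, and $B$ is connected, we get $g = f$ on $B$, so $f$ is real analytic near $p$.

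I do not expect a genuine obstacle here. The only point requiring a little care is the multi-index and factorial bookkeeping that converts the derivative estimates for the $\partial_i f$ into estimates for $f$ (respectively, in the second approach, the justification that integration over a compact parameter interval preserves real analyticity via holomorphic extension); both are routine.
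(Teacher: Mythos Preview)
Your first argument is correct and is essentially identical to the paper's proof: the paper also reduces to an open set in $\R^n$, invokes the same Cauchy-type estimate criterion (citing Krantz--Parks, Prop.~2.2.10), applies it to each $\partial_i f$, and then, for a nonzero multi-index $\alpha$, strips off one derivative $\partial_j$ with $\alpha_j>0$ and uses $\beta!\le\alpha!$ together with $R=\min_i R_i$ and a constant dominating both $\sup_{\overline V}|f|$ and the $R_i\cdot C_i$. Your integration alternative is a different (and also valid) route that the paper does not pursue.
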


\begin{proof}[Proof of Lemma \ref{Lem:DerAnaImpliesAna}]
  Since the problem is local, it suffices to assume that $M$ is an
  open subset $U$ of some $\R^n$. Recall, for example from
  \cite[Prop.~2.2.10]{Krantz}, the well-known criterion for real analyticity
  which says that $g: U \to \R$ is real analytic if and only if for each $a\in U$
  there exists an open ball $V\subset U$ around $a$ together
  with constants $C,R>0$ such that
  \[
    \left| \frac{\partial^{|\alpha|} g}{\partial x^\alpha} (v)\right|
    \leq C \frac{\alpha ! }{R^{|\alpha|}} \quad \text{for all } v \in V,\:
    \alpha \in \N^n \ . 
  \]
  Here $x=(x^1,\ldots,x^n)$ denote the standard coordinates of $\R^n$.
  Now assume that $f : U \to \R $ is smooth and that the partial derivatives
  $\partial_i f = \frac{\partial f}{\partial x^i} : U \to \R$, $i=1,\dots ,n$ are real
  analytic. By the mentioned criterion there exist for every point
  $a\in U$ open balls  $V_i\subset  U$ around $a$ and constants
  $C_i,R_i>0$, $i=1,\dots ,n$, such that
  \[
    \left| \frac{\partial^{|\alpha|} \partial_i f}{\partial x^\alpha} (v)\right|
    \leq C_i \frac{\alpha ! }{R_i^{|\alpha|}} \quad \text{for all } v \in V_i,\:
    \alpha \in \N^n \ . 
  \]
  Choose an open ball $V$ relatively compact in $U$ such that $a \in V \subset V_1 \cap \ldots \cap V_n$
  and put $R =\min \{R_1,\ldots ,R_n \} $. Choose $C>0$ which is larger than $\sup_{v\in \overline{V}} \{ |f(v)| \}$ and
  larger than each of the products $R \cdot C_i$.
  Then the estimate 
  \begin{equation*}
    \label{eq:criterion}
    \left| \frac{\partial^{|\alpha|} f}{\partial x^\alpha} (v)\right|
    \leq C \frac{\alpha ! }{R^{|\alpha|}} \ , \quad v \in V  \ ,
  \end{equation*}
  holds true for $\alpha=0$ by definition of $V$ and $C$. Let us show that it also holds for non-zero
  $\alpha \in \N^n$. Then $\alpha_j >0$ for some $j  \in \{ 1, \ldots , n\}$. Put
  \[
    \beta_i =
    \begin{cases}
      \alpha_i & \text{if } i \neq j \\
      \alpha_j - 1 & \text{if } i=j \ .
    \end{cases}
  \]
  One obtains
  \[
    \left| \frac{\partial^{|\alpha|} f}{\partial x^\alpha} (v)\right|
    =
    \left| \frac{\partial^{|\beta|} \partial_j f}{\partial x^\beta} (v)\right|
    \leq C_j  \frac{\beta ! }{R_j^{|\beta|}} \leq C \frac{\alpha ! }{R^{|\alpha|}} \quad \text{for all } v \in V  \ ,
  \]
  hence $f$ satisfies the analyticity criterion and the claim is proved.
\end{proof}

\begin{proof}[Proof of Theorem \ref{Thm:ExCompatibleEquivariantAnalyticStructures}]
  {\it ad} {\rm (\ref{ite:analyticstr})}.
  This has been proved in \cite{KutzschebauchLoose}. The main idea in that work was to verify
  a non-equivariant version of Lemma \ref{Lem:KL2} and then apply Moser's trick. We generalize
  this ansatz to the equivariant case.
  \newline
  {\it ad} {\rm (\ref{ite:analyticaction})}.
  By \cite[Theorem 1.3]{MatumotoShiota} there exists an analytic structure on $M$ with respect to which
  the  $G$-action $\Psi$ is real analytic.
  To show the claim it now suffices to construct an analytic $G$-invariant symplectic form
  $\omega^\textup{a}$ on $M$  and a $G$-equivariant diffeomeorphism $f:M\to M$ so that
  $f^*\omega^\textup{a} = \omega$. Following \cite{KutzschebauchLoose} we will apply Moser's trick
  to construct $f$. First choose a zero neighborhood $N$ in the Whitney topology on $\Omega^1 (M)$
  so that $\omega_t = \omega - t d \theta$ is a non-degenerate $2$-form for all $\theta \in N$
  and $t\in [0,1]$. For each such  $\theta$ and $t$ there then exists a uniquely defined smooth vector
  field $X_t: M \to TM$ so that
  \[
                    X_t \lrcorner\, \omega_t = \theta \ .
  \]
  Note that $X_t$ depends smoothly on $t$.
  After possibly shrinking the neighborhood $N$ one can achieve that the non-autonomous
  vector field $X_t$ is integrable up to $t=1$ which means that there exists a family of diffeomorphism
  $(\varphi_t)_{t\in[0,1]}$ of $M$ which is smooth in $t$ so that $f_0=\id_M$ and
  \[
        \frac{d}{dt} \varphi_t = X_t \circ \varphi_t \quad \text{for all } t\in [0,1] \ .
  \]
  Note that $X_t$ and hence $\varphi_t$ are $G$-equivariant in case $\theta$ is $G$-invariant. 
  By Lemma \ref{Lem:KL2} one can now find a real analytic $G$-invariant form $\theta \in N$
  so that $\omega^\textup{a} = \omega - d\theta$ is real analytic. By construction,
  $\omega^\textup{a}$ then has to be $G$-invariant as well. Moreover, the vector fields $X_t$
  and the diffeomeorphisms $\varphi_t$ are $G$-equivariant as well for all $t\in [0,1]$.
  By Moser's trick, 
  \[
     \omega = \varphi_1^* \omega^\textup{a}
  \]
  and the first claim of {\rm (\ref{ite:analyticaction})} is proved.
  Since the moment map satisfies $d\mm_Z = - Z_M \lrcorner \, \omega$ for all
  $Z \in \g$ and since both the $G$-action and $\omega$ are real analytic the remaining claim
  now follows from Lemma \ref{Lem:DerAnaImpliesAna}.
  \newline
  {\it ad} {\rm (\ref{ite:zerolevelextprop})}.
  Choose the analytic structure as in {\rm (\ref{ite:analyticaction})}. Then $M_0 = \mm^{-1} (0)$
  is an analytic subset of $M$, hence is Nash subanalytic by \cite[\S 17]{Lojasiewicz} and so
  has the extension property by \cite[Thm.~0.2.1]{Schwarzbier}.
  By averaging over the unique normalized Haar measure on $G$ one can achieve that the
  extension map $\ext :\calC^\infty (M_0)\to\calC^\infty (M)$ is $G$-equivariant.
\end{proof}

\subsection{Classical homological reduction}

Next we explain algebraic reduction \cite{AGJ} which underlies classical
homological reduction. Observe that by definition of the constraint
surface the functions
$\mm_Z = \langle \mm (-),Z \rangle$ with $Z \in\g^*$ vanish on the constraint surface. The ideal
$I(\mm) \subset \calC^\infty $    
generated by these functions $\mm_Z$ is contained in the vanishing ideal
$\calI_{M_0} (M) = \{ f \in \calC^\infty (M) : f|_{M_0} = 0 \} $ which we will denote from now on 
by $I_{M_0}$ as in \cite{AGJ}. Equality of the ideals $I(J)$ and $I_{M_0}$ then
holds under the following condition. 
\begin{enumerate}[(A\text{H})]
\setcounter{enumi}{6}  
\item\label{GenCond}
  \textbf{Generating Hypothesis.}\abs
  The functions $\mm_Z$ with $Z \in \la$ generate the vanishing ideal 
  $I_{M_0}$  of the constraint surface.
\end{enumerate}  
Note that a generating system of $I(J)$ is also given by the components
$J_l : M \to \R$, $l=1,\ldots , d$ of the representation $J = \sum_{l=1}^d J_l \, \varepsilon^l $
in terms of a basis $ (\varepsilon^1,\ldots , \varepsilon^d)$ of the dual 
$\g^*$. In classical homological reduction the Poisson algebra
$\left( \calC^{\infty} (M\red G) , \big\{ - , - \big\}_{M\red G} \right)$ is expressed -- under the assumption
of the generating condition and acyclicity of the Koszul complex on $J$ -- as the zeroth cohomology of the so-called BRST complex constructed below. 
In addition to being a differential graded algebra, the BRST complex carries  a graded Poisson structure which
it inherits from the natural Poisson bracket on $\calC^\infty (M)$. The particular virtue of the BRST complex
now is that it admits under the assumptions made a formal deformation quantization which leads to a star
product on the reduced phase space.

The first ingredient to the BRST complex is the Koszul complex
$\left( K_\bullet (\calC^\infty(M),J),\partial \right)$ on the map $J : M \to \g^*$, see
Example \ref{AppKoszulCplx}.
Its degree $k$ component is the free $\calC^\infty (M)$-module
\[ K_k = K_k  (\calC^\infty(M),J) = \calC^\infty (M) \otimes \wedge^k \g \ ,  \]
and the differential is given by contraction with $J$:
\[
  \partial : K_k (\calC^\infty(M),J)  \to  K_{k-1} (\calC^\infty(M),J), \:
  \alpha \mapsto \langle J, \alpha \rangle = \sum_{l=1}^d J_l (\varepsilon^l \lrcorner \, \alpha)  \ .
\]
As before, $(E_1,\ldots , E_d)$ denotes here a basis of the Lie algebra $\g$,
$(\varepsilon^1,\ldots , \varepsilon^d)$ its dual basis in $\g^*$, and the
$J_l \in \calC^\infty (M)$, $l=1,\ldots , d$ are the uniquely determined maps so that
$J = \sum_{l=1}^d J_l \varepsilon^l $.
The second condition  on the $G$-Hamiltonian system which
is needed to entail that the zeroth homology of the Koszul complex coincides with the algebra
$\calC^\infty (M_0)$ of smooth functions on the constraint surface is the following:
\begin{enumerate}[(A\text{C})]
\setcounter{enumi}{0}  
\item\label{AcycCond}
  \textbf{Acyclicity Condition.}\abs
  The Koszul complex $\left( K_\bullet (\calC^\infty(M),\mm),\partial\right)$ is acyclic.
\end{enumerate}

\begin{Proposition}
  Let $(M,\omega,\Psi,J)$ be a $G$-Hamiltonian system which satisfies conditions  \hyperref[GenCond]{{\rm(GH)}} and
  \hyperref[AcycCond]{{\rm (AC)}}. Then the complex
  \[
    0 \longrightarrow K_d \overset{\partial}{\longrightarrow} \ldots
    \overset{\partial}{\longrightarrow} K_1 \overset{\partial}{\longrightarrow}  K_0 = \calC^\infty (M)
    \longrightarrow \calC^\infty (M_0) \longrightarrow 0
  \]  
  is contractible, so $K_\bullet (\calC^\infty (M), J)$ is a free resolution of
  \[
    \calC^\infty (M_0) = \calC^\infty (M) / I_{M_0} \cong  H_0 \big(K_\bullet (\calC^\infty(M),\mm)\big) 
  \]
  in the category of $\calC^\infty(M)$-modules.
\end{Proposition}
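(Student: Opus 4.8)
The plan is to separate the statement into an easy homological part, which already yields the ``free resolution'' assertion, and a harder topological part, which yields genuine contractibility by a \emph{continuous} contracting homotopy. First I would pin down $H_0$. A direct computation from the definition of $\partial$ gives $\partial(f\otimes E_l)=\sum_m J_m(\varepsilon^m\lrcorner(f\otimes E_l))=f\,J_l$ for $f\in\calC^\infty(M)$, so $\operatorname{im}(\partial\colon K_1\to K_0)$ is exactly the ideal $I(J)$. By the Generating Hypothesis \hyperref[GenCond]{(GH)} this equals $I_{M_0}=\ker\rest$, where $\rest\colon\calC^\infty(M)\to\calC^\infty(M_0)$ is the (by definition surjective) restriction map. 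Hence $H_0(K_\bullet(\calC^\infty(M),J))=\calC^\infty(M)/I(J)=\calC^\infty(M)/I_{M_0}\cong\calC^\infty(M_0)$, the isomorphism being induced by $\rest$, so the augmented complex $0\to K_d\to\cdots\to K_0\overset{\rest}{\to}\calC^\infty(M_0)\to 0$ is exact at $K_0$ and at $\calC^\infty(M_0)$; by the Acyclicity Condition \hyperref[AcycCond]{(AC)} it is exact at $K_k$ for every $k\ge 1$ as well. Since each $K_k=\calC^\infty(M)\otimes\wedge^k\g$ is free and finitely generated over $\calC^\infty(M)$, this already exhibits $K_\bullet(\calC^\infty(M),J)$ as a free resolution of $\calC^\infty(M_0)$.

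Next I would construct a contracting homotopy by continuous $\CC$-linear maps $h_{-1}\colon\calC^\infty(M_0)\to K_0$ and $h_k\colon K_k\to K_{k+1}$ for $k\ge 0$, subject to $h_{k-1}\partial_k+\partial_{k+1}h_k=\id_{K_k}$, where I set $\partial_0:=\rest$ and $h_{-2}:=0$. For $h_{-1}$ I would take the $G$-equivariant continuous extension map $\ext\colon\calC^\infty(M_0)\to\calC^\infty(M)$ furnished by Theorem~\ref{Thm:ExCompatibleEquivariantAnalyticStructures}(iii), which applies because in the compatible real-analytic structure $M_0$ is an analytic, hence Nash subanalytic, subset of $M$. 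For the inductive step, using $\partial^2=0$ and the identity one level below, one checks that $\partial_k\bigl(\id_{K_k}-h_{k-1}\partial_k\bigr)=0$, so $\id_{K_k}-h_{k-1}\partial_k$ takes values in $\ker\partial_k=\operatorname{im}\partial_{k+1}$; composing with a continuous $\CC$-linear section $\sigma_{k+1}$ of $\partial_{k+1}\colon K_{k+1}\twoheadrightarrow\ker\partial_k$ and putting $h_k:=\sigma_{k+1}\circ(\id_{K_k}-h_{k-1}\partial_k)$ closes the induction. Averaging the $\sigma_{k+1}$ over the normalized Haar measure of $G$ keeps the whole homotopy $G$-equivariant.

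The hard part is the existence of the continuous linear sections $\sigma_{k+1}$, i.e.\ the fact that the cycle submodules $\ker\partial_k=\operatorname{im}\partial_{k+1}$ are \emph{topologically} complemented in the nuclear Fr\'echet spaces $K_k$; this is not true for closed subspaces of Fr\'echet spaces in general and has to be extracted from the geometry. In degree $0$ it amounts to continuous linear division of functions in $I_{M_0}=I(J)$ by the generators $J_l$, which is a sharpened, bounded-linear form of the extension property of Bierstone--Schwarz. In degrees $k\ge 1$ one must split the surjections onto the higher cycle modules, and here the Acyclicity Condition enters in an essential, not merely homological, way, the sections being produced by iterating the division along the Koszul differential. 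This is the technical core, and it is precisely what Theorem~\ref{Thm:KoszulResolutionContraintSurface}, the refinement of \cite[Thm.~3.2]{BHP}, is designed to supply, so I would invoke it to conclude.
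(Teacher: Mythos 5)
Your first paragraph is exactly the paper's proof, and it already suffices. Once you observe that $\im(\partial\colon K_1\to K_0)=I(J)=I_{M_0}=\ker\rest$ by (GH), that $\rest$ is surjective, and that (AC) gives exactness at $K_k$ for $k\ge 1$, the augmented complex is an exact complex of real vector spaces and hence automatically contractible (exact sequences of vector spaces split), and the $K_k$ are free $\calC^\infty(M)$-modules, so the free-resolution assertion follows. The paper's proof is two sentences saying precisely this.

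Your second and third paragraphs overshoot the statement. The Proposition's ``contractible'' is the cheap, purely algebraic notion; it does not assert a \emph{continuous} or $G$-equivariant contracting homotopy. That stronger statement is the content of the separate Theorem~\ref{Thm:KoszulResolutionContraintSurface}, whose \emph{hypothesis} is that $K_\bullet(\calC^\infty(M),J)$ is a free resolution of $\calC^\infty(M_0)$ --- i.e.\ exactly the conclusion you must establish here. So you should not ``invoke it to conclude'': logically the Proposition comes first, and the continuous/equivariant homotopy with the side conditions is then built afterward via Bierstone--Schwarz division and averaging. Your construction of $h_k$ by iterated division and the emphasis on topological complementedness is a faithful sketch of the proof of that later theorem, but it is superfluous for, and should not be folded into, the present Proposition.
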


\begin{proof}
  This is immediate by definition of the Koszul complex, since $ I_{M_0} = I(J)$ by the generating hypothesis and
  since $\im \left( \partial: K_1 \to K_0 \right) = I(J)$ by the acyclicity condition.
\end{proof}

In \cite{BHP} it was observed that under the assumptions \hyperref[GenCond]{{\rm(GH)}} and
\hyperref[AcycCond]{{\rm (AC)}} the Koszul complex allows for a contracting homotopy consisting of
linear maps continuous with respect to the natural Fr\'echet topologies on $\calC^\infty (M)$ and
its quotient $\calC^\infty (M_0)$. Here we provide a strengthening
  of that result. By virtue of Theorem \ref{Thm:ExCompatibleEquivariantAnalyticStructures}, a $G$-Hamiltonian system
  always carries a real analytic structure so that the symplectic form and the group action are both
  real analytic. This observation implies that one can leave out the technical assumption of ''local analyticity''
  in the statement of \cite[Thm.~3.2]{BHP}. More precisely, the following holds.

\begin{Theorem} 
\label{Thm:KoszulResolutionContraintSurface}
  Let $(M,\omega,\Psi,J)$ be a $G$-Hamiltonian system with $G$ compact.
  Assume that the Koszul complex $K_\bullet (\calC^\infty (M), J)$ is
  a free resolution of $\calC^\infty (M_0)$. Then there exists an equivariant continuous
  linear section $\ext:\calC^\infty (M_0) \to \calC^\infty (M)$,
  called \emph{extension map}, of the restriction map $\rest: \calC^\infty (M) \to \calC^\infty (M_0)$, $f \mapsto f|_{M_0}$
  together with a family $h =(h_k)_{k\in \N}$ of continuous linear maps $h_k : K_k \to K_{k+1}$ such that
  \[
    \left(  (\calC^\infty(M_0),0 ) \overset{\ext}{\underset{\rest}{\rightleftarrows}} (K_\bullet,\partial) , h \right)
  \]
  is a deformation retract. 
  This means that $\ext : (\calC^\infty(M_0),0 )  \to (K_\bullet,\partial)$
  and $\rest : (K_\bullet,\partial) \to (\calC^\infty(M_0),0 )$ are chain maps
  fulfilling $\rest \circ \ext = \id $ and 
  $\id - \ext \circ \rest = \partial h + h \partial$.
  Moreover, one can achieve that the $h_k$ are equivariant and that the side conditions
  $h\circ h = 0$, $ h_0 \circ \ext = 0$ and $ \rest \circ h_{-1} =0$ hold true.   
\end{Theorem}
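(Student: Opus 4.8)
The plan is to combine the analytic input of Theorem~\ref{Thm:ExCompatibleEquivariantAnalyticStructures} with the homological construction of \cite[Thm.~3.2]{BHP} and then to upgrade the resulting contraction so that it becomes $G$-equivariant and satisfies the side conditions; the very point of the refinement is that Theorem~\ref{Thm:ExCompatibleEquivariantAnalyticStructures} makes the ``local analyticity'' hypothesis of \cite[Thm.~3.2]{BHP} automatic.

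First I would fix, using part~(\ref{ite:analyticaction}) of Theorem~\ref{Thm:ExCompatibleEquivariantAnalyticStructures}, a real analytic structure on $M$ with respect to which the action $\Psi$, the symplectic form $\omega$, and hence the moment map $J$ are all real analytic. Then $M_0 = J^{-1}(0)$ is a real analytic -- hence Nash subanalytic -- subset of $M$, which is precisely the ``local analyticity'' input required by \cite[Thm.~3.2]{BHP}; and part~(\ref{ite:zerolevelextprop}) of the same theorem furnishes a $G$-equivariant continuous linear section $\ext:\calC^\infty(M_0)\to\calC^\infty(M)$ of $\rest$, the asserted extension map. Since by hypothesis $K_\bullet(\calC^\infty(M),J)$ is a free resolution of $\calC^\infty(M_0)$ -- equivalently, conditions (GH) and (AC) hold -- the augmented complex
\[
  \cdots \overset{\partial}{\longrightarrow} K_1 \overset{\partial}{\longrightarrow} K_0 \overset{\rest}{\longrightarrow} \calC^\infty(M_0) \longrightarrow 0
\]
is exact, and \cite[Thm.~3.2]{BHP} provides a family of continuous linear maps $h_k:K_k\to K_{k+1}$ which, together with $\ext$ and $\rest$, forms a deformation retract in the Fr\'echet category. (In \cite{BHP} these $h_k$ are obtained from continuous linear splittings of the Koszul differential that rest on the Bierstone--Schwarz division and extension theorems \cite{Schwarzbier}.)

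Next I would render the contraction equivariant. The differential $\partial$ is $G$-equivariant because $J$ is equivariant and $\wedge^\bullet\g$ carries the adjoint action, $\rest$ is equivariant because $M_0$ is $G$-invariant, and $\ext$ is equivariant by construction; moreover the $G$-action on each $K_k = \calC^\infty(M)\otimes\wedge^k\g$ is continuous for the Fr\'echet topology. Since $G$ is compact I would replace each $h_k$ by its Haar average $h_k^{\mr{av}} = \int_G (g\cdot)\circ h_k\circ (g^{-1}\cdot)\,dg$; these maps are again continuous and linear, now equivariant, and the equivariance of $\partial,\rest,\ext$ gives $\partial h^{\mr{av}} + h^{\mr{av}}\partial = \int_G (g\cdot)(\id - \ext\rest)(g^{-1}\cdot)\,dg = \id - \ext\rest$, so the averaged data is still a deformation retract.

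Finally, to impose the side conditions $h\circ h = 0$, $h_0\circ\ext = 0$ and $\rest\circ h_{-1} = 0$ I would apply the standard normalization of a deformation retract (Appendix~\ref{ToolsHomAlg}; cf.\ also \cite{BHW}): first replace $h^{\mr{av}}$ by $(\id - \ext\rest)\,h^{\mr{av}}\,(\id - \ext\rest)$ to annihilate the composites with $\ext$ and $\rest$, and then replace the resulting homotopy $h'$ by $h'\partial h'$ to annihilate $h'\circ h'$, checking at each step that the deformation-retract identities are preserved; as these formulas involve only the continuous equivariant maps $\partial,\rest,\ext$ and the current homotopy, the final $h_k$ are continuous and equivariant, completing the proof. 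The genuinely hard part -- producing continuous linear splittings of the Koszul differential along the singular zero level set $M_0$ -- already sits inside \cite[Thm.~3.2]{BHP} and ultimately in the Bierstone--Schwarz division theorem; so the main obstacle here is the comparatively modest one of matching the hypothesis of \cite[Thm.~3.2]{BHP} with what Theorem~\ref{Thm:ExCompatibleEquivariantAnalyticStructures} provides, and of confirming that averaging and the normalization step preserve continuity as well as the retract relations.
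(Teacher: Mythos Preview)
Your proposal is correct and follows essentially the same route as the paper: both invoke Theorem~\ref{Thm:ExCompatibleEquivariantAnalyticStructures} to obtain the real analytic structure and the equivariant extension map, and both rely on the Bierstone--Schwarz division theorem (you via \cite[Thm.~3.2]{BHP} as a black box, the paper by redoing that argument explicitly) to produce the continuous homotopy.

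The only noteworthy difference is in how equivariance and the side conditions are obtained. The paper averages the individual Bierstone--Schwarz splittings $\sigma_k$ before forming $h_k=\sigma_k\pi_k$, and then the side conditions $h_{k+1}h_k=0$ and $h_0\circ\ext=0$ fall out directly from $\pi_{k+1}\sigma_k=0$; no subsequent normalization is needed. You instead average the finished homotopy and then apply the Lambe-type normalization $h\mapsto (\id-\ext\,\rest)h(\id-\ext\,\rest)\mapsto h'\partial h'$. Both are valid; the paper's version is slightly more economical, while yours is more modular and makes clearer that the side conditions are a purely formal afterthought once a continuous equivariant retract exists.
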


\begin{proof}
  Let us provide a  proof emphasizing where Theorem \ref{Thm:ExCompatibleEquivariantAnalyticStructures} comes in.
  According to that theorem there exists an analytic structure on $M$ so that that the $G$-action $\Psi$ and the moment map $J$
  are real analytic. By (\ref{ite:zerolevelextprop}) in the same theorem there exists an equivariant
  extension map $\ext: \calC^\infty (M_0) \to \calC^\infty (M)$.
  It remains to construct a chain homotopy $h=(h_k)_{k\in \N}$  with the desired properties. 
  To this end we follow the idea in the proof of \cite[Thm.~3.2]{BHP} and apply the
  division theorem by Bierstone and Schwarz \cite[Thm.~0.1.3.]{Schwarzbier} which says that for any matrix $\Phi \in \Mat_{k \times l} (\calC^\omega (M) )$ of real analytic functions on an analytic manifold the image of the 
  map $\Phi_\#: \calC^\infty (M)^l \to  \calC^\infty (M)^k$ induced by matrix multiplication with $\Phi$ is
  closed and has a continuous linear split $\sigma : \im \Phi \to \calC^\infty (M)^l$. The latter means
  that $\Phi_\#\sigma =\id_{\im \Phi}$. Note that the image of such a splitting is
  closed since $\sigma \Phi_\#$ acts as identity on $\im \sigma$ which by continuity implies
  \begin{equation}
    \label{eq:closednessimage}
    \overline{\im \sigma} =  \sigma \Phi_\# (\overline{\im \sigma}) \subset \im \sigma \ .
  \end{equation}
  Now consider the following sequence which is exact by assumption:
  \begin{equation}
    \label{eq:augmentedkoszul}
    0 \longleftarrow \calC^\infty (M_0) \overset{\rest}{\longleftarrow} K_0 \overset{\partial_1}{\longleftarrow}
    K_1  \overset{\partial_2}{\longleftarrow} \ldots \overset{\partial_d}{\longleftarrow} K_d
    \longleftarrow 0 \ .
  \end{equation}
  By the division theorem of Bierstone and Schwarz,  $\im \partial_k \subset K_{k-1}$ is closed
  for $k=1,\ldots, d$ and there exists  for each such  $k$ a continuous linear splitting
  $\sigma_{k-1} : \im \partial_k \to K_k$ of $\partial_k$.
  By equivariance of the $\partial_k$ and after possibly averaging over $G$ one can assume that each
  $\sigma_k$ is equivariant. For the particular case $k=-1$ we put $\sigma_{-1} = \ext$.
  Finally we assume $\sigma_l$ to be $0$ for those $l$  for which it has not been defined yet.
  By exactness of the sequence \ref{eq:augmentedkoszul} one obtains the direct sum decompositions
  $K_k = \im \sigma_{k-1} \oplus \im \partial_{k+1}$ for $k=0,\ldots d$. 
  We know already by the division theorem that 
  $\im \partial_{k+1}$ is closed. The subspace $\im \sigma_{k-1}$ is so, too,
  by the above argument involving Eq.~\eqref{eq:closednessimage}.
  Let $\pi_k : K_k \to \im \partial_{k+1}$ denote the canonical projection along 
  $\im \sigma_{k-1}$ for $k=0,\ldots d$ and put $\pi_{-1} := \id_{\calC^\infty (M_0)}$.
  Then $\pi_k$ is continuous and equivariant since  $\im \partial_{k+1}$  and $\im \sigma_{k-1}$
  are closed $G$-invariant subspaces of $K_k$. Furthermore,
  $\sigma_{k-1} \partial_k = \id_{K_k} - \pi_k$ for $k = 0,\ldots ,d$ since both sides act in the same way
  on $\im \sigma_{k-1}$ and $\im \partial_{k+1}$. Now let
  \[ h_k :=
    \begin{cases}
      \sigma_k \pi_k & \text{for } k= 0,\ldots , d-1,\\
      0 & \text{else} \ .
    \end{cases}
  \]
  Then compute
  \[
        \id_{\calC^\infty (M_0)} - \ext \circ \rest = \pi_0 = \partial_1 h_0 = \partial_1 h_0 + h_{-1} \partial_0 
  \]
  and for $k>0$
  \[
    \partial_{k+1}h_k + h_{k-1}\partial_k = \partial_{k+1}\sigma_k\pi_k + \sigma_{k-1}\pi_{k-1}\partial_k=
     \pi_k + \sigma_{k-1} \partial_k = \id_{K_k} \ . 
  \]
  Thus $h=(h_k)_{k\in \N}$ is the desired chain homotopy.
  Since $h_{k+1} h_k = \sigma_{k+1}\pi_{k+1}\sigma_k\pi_k =0$ for $k=0 ,\ldots ,d$
  and $h_0\, \ext = \sigma_0\pi_0\sigma_{-1}=0$ ,
  the first and second side conditions are fulfilled. Since $h_{-1}=0$ by construction, the third side
  condition holds trivially. 
\end{proof}

It later will turn out to be convenient to write the Koszul complex as a cohomological complex that is we put
$K^k = K_{-k} (\calC^\infty (M),J)$ for $k \in - \N$ and $K^k = 0$ for $k\in \N \setminus \{ 0 \}$. Note that
$K^\bullet$ is a bounded cochain complex. 

The second crucial ingredient in the construction of the BRST complex is the Chevalley--Eilenberg complex
$\left( \CE^\bullet (\g, \calC^\infty(M)) , \delta\right)$ of the $\g$-module $\calC^\infty (M)$,
see Example \ref{AppCEcomplex}.
Observe that the space $\calC^\infty (M)$ of smooth functions on $M$ carries a natural structure of
$\g$-module. An element $X\in \g$ hereby acts by the associated fundamental vector field $X_M$. More
precisely, the $\g$-module structure on  $\calC^\infty (M)$ is given by the map
\begin{equation}
  \label{Eq:ModuleStructureMap}
    L : \g \times \calC^\infty (M) \to \calC^\infty (M),\: (X,f) \mapsto L_X f = X_M f \ . 
\end{equation}

\begin{Remark}
  Since $\calC^\infty (M)$ is a commutative algebra, the Chevalley--Eilenberg complex becomes
  a differential graded algebra with the algebra structure given by the tensor product
  of the graded commutative algebra $\Lambda^\bullet\g^*$ and the commutative algebra $\calC^\infty (M)$.
  It is straightforward to check that the product of this algebra structure is graded commutative
  and that the Chevalley--Eilenberg coboundary then coincides with the unique graded linear  map
  $\delta : \CE^\bullet (\g, \calC^\infty(M)) \to \CE^\bullet (\g, \calC^\infty(M))$ of degree $+1$
  which satisfies the graded Leibniz identity, acts on elements $f \in \calC^\infty (M) $ of degree $0$ by
  \[
    \delta f (\xi) = \xi_M f \quad \text{for all } \xi \in \g
  \]
  and on elements of degree $1$ of the form  $\alpha \otimes 1$ with $\alpha \in \g^*$ by
  \[
    \delta (\alpha \otimes 1) (\xi,\zeta)= - \alpha ( [\xi, \zeta]) \quad \text{for all } \xi,\zeta \in \g \ .
  \]
\end{Remark}

\begin{Lemma}
\label{Lem:InducedModuleStructureConstraintSet}
  The $\g$-module structure $L$ on $\calC^\infty (M)$ leaves the vanishing ideal $I_{M_0}$ invariant and
  hence induces a  $\g$-module structure $L^0 : \g \times \calC^\infty (M_0) \to \calC^\infty (M_0)$
  on the quotient  $\calC^\infty (M_0) \cong  \calC^\infty (M) / I_{M_0}$. The action of an element $X \in \g$
  on $\calC^\infty (M_0)$ is then given by
  \begin{equation}
    \label{eq:ActionLieAlgElementFctConstraintSurf}
      L^0_X  = \rest \circ L_X \circ \ext \ , 
  \end{equation}
  where $\ext:  \calC^\infty (M_0) \to \calC^\infty (M)$ is a $G$-equivariant extension map.   
  In case $G$ is a connected compact Lie group one has with respect to this $\g$-module structure:
  \begin{equation}
    \label{Eq:ZeroLieAlgebraCohomologyConstraintSurface}
    H^0 (\g,\calC^\infty (M_0)) = \left(\calC^\infty (M_0)\right)^\g =
    \left(\calC^\infty (M_0)\right)^G \cong \calC^\infty (M\red G) \ .
  \end{equation}
\end{Lemma}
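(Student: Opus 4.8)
\emph{Proof proposal.} The plan is to treat the three assertions in turn: invariance of $I_{M_0}$ under $L$ together with the formula for the induced action $L^0$, the identification $H^0(\g,\calC^\infty(M_0)) = (\calC^\infty(M_0))^\g$, and finally the chain $(\calC^\infty(M_0))^\g = (\calC^\infty(M_0))^G \cong \calC^\infty(M\red G)$.

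First I would show that $L$ preserves $I_{M_0}$. Since $J$ is $G$-equivariant and $0\in\g^*$ is a fixed point of the coadjoint action, the zero level set $M_0 = J^{-1}(0)$ is $\Psi$-invariant. The flow of the fundamental vector field $X_M$ is $t\mapsto\Psi_{\exp(tX)}$, which maps $M_0$ into itself, so $X_M$ is tangent to $M_0$ along $M_0$. Hence for $f\in I_{M_0}$ and $p\in M_0$ one has $(L_X f)(p) = \frac{d}{dt}\Big|_{t=0} f\big(\Psi_{\exp(tX)}(p)\big) = 0$, because the curve remains in $M_0$ where $f$ vanishes; this does not require $M_0$ to be smooth. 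Therefore $L_X(I_{M_0})\subset I_{M_0}$ and $L_X$ descends to an endomorphism $L^0_X$ of $\calC^\infty(M_0)\cong\calC^\infty(M)/I_{M_0}$, the assignment $X\mapsto L^0_X$ inheriting the $\g$-module (Lie algebra action) property from $L$. The formula $L^0_X = \rest\circ L_X\circ\ext$ then holds because $\ext(\bar f)$ is one admissible extension of $\bar f\in\calC^\infty(M_0)$, and any two extensions differ by an element of $I_{M_0}$ whose $L_X$-image again lies in $I_{M_0}$ and thus restricts to $0$ on $M_0$; this also shows independence of the particular equivariant extension.

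Second, by the explicit description of the Chevalley--Eilenberg differential recalled above, $\delta$ acts in degree $0$ on the $\g$-module $\calC^\infty(M_0)$ by $(\delta f)(X) = L^0_X f$, so $H^0(\g,\calC^\infty(M_0)) = \ker\big(\delta\colon\CE^0\to\CE^1\big) = \{\, f : L^0_X f = 0 \text{ for all } X\in\g \,\} = (\calC^\infty(M_0))^\g$, which is the first equality. For $(\calC^\infty(M_0))^\g = (\calC^\infty(M_0))^G$ I would use that the $G$-action on $\calC^\infty(M_0)$ is the one induced by the $\Psi$-action on $M_0$ and that it integrates the infinitesimal action $L^0$; since $G$ is compact and connected it is generated by $\exp(\g)$, so a function is fixed by the $G$-action if and only if it is annihilated by every $L^0_X$.

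Finally, for $(\calC^\infty(M_0))^G \cong \calC^\infty(M\red G)$ I would apply the definition of the structure sheaf $\calC^\infty_{M\red G}$ with $U = M\red G$ and $\widetilde U = M$, which gives $\calC^\infty(M\red G) = (\calC^\infty(M))^G/(I_{M_0})^G$. The restriction map $\rest$ sends $(\calC^\infty(M))^G$ into $(\calC^\infty(M_0))^G$ with kernel exactly $(\calC^\infty(M))^G\cap I_{M_0} = (I_{M_0})^G$, and it is surjective onto $(\calC^\infty(M_0))^G$: for $\bar f\in(\calC^\infty(M_0))^G$ the equivariant extension map of Theorem~\ref{Thm:ExCompatibleEquivariantAnalyticStructures} (or, alternatively, averaging an arbitrary extension over the normalized Haar measure on $G$) produces an invariant preimage. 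Hence $(\calC^\infty(M_0))^G\cong(\calC^\infty(M))^G/(I_{M_0})^G = \calC^\infty(M\red G)$. The step I expect to require the most care is this last exchange of ``taking $G$-invariants'' with ``passing to the quotient by $I_{M_0}$'', i.e.\ the surjectivity of $\rest$ on invariant functions; this is exactly where compactness of $G$ enters — through the equivariant extension map or Haar averaging — and it is what makes the functor of $G$-invariants exact on $0\to I_{M_0}\to\calC^\infty(M)\to\calC^\infty(M_0)\to 0$.
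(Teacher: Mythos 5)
Your proof is correct and follows the same route as the paper: you establish invariance of $I_{M_0}$ by the identical tangency argument, read off $L^0_X = \rest\circ L_X\circ\ext$ from the existence of an equivariant extension, and pass from $\g$-invariants to $G$-invariants via connectedness. You additionally spell out the last isomorphism $(\calC^\infty(M_0))^G\cong\calC^\infty(M\red G)$ (surjectivity of $\rest$ on invariants via the equivariant extension, with kernel $(I_{M_0})^G$), which the paper treats as a tacit identification coming from the definition of the structure sheaf; that added detail is accurate and harmless.
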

\begin{proof}
  Let $f\in I_{M_0}$, $X\in \g$ and $p\in M_0$. Then
  \[
    X_Mf (p) = \frac{d}{dt} \left. f(\exp(tX) \cdot p) \right|_{t=0} = 0 
  \]
  since $t \mapsto \exp(tX) \cdot p$ is a smooth path in $M_0$ by $G$-invariance.
  This means that $L$ leaves the ideal $I_{M_0}$ invariant.
  The induced $\g$-module structure on the quotient $\calC^\infty (M_0)$
  can be written in the form \eqref{eq:ActionLieAlgElementFctConstraintSurf} 
  since by Theorem \ref{Thm:ExCompatibleEquivariantAnalyticStructures} an extension map $\ext$ exists.  
   
To prove \eqref{Eq:ZeroLieAlgebraCohomologyConstraintSurface} observe that
$H^0 (\g,\calC^\infty (M)) = \calC^\infty (M)^\g$ for any $G$-manifold $M$ and
that $H^0 (\g,\calC^\infty (M_0)) = \calC^\infty (M_0)^\g$ for the constraint surface of the
Hamiltonian system. In the case of a connected compact Lie group  $G$ this implies that
$H^0 (\g,\calC^\infty (M)) = \calC^\infty (M)^G$ and that
$H^0 (\g,\calC^\infty (M_0)) = \calC^\infty (M_0)^G$.
\end{proof}

So, finally, we have all the tools to construct the classical BRST complex
$\calA^\bullet$ of a $G$-Hamiltonian system $(M,\omega,\Psi,J)$. As a graded algebra, $\calA^\bullet$
is defined as the graded tensor product of the Chevalley--Eilenberg complex
$\CE^\bullet (\g, \calC^\infty(M))$ with the Koszul complex
$(K^\bullet,\partial)$, that is
\begin{equation}
  \label{eq:classicalBRSTcomplex}
  \calA^\bullet =\CE^\bullet (\g, \calC^\infty(M)) \otimes_{\calC^\infty (M)} (K^\bullet,\partial) \ .
\end{equation}   
Expanding the right hand side one obtains for $n\in \Z$  
\begin{equation}
  \label{eq:BRSTcomponentsCE}
  \begin{split}
    \calA^n \,
    & = \calC^\infty (M) \otimes \bigoplus_{k,l \in \Z \atop k+l = n } \Lambda^k \g^* \otimes \Lambda^{-l} \g =\\
    & = S^n_{\calC^\infty (M)}  (\g^*[-1] \oplus \g[1]) =
    \bigoplus_{k,l \in \Z \atop k+l = n } \CE^k (\g,S^l_{\calC^\infty(M)} (\g[1]))  \ .
  \end{split}
\end{equation}
Elements of $\g^*$  thus have degree $+1$ and are called \emph{ghosts} in the physics literature,
whereas elements of $\g$ have degree $-1$ and are named \emph{antighosts}.
Note that $\bigoplus\limits_{k,l \in \Z \atop k+l = n } \Lambda^k \g^* \otimes \Lambda^{-l} \g$ can be
interpreted as the degree $n$ vector space underlying the free graded commutative algebra
\[ S^\bullet (\g^*[-1] \oplus \g[1]) \]
on the graded vector space $\g^*[-1] \oplus \g[1]$. Under this identification, the product map $\mu$ on
$S^\bullet (\g^*[-1]\oplus \g[1])$ is the unique graded commutative associative bilinear operation
fulfilling the equalities
\[
  \mu ( \alpha \otimes \beta) = \alpha \wedge \beta , \quad  \mu ( X \otimes Y) = X \wedge Y,
  \quad \text{and} \quad
  \mu ( \alpha \otimes X ) = - \mu ( X \otimes \alpha) = \alpha \otimes X  
\]
for all $\alpha,\beta \in \g^*$ and $X,Y \in \g$. Sometimes we will write $v \wedge w$
for the product $\mu(v,w)$ of two elements $v,w \in S^\bullet (\g^*[-1]\oplus \g[1])$. 
The BRST complex can now be written in the form
\begin{equation}
  \label{eq:BRSTrepresentation}
  \calA^\bullet = 
  \calC^\infty (M) \otimes S^\bullet (\g^*[-1] \oplus \g[1]) \ .
\end{equation}
The differentials $\partial : K^\bullet \to K^\bullet$ and
$\delta : \CE^\bullet (\g, \calC^\infty(M)) \to \CE^\bullet (\g, \calC^\infty(M))$  
extend in a natural way to graded derivations on $\calA^\bullet$ by letting them act trivially on
$\g^*[-1]$ and $\g[1])$, respectively. The thus extended differentials supercommute, so
\[
   \calD = \delta + 2 \partial 
\]
is a differential of degree $+1$ on $\calA^\bullet$. In addition, $\calA^\bullet$ inherits from
$S^\bullet (\g^*[-1]\oplus \g[1])$ a graded commutative associative product which we also denote
by $\mu$. Thus $\big( \calA^\bullet, \mu , \calD\big)$ becomes a differential graded commutative
$\calC^\infty (M)$-algebra which one calls the \emph{classical BRST algebra}.
The BRST algebra also carries a natural Poisson bracket. For its definition we need some more notation.
To this end let
\[
  i : \g^*\oplus \g \to \End\left( S^\bullet (\g^*[-1]\oplus \g[1]) \right)
\]
be \emph{left insertion} which means let $i$ be the unique linear map
from $\g^*\oplus \g$ to the graded endomorphism ring of
$S^\bullet (\g^*[-1]\oplus \g[1])$ such that
\[
  i(X) (\omega \otimes \Zeta) = (X \lrcorner \omega) \otimes \Zeta \quad\text{and}\quad
  i(\alpha) (\omega \otimes \Zeta) = (-1)^k \omega \otimes (\alpha \lrcorner \Zeta) 
\]
for all $\alpha \in \g^*$, $X \in \g$, $\omega \in \Lambda^k \g^*$ and $\Zeta \in  \Lambda^l \g $. 
By \emph{right insertion} we understand the unique linear map
\[
  j : \g^*\oplus \g \to \End\left( S^\bullet (\g^*[-1]\oplus \g[1]) \right)
\]
such that $j(v) x = (-1)^{n+1} i(v)x$ for all $v\in \g^*\oplus \g$ and
$x \in S^n (\g^*[-1]\oplus \g[1])$. Then we define the \emph{Poisson endomorphisms}
$ P$ and $ P^*$ on $ S^\bullet (\g^*[-1]\oplus \g[1]) \otimes S^\bullet (\g^*[-1]\oplus \g[1])$
by
\begin{equation}
  \label{eq:PoissonEndos}
  P = \sum_{l=1}^d j(\varepsilon^l) \otimes i(E_l) \quad\text{and}\quad
  P^* = \sum_{l=1}^d j(E_l) \otimes i(\varepsilon^l) \ ,
\end{equation}
where as before $(E_l,\ldots,E_d)$ is a basis of $\g$ and $(\varepsilon^l,\ldots,\varepsilon^d)$ its dual basis.
Note that $P$ and $P^*$ do not depend on the particular choice of these bases.
Now we can subsume and define the Poisson bracket on the BRST algebra.
See \cite[3.10]{HPhD}, \cite[Sec.~4]{BHW} and \cite[Sec.~4]{BHP} for further details
and a proof. 

\begin{Proposition}\label{prop:PoissonbracketBRSTalgebra} 
  As a graded algebra, the classical BRST algebra  $\calA^\bullet$ of a $G$-Hamiltonian system $(M,\omega,\Psi,J)$ coincides
  with the free graded commutative $\calC^\infty (M)$-algebra generated by $\g^*[-1] \oplus \g[1]$. 
  Moreover, $\calA^\bullet$ carries an even graded Poisson bracket $\{ - ,- \}_\calA$ given by
  \[
    \{ f \, v ,g\, w\}_\calA =
    \{ f,g\}_M \, \mu (v, w) + 2 fg \, \mu \! \left( (P + P^*) (v\otimes w) \right) 
  \]
  for all $f,g \in \calC^\infty (M)$ and $v,w\in  S^\bullet(\g^*[-1] \oplus \g[1])$.
  Finally, the element
  \begin{equation}
    \label{eq:classicalBRSTcharge}   
    \theta = - \frac 14 [ - , - ] + J \in \calA^1 
  \end{equation}
  satisfies $\{ \theta, \theta\}_\calA =0$ and $\calD = \{ \theta , - \}_\calA$ which again entails that
  $\calD^2 =0$ and that $(\calA^\bullet,\mu,\calD)$ is a differential graded algebra. One calls $\theta$ the
  \emph{classical BRST charge} and $\calD$ the \emph{classical BRST differential}. 
\end{Proposition}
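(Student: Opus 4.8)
\emph{The graded algebra.} The first assertion is a matter of recording earlier identifications. By construction $\CE^\bullet(\g,\calC^\infty(M)) = \calC^\infty(M)\otimes\Lambda^\bullet\g^*$ is the free graded commutative $\calC^\infty(M)$-algebra on $\g^*$ placed in degree $+1$, and $K^\bullet = \calC^\infty(M)\otimes\Lambda^\bullet\g$ is the free graded commutative $\calC^\infty(M)$-algebra on $\g$ placed in degree $-1$ (for an odd generator ``free graded commutative'' means exterior, which is why the ghost and antighost powers terminate). Taking the graded tensor product over $\calC^\infty(M)$ and comparing with \eqref{eq:BRSTrepresentation} identifies $\calA^\bullet$ with $\calC^\infty(M)\otimes S^\bullet(\g^*[-1]\oplus\g[1])$ as a graded algebra, the product being the $\mu$ already fixed by the sign rule for commuting ghosts past antighosts; there is nothing more to check.

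\emph{The Poisson bracket.} Since the displayed formula is by design a biderivation in each argument, it is enough to show it is well defined and to verify evenness, graded antisymmetry and the graded Jacobi identity, and for the last two it suffices to test on the generators $\calC^\infty(M)$, $\g^*$, $\g$. The first step is to evaluate $\mu\circ(P+P^*)$ on pairs of generators from $\g^*\oplus\g$ directly from the definitions of the left and right insertions $i$ and $j$: one gets that two ghosts have vanishing bracket, two antighosts have vanishing bracket, and $\{\alpha,X\}_\calA=\{X,\alpha\}_\calA=2\langle\alpha,X\rangle$ is constant. Together with $\{f,g\}_\calA=\{f,g\}_M$ and $\{f,v\}_\calA=0$ for $f,g\in\calC^\infty(M)$ and $v$ a Grassmann generator, this exhibits $\{-,-\}_\calA$ as the graded tensor product of the Poisson algebra $\bigl(\calC^\infty(M),\{-,-\}_M\bigr)$ with the constant-coefficient ``ghost'' Poisson algebra on $S^\bullet(\g^*[-1]\oplus\g[1])$ which pairs $\g^*[-1]$ with $\g[1]$. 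Graded antisymmetry then comes out of the sign $(-1)^{n+1}$ carried by the right insertion $j$ combined with antisymmetry of $\{-,-\}_M$; the graded Jacobi identity is automatic on the Grassmann generators because that bracket has constant coefficients, and the mixed Jacobi identities reduce to the Jacobi identity for $\{-,-\}_M$ and the Leibniz rule. Alternatively one may simply invoke the computations in \cite{HPhD,BHW,BHP}.

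\emph{The BRST charge.} That $\theta=-\tfrac14[-,-]+J$ lies in $\calA^1$ is immediate from the degree count, since $\Lambda^2\g^*\otimes\g$ and $\calC^\infty(M)\otimes\g^*$ both sit in degree $1$. For the master equation $\{\theta,\theta\}_\calA=0$ I would split $\theta=\theta_{\mathrm{Lie}}+J$ with $\theta_{\mathrm{Lie}}=-\tfrac14[-,-]$ and expand $\{\theta,\theta\}_\calA=\{\theta_{\mathrm{Lie}},\theta_{\mathrm{Lie}}\}_\calA+2\{\theta_{\mathrm{Lie}},J\}_\calA+\{J,J\}_\calA$: the purely Grassmann term vanishes by the Jacobi identity in $\g$ (it is $\delta^2=0$ in disguise), while the mixed term $2\{\theta_{\mathrm{Lie}},J\}_\calA$ and the Koszul term $\{J,J\}_\calA=\sum_{k,l}\{J_k,J_l\}_M\,\varepsilon^k\wedge\varepsilon^l$ cancel by virtue of the moment map identity $\{J_X,J_Y\}_M=J_{[X,Y]}$ together with the $\g$-equivariance of $J$. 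For $\calD=\{\theta,-\}_\calA$, observe that both $\calD=\delta+2\partial$ and $\{\theta,-\}_\calA$ are graded derivations of $(\calA^\bullet,\mu)$ of degree $+1$, so it is enough to verify the identity on the generators $\calC^\infty(M)$, $\g^*$ and $\g$; this is a direct computation in which the $J$-part of $\theta$ reproduces the module action of $\delta$ on $\calC^\infty(M)$ (because fundamental vector fields are Hamiltonian vector fields of the $J_X$) together with the Koszul differential $2\partial$ (because $\partial E_l=J_l$ and $(P+P^*)$ pairs $\varepsilon^l$ with $E_l$), while $\theta_{\mathrm{Lie}}$ reproduces the Chevalley--Eilenberg part of $\delta$ (the terms $-\alpha([-,-])$). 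Finally $\calD^2=\{\theta,\{\theta,-\}\}_\calA=\tfrac12\{\{\theta,\theta\}_\calA,-\}_\calA=0$ by the graded Jacobi identity applied to the odd element $\theta$, and $\calD$ is a $\mu$-derivation by the Leibniz rule, so $(\calA^\bullet,\mu,\calD)$ is a differential graded algebra.

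\emph{The main obstacle.} The genuinely delicate point is the sign and normalization bookkeeping in the graded setting --- extracting graded antisymmetry and graded Jacobi of $\{-,-\}_\calA$ from the interplay of left and right insertions, and above all the cancellation in $\{\theta,\theta\}_\calA=0$, which works precisely because the constant $-\tfrac14$ multiplying the Lie part of the BRST charge is matched to the factor $2$ in the bracket formula and to the $2\partial$ in $\calD$. I expect to manage this by fixing a basis $(E_l)$ of $\g$ and reducing all the Grassmann computations to structure constants together with the two identities $\{J_k,J_l\}_M=\sum_m C_{kl}^m J_m$ and $\sum_{\mathrm{cyc}}C_{ab}^e C_{ec}^d=0$.
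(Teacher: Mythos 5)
The paper does not prove this Proposition itself; it defers to \cite[3.10]{HPhD}, \cite[Sec.~4]{BHW} and \cite[Sec.~4]{BHP} for the argument, so there is no ``paper's own proof'' to compare against directly. Your reconstruction follows exactly the standard route taken in those references: identify the graded algebra as free, evaluate $\mu\circ(P+P^*)$ on generators to see that $\{-,-\}_\calA$ is the graded tensor product of $(\calC^\infty(M),\{-,-\}_M)$ with the constant-coefficient ghost pairing, reduce graded antisymmetry and Jacobi to the generators, split $\theta$ into a pure Grassmann part and the moment map part, and verify the master equation using the Lie algebra Jacobi identity together with $\{J_X,J_Y\}_M=\pm J_{[X,Y]}$. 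Your verification of $\{\alpha,\beta\}_\calA=\{X,Y\}_\calA=0$ and $\{\alpha,X\}_\calA=\{X,\alpha\}_\calA=2\langle\alpha,X\rangle$ from the definitions of $i$, $j$, $P$, $P^*$ is correct, as is the reduction of $\calD=\{\theta,-\}_\calA$ to a check on generators (both sides being $\mu$-derivations of degree $+1$), and the argument $\calD^2 = \tfrac12\{\{\theta,\theta\}_\calA,-\}_\calA=0$ via the graded Jacobi identity. You are right to flag the normalization bookkeeping as the genuine danger: the factor $-\tfrac14$ in $\theta$, the factor $2$ in the bracket formula, the factor $2$ in $\calD=\delta+2\partial$, the sign in $j(v)x=(-1)^{n+1}i(v)x$, the overall normalization implicit in writing $[-,-]\in\Lambda^2\g^*\otimes\g$, and the sign of the moment map identity $\{J_X,J_Y\}_M$ (with the paper's convention $X_f\lrcorner\omega=-\mathrm{d}f$ and \cite[Prop.~10.1.14]{BuchI}) must all be made consistent before the three pieces of $\{\theta,\theta\}_\calA$ really cancel; working in a basis with structure constants, as you propose, is the right way to nail this down. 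Nothing in your outline is wrong, and nothing essential is missing; the only caveat is that the sketch of the mixed-term cancellation presupposes a particular, internally consistent set of sign and normalization conventions which you would need to fix once and for all before calling the calculation complete.
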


The crucial observation from \cite[Thm.~4.1]{BHP} now is that under the assumption of the generating hypothesis
\hyperref[GenCond]{(GH)} and the acyclicity hypothesis \hyperref[AcycCond]{(AH)} the BRST cochain complex
$(\calA^\bullet,\calD)$ and the Chevalley--Eilenberg complex  $\left( \CE^\bullet (\g, \calC^\infty(M_0)), \delta^0\right)$
with values in the $\g$-module of smooth functions on the constraint surface are quasi-isomorphic in the additive
category of Fr\'echet spaces. Note that by $\delta^0$ we denote here the Chevalley--Eilenberg coboundary
with respect to the $\g$-representation $L^0$ on $\calC^\infty(M_0)$.

\begin{Theorem}\label{thm:homologicalreductionPoissonbracket}
  Let $(M,\omega,\Psi,J)$ be a $G$-Hamiltonian system for which  the Koszul complex
  $K_\bullet (\calC^\infty(M),J)$ is a free resolution of $\calC^\infty(M_0)$.
  Choose an equivariant continuous extension map $\ext:\calC^\infty (M_0) \to \calC^\infty (M)$
  and an equivariant continuous homotopy $h=(h_k)_{k\in \N}$ according to
  Theorem \ref{Thm:KoszulResolutionContraintSurface}.   Then 
  \begin{equation}
    \label{eq:retractclassicalBRSTcomplex}
     \left(  \left(\CE^\bullet (\g, \calC^\infty(M_0)),\delta^0\right)
    \overset{\ext}{\underset{\rest}{\rightleftarrows}} (\calA^\bullet,\calD) , \frac 12 h \right)
  \end{equation}
  is a deformation retract. If $G$ is connected, the Poisson bracket of two elements $f,g\in \calC^\infty (M\red G)$
  can be recovered by the identity
  \begin{equation}
    \label{eq:reducedPoissonbracketBRSTbracket}
     \{f,g \}_{M\red G} = \rest \{\ext(f),\ext(g) \}_{\cal A} 
   \end{equation}
   under the natural identifications
   $\calC^\infty (M\red G) \cong \calC^\infty (M_0)^G = \calC^\infty (M_0)^\g = H^0 (\g ,\calC^\infty (M_0))$.
\end{Theorem}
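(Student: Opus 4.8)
The plan is to prove the two assertions separately. The deformation retract \eqref{eq:retractclassicalBRSTcomplex} will be constructed by homological perturbation, starting from the Koszul retract provided by Theorem \ref{Thm:KoszulResolutionContraintSurface}; the formula \eqref{eq:reducedPoissonbracketBRSTbracket} for the reduced Poisson bracket will then follow essentially directly from the definition of the reduced bracket, once the BRST bracket is available.

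For the retract, I would first take the equivariant continuous deformation retract $\bigl((\calC^\infty(M_0),0)\overset{\ext}{\underset{\rest}{\rightleftarrows}}(K^\bullet,\partial),h\bigr)$ of Theorem \ref{Thm:KoszulResolutionContraintSurface}, together with its side conditions $h\circ h=0$, $h_0\circ\ext=0$, $\rest\circ h_{-1}=0$, and tensor it over $\RR$ with the identity on $\Lambda^\bullet\g^*$ (carrying the zero differential). Since $\calA^\bullet\cong\Lambda^\bullet\g^*\otimes_\RR K^\bullet$ as graded Fréchet spaces and $2\partial$ acts only on the Koszul factor, this gives, after rescaling the homotopy by $\tfrac12$, an equivariant continuous deformation retract of $(\calA^\bullet,2\partial)$ onto $\bigl(\CE^\bullet(\g,\calC^\infty(M_0)),0\bigr)$, still satisfying the three side conditions. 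Now $\calD=\delta+2\partial$ with $\delta$ and $2\partial$ supercommuting, so $\delta$ is a perturbation of $2\partial$. The decisive point is that $\tfrac12 h$ strictly lowers the Koszul degree, whereas $\delta$ preserves it, and the Koszul degree on $\calA^\bullet$ only runs through $-d,\dots,0$; hence $(\tfrac12 h\,\delta)^{d+1}=0$, the operator $\id-\tfrac12 h\,\delta$ is invertible with inverse a finite sum of continuous $G$-equivariant maps, and the homological perturbation lemma (Appendix \ref{ToolsHomAlg}) applies within the additive category of Fréchet spaces. It produces an equivariant continuous deformation retract of $(\calA^\bullet,\calD)$ onto $\bigl(\CE^\bullet(\g,\calC^\infty(M_0)),\delta'\bigr)$, with structure maps and homotopy given by the standard perturbation expressions in $\ext$, $\rest$, $\tfrac12 h$ and $\delta$; the projection and the small differential come out exactly as $\rest$ and $\delta^0$, while the section and the homotopy differ from $\ext$ and $\tfrac12 h$ only by correction terms of strictly negative Koszul degree, so that with the usual identification of notation this is the stated retract.

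The identification $\delta'=\delta^0$ is the crux. Writing $\delta'=\rest\circ\delta\circ(\id-\tfrac12 h\,\delta)^{-1}\circ\ext$, every term beyond the leading one contains a factor $\tfrac12 h$, after which one lands in strictly negative Koszul degree where $\rest$ vanishes, and $\delta$ does not restore the degree; so $\delta'=\rest\circ\delta\circ\ext$. Evaluating on the generators $\calC^\infty(M_0)$ and $\g^*$ — using that $\rest$ and $\ext$ are the identity on the $\Lambda^\bullet\g^*$-factor, that $\delta$ is a derivation restricting to the Chevalley--Eilenberg differential of $\calC^\infty(M)$, that $\rest\circ\ext=\id$, and that the $\g$-action preserves $I_{M_0}$ (Lemma \ref{Lem:InducedModuleStructureConstraintSet}) — shows that $\rest\circ\delta\circ\ext$ is exactly the Chevalley--Eilenberg coboundary $\delta^0$ for the induced module structure $L^0=\rest\circ L\circ\ext$, which completes the first assertion. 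For the Poisson bracket, assume $G$ connected and take $f,g\in\calC^\infty(M\red G)$, identified with $G$-invariant functions in $\calC^\infty(M_0)^G=H^0(\g,\calC^\infty(M_0))$ by Lemma \ref{Lem:InducedModuleStructureConstraintSet}; with $\ext$ chosen $G$-equivariant, $\ext f$ and $\ext g$ are $G$-invariant functions on $M$ restricting to $f$ and $g$ on $M_0$. Since they have BRST-degree $0$, the correction term $2\,\ext f\,\ext g\,\mu\!\bigl((P+P^*)(1\otimes1)\bigr)$ of Proposition \ref{prop:PoissonbracketBRSTalgebra} vanishes, so $\{\ext f,\ext g\}_\calA=\{\ext f,\ext g\}_M$; then the defining property \eqref{eq:Poissonbracketreducedspace} of the reduced bracket, applied to these $G$-invariant extensions, gives $\{f,g\}_{M\red G}\circ\pi=\{\ext f,\ext g\}_M|_{M_0}=\rest\{\ext f,\ext g\}_\calA$, which is \eqref{eq:reducedPoissonbracketBRSTbracket} under the identification $\calC^\infty(M\red G)\cong\calC^\infty(M_0)^G$.

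The step I expect to be the main obstacle is the perturbation step itself: getting the degree conventions exactly right so that the Koszul homotopy is genuinely strictly decreasing on a bounded grading — which is what makes all perturbation series finite and, crucially, keeps every map produced by the lemma continuous and $G$-equivariant (this is precisely where the strengthening of Theorem \ref{Thm:KoszulResolutionContraintSurface}, hence Theorem \ref{Thm:ExCompatibleEquivariantAnalyticStructures}, enters) — together with the generator-by-generator check that the surviving transferred differential $\rest\circ\delta\circ\ext$ really equals $\delta^0$ rather than merely agreeing with it in low degrees.
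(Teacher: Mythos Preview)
Your approach is the same as the paper's: tensor the equivariant Koszul retract of Theorem \ref{Thm:KoszulResolutionContraintSurface} with $\Lambda^\bullet\g^*$, view $\calD = 2\partial + \delta$ as a perturbation of $2\partial$, and apply the Perturbation Lemma (the paper cites \cite[Thm.~4.1]{BHP} for the details you spell out). The argument for \eqref{eq:reducedPoissonbracketBRSTbracket} via equivariance of $\ext$ and Proposition \ref{prop:PoissonbracketBRSTalgebra} is likewise exactly the one the paper sketches.

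One point deserves sharpening. You write that the perturbed section and homotopy ``differ from $\ext$ and $\tfrac12 h$ only by correction terms of strictly negative Koszul degree, so that with the usual identification of notation this is the stated retract'', but the theorem asserts they are \emph{literally} $\ext$ and $\tfrac12 h$. This follows from the equivariance hypotheses you already have in hand: since $h$ is $\g$-equivariant one checks that $\delta\,(\id\otimes h) + (\id\otimes h)\,\delta = 0$ on $\calA^\bullet$ (the module-action part cancels by $[L_X,h]=0$, the Lie-bracket part by the Koszul sign), so $th+ht=0$ in the notation of the Perturbation Lemma \ref{thm:perturbationlemma} and the new homotopy $H = \tfrac12 h(\id + th+ht)^{-1}$ is exactly $\tfrac12 h$; and since $\ext$ is $\g$-equivariant one has $\delta\,\ext = \ext\,\delta^0$, so the new section $\ext - H(\delta\,\ext - \ext\,\delta^0)$ is exactly $\ext$. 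Once this is observed, your nilpotence argument via bounded Koszul degree and your separate ``crux'' verification that $\delta'=\rest\,\delta\,\ext=\delta^0$ both become immediate.
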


\begin{proof}
  Treating $\calD$ as a perturbation of $2\partial$, we can apply the Perturbation Lemma  \ref{thm:perturbationlemma} to the deformation
  retract provided by Theorem \ref{Thm:KoszulResolutionContraintSurface}.
  This yields that \eqref{eq:retractclassicalBRSTcomplex} is a deformation retract.
  See the proof of \cite[Thm.~4.1]{BHP} for details.
  The equality \eqref{eq:reducedPoissonbracketBRSTbracket} follows immediately from \eqref{eq:Poissonbracketreducedspace},
  equivariance of the extension map $\ext$ and the definition of the Poisson bracket $\{ - ,- \}_{\cal A}$
  in Proposition \ref{prop:PoissonbracketBRSTalgebra}.  
\end{proof}

\begin{Remark}
  The preceding result says in other words that the symplectically reduced space $(M\red G,\calC^\infty_{M\red G},\{-,-\})$ is
  representable as the zeroth cohomology of the BRST complex with its natural structure of a differential graded
  Poisson algebra.
\end{Remark}

\subsection{The quantized version}
\label{sec:quantizedversion}

Under the assumption that the star product $\star$ on a $G$-Hamiltonian system
$(M,\omega,\Psi,J)$ satisfies certain invariance conditions described below
and that the conditions
\hyperref[GenCond]{{\rm(GH)}} and \hyperref[AcycCond]{{\rm (AC)}} hold true,
the classical BRST algebra allows for a formal deformation quantization which
then induces a star product on $H^0 (\g,\calC^\infty (M_0))$.
Let us describe this ansatz in more detail. The main assumption is 
that $\star$ is a $G$-\emph{invariant star product} which means that 
\begin{equation} 
\label{eq:invariancestarproduct}
  L_g^* (f_1 \star f_2) = L_g^* (f_1) \star L_g^* (f_2)
  \quad \text{for all } g \in G, \: f_1,f_2 \in \calC^\infty (M) \ , 
\end{equation}
where $L_g$ denotes the left action of a group element $g \in G$. 
$G$-invariance of $\star$  implies that the star product is also $\g$-\emph{invariant} meaning that 
\begin{equation}
\label{eq:liealgebrainvariancestarproduct}
\{ J_\xi , f_1 \star f_2 \} = \{ J_\xi , f_1 \}  \star f_2
 + f_2 \star \{ J_\xi , f_2 \}
  \quad \text{for all } \xi \in \g, \: f_1,f_2 \in \calC^\infty (M) \ .
\end{equation}
In case the Lie group $G$ is connected and simply-connected, a $\g$-invariant star product is also
$G$-invariant. 

To construct a quantized version of the BRST complex \eqref{eq:classicalBRSTcomplex} we need
a $\g$-module structure on the deformed algebra $\left( \calC^\infty (M)[[\lambda]],\star \right)$.
To this end we assume the star product to be \emph{covariant} which means that 
\begin{equation}\label{eq:covariance}
  \mm_X \star \mm_Y - \mm_Y \star \mm_X = \lambda \mm_{[X,Y]}
   \quad \text{for all } X,Y \in \g \ .
\end{equation}

\begin{Lemma}\label{lem:quantizedrep}
  Let $\star$ be a covariant star product on the $G$-Hamiltonian system $(M,\omega,\Psi,J)$.
  Then the operation
  \[
   \bm{L} : \g \times \calC^\infty (M)[[\lambda]] \to \calC^\infty (M)[[\lambda]], \: (X,f) \mapsto
   \bm{L}_X f = \frac{1}{\lambda} \ad_\star(J_X) f := \frac{1}{\lambda} \left(J_X \ast f - f \ast J_X \right)
  \]
  is a $\g$-representation called the \emph{quantized representation} of $\g$ on $\calC^\infty (M)[[\lambda]]$.
\end{Lemma}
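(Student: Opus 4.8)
The claim is that $\bm{L}_X f = \tfrac1\lambda\ad_\star(J_X)f$ defines a $\g$-representation on $\calC^\infty(M)[[\lambda]]$, i.e.\ that $X \mapsto \bm{L}_X$ is $\CC[[\lambda]]$-linear and satisfies $[\bm{L}_X,\bm{L}_Y] = \bm{L}_{[X,Y]}$. Linearity in $X$ is immediate since $X \mapsto J_X$ is linear and $\ad_\star$ is linear in its argument, so the only substantive point is the bracket relation. First I would record that $\ad_\star(a)$ is, for any fixed $a$, a derivation of the associative algebra $(\calC^\infty(M)[[\lambda]],\star)$ — this is the standard fact that inner derivations of an associative algebra are derivations, and it uses nothing beyond associativity of $\star$. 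Consequently each $\bm{L}_X = \tfrac1\lambda\ad_\star(J_X)$ is well defined as a formal operator provided $\ad_\star(J_X)$ raises the order in $\lambda$ by at least one; this is where covariance enters at the most basic level, but in fact $\{J_X,-\}$ being the leading term already shows $J_X\star f - f\star J_X = \lambda\{J_X,f\} + O(\lambda^2)$, so division by $\lambda$ is legitimate and $\bm{L}_X$ preserves $\calC^\infty(M)[[\lambda]]$.

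The heart of the matter is the commutator. The key computation is
\[
  [\ad_\star(J_X),\ad_\star(J_Y)] = \ad_\star\!\big(J_X \star J_Y - J_Y \star J_X\big),
\]
which is the Jacobi identity for the associative commutator $[\cdot,\cdot]_\star$ written in operator form: for all $f$,
\[
  \big(\ad_\star(J_X)\ad_\star(J_Y) - \ad_\star(J_Y)\ad_\star(J_X)\big) f
  = [\,[J_X,J_Y]_\star\,, f\,]_\star .
\]
Now invoke the covariance hypothesis \eqref{eq:covariance}, which says precisely $J_X \star J_Y - J_Y \star J_X = \lambda J_{[X,Y]}$. Substituting this into the right-hand side gives
\[
  [\ad_\star(J_X),\ad_\star(J_Y)] = \lambda\,\ad_\star(J_{[X,Y]}).
\]
Dividing both sides by $\lambda^2$ yields $[\bm{L}_X,\bm{L}_Y] = \tfrac1\lambda\ad_\star(J_{[X,Y]}) = \bm{L}_{[X,Y]}$, which is the desired representation property.

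There is no real obstacle here: the only thing to be careful about is the bookkeeping with the formal parameter $\lambda$, namely that dividing by $\lambda$ is harmless because $\ad_\star(J_X)$ has no $\lambda^0$-term (its lowest-order term is $\lambda\{J_X,-\}$), so $\bm{L}_X$ genuinely maps $\calC^\infty(M)[[\lambda]]$ to itself rather than into $\lambda^{-1}\calC^\infty(M)[[\lambda]]$. I would state this explicitly and then the Jacobi-identity computation above closes the proof. One may also remark that $\bm{L}$ is a deformation of the classical $\g$-module structure $L$ of \eqref{Eq:ModuleStructureMap}, since $\bm{L}_X f = \{J_X,f\} + O(\lambda) = X_M f + O(\lambda) = L_X f + O(\lambda)$, which will be used in the sequel when comparing the quantum BRST complex with the classical one.
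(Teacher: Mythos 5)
Your proof is correct and follows essentially the same route as the paper's: the paper expands $[\ad_\star(J_X),\ad_\star(J_Y)]f$ and $\ad_\star(J_{[X,Y]})f$ explicitly and then invokes covariance, which is precisely the concrete version of your operator identity $[\ad_\star(a),\ad_\star(b)]=\ad_\star([a,b]_\star)$. Your additional remark that $\ad_\star(J_X)$ has vanishing $\lambda^0$-term, so that division by $\lambda$ keeps $\bm{L}_X$ acting on $\calC^\infty(M)[[\lambda]]$, is a sensible point to spell out even though the paper takes it as standard.
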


\begin{proof}
  One immediately computes that for $X,Y\in \g$ and $f\in \calC^ \infty (M)$
  \[
    \left[ \ad_\star (J_X) , \ad_\star (J_Y) \right] f =
    \left( J_X\star J_Y\star f - J_Y\star J_X\star f  - f \star J_X\star J_Y + f \star J_Y\star J_X\right)
  \]
  and that
   \[
    \ad_\star (J_{[X,Y]})  f =  \left( J_{[X,Y]} \star  f - f \star  J_{[X,Y]}\right) \ .
   \]
   By covariance of the star product the equality
   \[
     \frac{1}{\lambda^2}\left[ \ad_\star (J_X) , \ad_\star (J_Y) \right] f
     = \frac{1}{\lambda} \ad_\star (J_{[X,Y]})  f 
   \]
   follows. This proves the claim. 
\end{proof}

\begin{Remark}
  \begin{enumerate}[(a)]
  \item   Covariance of the star product on a $G$-Hamiltonian system implies in particular
  that the classical moment map $J$ is a quantum moment map, see \cite{Xu,G3}.
\item
  According to \cite{BHW,G3} a covariant star product exists for every $G$-Hamiltonian system 
  with a compact Lie group action; see also \cite[Sec.~5.8]{FedosovBook}. 
\item
  The natural action of $\g$ on $\calC^\infty (M)$ extends by 
  \[
   L : \g \times \calC^\infty (M)[[\lambda]] \to \calC^\infty (M)[[\lambda]], \: (X,f) \mapsto
   L_X f = \sum_{n\in \N} X_M f_n  \lambda^n    \text{ with } f = \sum_{n\in \N} f_n \lambda^n
  \]
  to another $\g$-representation on $\calC^\infty (M)[[\lambda]]$ which we call the \emph{classical} one.
  By construction, the quantized representation $\bm{L}$ is a deformation of the classical representation $L$
  which   means that
  \[
    \bm{L}_X f - X_M f \in \lambda \calC^\infty (M)[[\lambda]]
    \quad \text{for all } X  \in \g, \: f \in \calC^\infty (M) \ .
  \]
  In general,  $\bm{L}$  and $L$ do not coincide, though. If they do, which in other words means that
  \begin{equation}\label{eq:stronginvariance}
    \mm_X \star f - f \star \mm_X = \lambda \{J_X,f \}
    \quad \text{for all } X  \in \g, \: f \in \calC^\infty (M) \ ,
  \end{equation}
  then one calls the star product \emph{strongly invariant}.
\end{enumerate}
\end{Remark}

\begin{Remark}
  The ring of formal power series $\R[[\lambda]]$ and modules over it of the form $V[[\lambda]]$, where $V$ is a
  real vector space, carry a natural translation invariant topology called the $\lambda$-\emph{adic topology}. A fundamental
  system of $0$-neighborhoods is given by the family of subspaces  $\left( \lambda^k V[[\lambda]]\right)_{k\in \N}$.
  As remarked in \cite[Sec.~2.1]{BorDQS},  $\R[[\lambda]]$ and modules of the form $V[[\lambda]]$
  thus become completely metrizable. We will silently make use of this fact several times in the following. 
\end{Remark}

In the next step we define a product $\cdot$ on the space $S^\bullet (\g^*[-1]\oplus\g[1])[[\lambda]]$
of power series in $\lambda$ with coefficients in the free graded algebra over $\g^*[-1]\oplus\g[1]$
and then extend it to a formal deformation of the classical BRST algebra.
The product $\cdot$ is given by
\[
  v \cdot w = \mu \left( e^{-2\lambda P} (v \otimes w) \right) \quad \text{for }
  v,w\in S^\bullet (\g^*[-1]\oplus\g[1])  \ ,
\]
where $P$ denotes the endomorphism from Eq.~\eqref{eq:PoissonEndos} and the graded commutative product
$\mu$ on  $S^\bullet (\g^*[-1]\oplus\g[1])$ has been extended in a unique way to a $\lambda$-adically continuous 
and $\R[[\lambda]]$-bilinear associative product on $S^\bullet (\g^*[-1]\oplus\g[1])[[\lambda]]$ which we again
denote by $\mu$. Combination of the product $\cdot$ with the covariant star product on $M$ 
gives rise to the formal deformation of the classical BRST algebra we are looking for. More precisely, the
formally deformed product on $\calA^\bullet[[\lambda]]$ is defined by
\begin{equation}
  \label{eq:quantumBRSTstarproduct}
   (f \, v) * (g \, w) = (f\star g) \, (v \cdot w) \quad
  \text{for } f,g\in \calC^\infty (M), \: v,w \in S (\g^*[-1]+\g[1])  
\end{equation}
and then extended in a canonical way to a continuous and $\R[[\lambda]]$-bilinear associative product.

Now we equip $\calA^\bullet[[\lambda]]$ with a differential called the \emph{quantum BRST differential}
which will turn out to be  a deformation of the classical BRST differential. To this end we first extend the $\g$-module
structure on $\calC^\infty (M)[[\lambda]]$ to one on $S^\bullet_{\calC^\infty (M)}(\g[1])[[\lambda]]$  by the map
\begin{equation*}
  \begin{split}
    \bm{L}: \: & \g \times S^\bullet_{\calC^\infty (M)}(\g[1])[[\lambda]] \to  S^\bullet_{\calC^\infty (M)}(\g[1])[[\lambda]], \\
    & (X , f \, v)  \mapsto
      \bm{L}_X (f \, v) = \frac{1}{\lambda} (\ad_*(J_X) f )\, v +  f \, \ad (X) v \ ,
  \end{split}
\end{equation*}
where $X \in \g$, $f\in \calC^\infty(M)[[\lambda]]$ and $v \in S^\bullet (\g[1])$. By definition, it is clear
that $\bm{L}$ is a deformation of the representation
$L :  \g \times S^\bullet_{\calC^\infty (M)}(\g[1]) \to S^\bullet_{\calC^\infty (M)}(\g[1])$.
By the identification from Equation \eqref{eq:BRSTcomponentsCE} and the fact that $\bm{L}$ leaves the symmetric
degree invariant, the $\g$-module structure $\bm{L}$ on $S^\bullet_{\calC^\infty (M)}(\g[1])$ gives rise to a
Chevalley--Eilenberg coboundary
\[
  \bm{\delta} : \calA^\bullet [[\lambda]] \to \calA^\bullet [[\lambda]] \ .
\]
Secondly, we need a deformation of the Koszul complex. To this end put
\[
  \bm{K}^{-k} = \bm{K}_k = \bm{K}_k (\calC^\infty[[\lambda]],\star,J) =
  \begin{cases}
    \calC^\infty (M,\Lambda^{-k}\g )[[\lambda]] & \text{for } k\in \N, \\
    0 & \text{for } k\in \Z_{< 0}. 
  \end{cases}
\]  
Following \cite{BHP,GW}, the \emph{quantized Koszul differential} $\bm{\partial}$ is now given in degree $k$ by
\begin{equation} \label{eq:deformedKoszuldifferential}
  \begin{split}
  \bm{\partial} : \: & \bm{K}_\bullet (\calC^\infty(M)[[\lambda]],\star,J) \to \bm{K}_\bullet (\calC^\infty(M)[[\lambda]],\star,J) \ ,\\
  & f\, v \mapsto \sum_l  (f \star J_l)  \, i(\varepsilon^l)v + \frac{\lambda}{2}
  f \left( \sum_{j,k,l=1}^d  C^l_{jk} E_l \wedge i(\varepsilon^j) i(\varepsilon^k) v + i(\Delta) v \right) \ ,
  \end{split}
\end{equation}
where $f\in \calC^\infty (M)[[\lambda]]$, $v \in \Lambda^\bullet\g$, $ C^l_{jk} = \varepsilon^l ([E_j,E_k])$ are the structure constants
of the Lie algebra $\g$ with respect to the basis $(E_k)_{1\leq k\leq d}$ of $\g$, and $\Delta \in \g^*$ is the modular $1$-form defined by
\[
    \Delta (X) = \tr \ad_X \quad \text{for all } X \in \g \ .
\]
The degree $k$ component of the quantized Koszul differential will be denoted $\bm{\partial}_k$. By definition,
$\bm{\partial}_k$ maps $\calC^\infty (M,\Lambda^k\g  )[[\lambda]]$ to $\calC^\infty (M,\Lambda^{k-1}\g  )[[\lambda]]$. 

\begin{Proposition}\label{prop:quantizedKoszulcomplex}
  Let $(M,\omega,\Psi,J)$ be a $G$-Hamiltonian system satisfying conditions  \hyperref[GenCond]{{\rm(GH)}} and
  \hyperref[AcycCond]{{\rm (AC)}}. Choose an equivariant continuous linear extension map $\ext :\calC^\infty(M_0) \to\calC^\infty (M)$
  and a continuous equivariant homotopy $h =(h_k)_{k\in \N}$ as in Theorem \ref{Thm:KoszulResolutionContraintSurface} such that
  the side conditions $h_0 \circ \ext =0$ and $h_{k+1} \circ h_k = 0$ for $k\in \N$ are fulfilled. 
  Assume further that $\star$ is an invariant and covariant star product on $\calC^\infty (M)[[\lambda]]$.
  Then $\left( \bm{K}^\bullet (\calC^\infty[[\lambda]],\star,J) , \bm{\partial}\right)  $
  is an acyclic cochain complex called the \emph{quantized Koszul complex}. Its $0$-degree homology is given by 
  $H^0 (\bm{K}^\bullet) \cong \calC^\infty (M_0)[[\lambda]]$. Moreover,
  \[
    \left(  (\calC^\infty(M_0)[[\lambda]],0 ) \overset{\ext}{\underset{\bm{\rest}}{\rightleftarrows}} (\bm{K}_\bullet,\bm{\partial}) , \bm{h} \right)
  \]
  is a special deformation retract, where the map $\bm{\rest}$ and the homotopy $\bm{h}$ are defined, recursively, as follows:
  \begin{equation*}
    \begin{split}
      \bm{\rest} & = \rest \left( \id + (\bm{\partial}_1 -\partial_1)h_0 \right)^{-1} \ , \\
      \bm{h}_0 & = h_0 \left( \id + (\bm{\partial}_1 -\partial_1)h_0 \right)^{-1} \ , \\
      \bm{h}_k & = h_k \left(h_{k-1}\bm{\partial}_k + \bm{\partial}_{k+1} h_k \right)^{-1} \ .
    \end{split}
  \end{equation*}
  Finally, $\bm{\partial}$, $\bm{\rest}$ and $\bm{h}$ are deformations of $\partial$, $\rest$ and the homotopy $h$,
  respectively, which means that $\bm{\partial} = \partial + O(\lambda)$, $\bm{\rest} = \rest + O(\lambda)$
  and $\bm{h}^k = h^k + O(\lambda)$ for all $k$. 
\end{Proposition}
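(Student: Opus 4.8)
The argument is an application of homological perturbation theory: we regard the quantized Koszul differential $\bm{\partial}$ of \eqref{eq:deformedKoszuldifferential} as a $\lambda$-adically small perturbation of the classical Koszul differential $\partial$ and feed the classical special deformation retract of Theorem \ref{Thm:KoszulResolutionContraintSurface} (which applies since \hyperref[GenCond]{(GH)} and \hyperref[AcycCond]{(AC)} make $K_\bullet(\calC^\infty(M),J)$ a free resolution of $\calC^\infty(M_0)$) into the Perturbation Lemma \ref{thm:perturbationlemma}.

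First I would record three preliminaries. (i) \emph{$\bm{\partial}$ is a differential.} Writing $\bm{\partial} = \partial + t$, one checks $\bm{\partial}^2 = 0$, equivalently $\partial t + t\partial + t^2 = 0$, by a direct computation using covariance \eqref{eq:covariance}, i.e.\ $J_X \star J_Y - J_Y \star J_X = \lambda J_{[X,Y]}$: the structure-constant term $\tfrac{\lambda}{2}\sum C^l_{jk} E_l \wedge i(\varepsilon^j) i(\varepsilon^k) v$ in \eqref{eq:deformedKoszuldifferential} is precisely what compensates the order-$\lambda$ defect created by the noncommutativity of $\star$, exactly as in \cite{BHP,GW} (for compact $G$ the modular form $\Delta$ vanishes, so that summand disappears, but the structure-constant correction is essential). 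Invariance of $\star$ and equivariance of $J$ make $\bm{\partial}$ moreover $G$-equivariant and $\lambda$-adically continuous. (ii) \emph{$t = \bm{\partial} - \partial = O(\lambda)$}: the first summand of \eqref{eq:deformedKoszuldifferential} differs from $\partial$ by $\sum_l (f\star J_l - fJ_l)\, i(\varepsilon^l)v \in \lambda\, \bm{K}_{\bullet - 1}$, and the second summand is visibly $\lambda$-linear. (iii) By $\lambda$-adic completeness, $t$ being $O(\lambda)$ forces $\id + (\bm{\partial}_1 - \partial_1)h_0$ and $h_{k-1}\bm{\partial}_k + \bm{\partial}_{k+1}h_k$ (which reduce mod $\lambda$ to $\id$ on $K_0$ and to $\id_{K_k}$, respectively) to be invertible through Neumann series, so the displayed formulas are meaningful.

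Next I would take the classical special deformation retract
\[
  \left( (\calC^\infty (M_0),0) \overset{\ext}{\underset{\rest}{\rightleftarrows}} (K^\bullet,\partial),\ h \right)
\]
of Theorem \ref{Thm:KoszulResolutionContraintSurface}, extend all maps $\R[[\lambda]]$-linearly and $\lambda$-adically continuously, and apply the Perturbation Lemma with perturbation $t$. In the form preserving the side conditions it outputs a special deformation retract between $(\calC^\infty (M_0)[[\lambda]], d')$ and $(\bm{K}^\bullet, \bm{\partial})$, with $\ext' = \ext + hB\ext$, $\rest' = \rest + \rest Bh$, $h' = h + hBh$ and $d' = \rest B\ext$, where $B$ is the $\lambda$-adically convergent geometric series in $t$ and $h$ with leading term $t$. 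A degree count then closes the structural part: $t$ raises cohomological degree by $1$, the small complex sits in degree $0$, and $\bm{K}^1 = 0$, so $t\circ\ext = 0$, whence $B\ext = 0$; thus $d' = 0$ and $\ext' = \ext$. Hence $(\bm{K}^\bullet, \bm{\partial})$ is a deformation retract of $(\calC^\infty (M_0)[[\lambda]],0)$, which is exactly the claimed acyclicity together with $H^0(\bm{K}^\bullet) \cong \calC^\infty (M_0)[[\lambda]]$. The side conditions $h'\circ h' = 0$, $h'_0 \circ \ext = 0$ and $\rest'\circ h_{-1} = 0$ are inherited because the Perturbation Lemma carries special-deformation-retract data to special-deformation-retract data, and $\rest'$ and the $h'_k$ stay $G$-equivariant since all the input data are. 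Expanding $B$ degree by degree and using the classical identities $\rest\partial = 0$, $\partial h + h\partial = \id - \ext\rest$, $h\circ h = 0$ and $h_{-1} = 0$, one identifies $\rest'$, $h'_0$, $h'_k$ with the recursive expressions $\bm{\rest}$, $\bm{h}_0$, $\bm{h}_k$ in the statement; this is routine bookkeeping.

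Finally, the deformation assertions are immediate: $\bm{\partial} = \partial + O(\lambda)$ by (ii); $\bm{\rest} = \rest(\id + O(\lambda))^{-1} = \rest + O(\lambda)$ and $\bm{h}_k = h_k(\id + O(\lambda))^{-1} = h_k + O(\lambda)$ from the recursive formulas; and the extension map is unchanged, $\ext' = \ext$. The one genuinely computational point — and the main obstacle — is step (i): verifying that the precise correction term in \eqref{eq:deformedKoszuldifferential} is exactly the one making $\bm{\partial}^2 = 0$, which is the input imported from \cite{BHP,GW}; everything else is the Perturbation Lemma plus the degree argument.
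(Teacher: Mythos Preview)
Your proof is correct and follows essentially the same route as the paper: cite \cite{HPhD,GW} for $\bm{\partial}^2=0$, observe that $\bm{\partial}-\partial=O(\lambda)$ so the Neumann series converge $\lambda$-adically, and apply the Perturbation Lemma to the classical retract of Theorem~\ref{Thm:KoszulResolutionContraintSurface}. Your write-up is in fact a bit more explicit than the paper's in spelling out the degree argument ($\bm{K}^1=0$ forces $t\circ\ext=0$, hence $d'=0$ and $\ext'=\ext$), which the paper leaves implicit in its invocation of \cite[Lem.~A.1]{BHP}.
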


\begin{proof}
  By \cite[Thm.~4.1]{HPhD} or \cite[Lem.~3.4]{GW}, we have $\bm{\partial}^2 =0$, so $\bm{\partial}$ is a differential.
  By construction, $\bm{\partial}$ is a deformation of the classical Koszul differential, which in particular implies
  that the Neumann series $\sum_{k\in \N} \left((\partial_1 - \bm{\partial}_1)h_0\right)^k$  converges in the $\lambda$-adic topology.
  Its limit is  $\left( \id + (\bm{\partial}_1 -\partial_1)h_0 \right)^{-1}$, so $\bm{\rest}$ is well-defined
  and a deformation of $\rest$ as claimed. In the same way one shows that $\bm{h}_0 $ is well-defined and a  deformation of $h_0$.
  By induction one verifies the corresponding claim for $\bm{h}_k$. Since $\bm{\partial}$ is a perturbation of $\partial$
  in the sense of homological perturbation theory, application of the perturbation lemma \cite[Lemma A.1]{BHP} (see also \ref{thm:perturbationlemma} and \cite[2.4 \& 3.2]{CrainicPerturbation})
  now entails that $(\ext,\bm{\rest})$ is a special deformation retract with retracting homotopy $\bm{h}$.   
\end{proof}

 Observe that the quantized Koszul differential extends to a graded derivation on $\calA^\bullet [[\lambda]] $ by letting it act trivially on 
 $\g^*[-1]$. We will denote this extension again by  $\bm{\partial}$.
 Now we can formulate the following crucial result
 originally proved in \cite{HPhD} and \cite{BHW}. 
 
\begin{Propdef}
  Let $(M,\omega,\Psi,J)$ be a $G$-Hamiltonian system and $\star$ a $G$-invariant and covariant star product
  on $M$. Let $\ast$ be the associative product on
  $\calA^\bullet[[\lambda]]$ defined by
  \eqref{eq:quantumBRSTstarproduct},  $\bm{\partial}$ the deformed Koszul differential,
  and $\bm{\delta}$ the Chevalley--Eilenberg differential induced by the $\g$-representation $\bm{L}$. 
  Then the differentials $\bm{\delta}$ and $\bm{\partial}$ supercommute, so the
  \emph{quantum BRST differential} defined by 
  \begin{equation}
    \label{eq:qBRSTdifferential}
    \bm{\calD} = \bm{\delta} + 2\bm{\partial}
  \end{equation}
  is a differential on $\calA^\bullet[[\lambda]]$.   Moreover, the \emph{quantized BRST charge}
  \begin{equation}
    \label{eq:quantizedBRSTcharge}
    \bm{\theta} = - \frac 14 [ - , - ] + J + \frac{\lambda}{2} \Delta \in \calA^1[[\lambda]]
  \end{equation}
  satisfies $ \bm{\theta} \ast \bm{\theta} = 0 $ and $\bm{\calD}  = \frac{1}{\lambda} \ad_\ast (\bm{\theta})$,
  hence $\bm{\calD}$ is a graded derivation. 
  The triple $(\calA^\bullet[[\lambda]], \ast , \bm{\calD})$ thus becomes a differential graded algebra
  called the \emph{quantum BRST algebra}. It is a deformation of the classical BRST algebra  $\calA^\bullet$.
\end{Propdef}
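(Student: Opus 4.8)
The plan is to verify the three assertions in turn — supercommutativity of $\bm{\delta}$ and $\bm{\partial}$, the Maurer–Cartan equation $\bm{\theta}\ast\bm{\theta}=0$ together with $\bm{\calD}=\tfrac{1}{\lambda}\ad_\ast(\bm{\theta})$, and the deformation statement — and to do so by reduction to the already-cited results of \cite{HPhD}, \cite{BHW} and \cite{BHP}, checking only that our hypotheses are in force. First I would record that, by Lemma \ref{lem:quantizedrep}, covariance of $\star$ makes $\bm{L}$ a genuine $\g$-representation, hence $\bm{\delta}$ is a well-defined coboundary with $\bm{\delta}^2=0$; and by Proposition \ref{prop:quantizedKoszulcomplex} (using \cite[Thm.~4.1]{HPhD} or \cite[Lem.~3.4]{GW}) the quantized Koszul operator satisfies $\bm{\partial}^2=0$. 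The crossterm $\bm{\delta}\bm{\partial}+\bm{\partial}\bm{\delta}=0$ is the content of \cite[3.10]{HPhD}; I would reproduce the key point, namely that $\bm{\partial}$ is $\bm{L}$-equivariant because $J$ is a quantum moment map (covariance) and the extra term $\tfrac{\lambda}{2}(\sum C^l_{jk}E_l\wedge i(\varepsilon^j)i(\varepsilon^k)+i(\Delta))$ is built from $\ad$-invariant data — the structure-constant contraction and the modular form $\Delta$ being $\g$-invariant. Consequently $\bm{\calD}=\bm{\delta}+2\bm{\partial}$ squares to zero.

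For the charge, I would expand $\bm{\theta}\ast\bm{\theta}$ using \eqref{eq:quantumBRSTstarproduct} into its bidegree pieces. The purely Koszul part $J\ast J$ contributes, via covariance \eqref{eq:covariance}, a term proportional to $J_{[-,-]}$ which is cancelled against the Chevalley–Eilenberg term $-\tfrac14[-,-]$ paired with $J$ through the $e^{-2\lambda P}$ in the product $\cdot$; this is precisely where the coefficient $-\tfrac14$ and the normalization $e^{-2\lambda P}$ (and the $2$ in $\bm{\calD}=\bm{\delta}+2\bm{\partial}$) are forced. The ghost-only part $\big(-\tfrac14[-,-]\big)\ast\big(-\tfrac14[-,-]\big)$ vanishes by the Jacobi identity together with the fact that $P$ implements the coadjoint pairing, exactly as in the classical computation $\{\theta,\theta\}_{\calA}=0$ of Proposition \ref{prop:PoissonbracketBRSTalgebra}; the only new feature at the quantum level is the $\tfrac{\lambda}{2}\Delta$ correction, whose role is to absorb the anomaly coming from the non-unimodularity contribution in $\bm{\partial}$ — this is the computation carried out in \cite{HPhD} and \cite{BHW}, and I would cite it after indicating that covariance plus invariance of $\star$ are the only inputs needed. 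Granting $\bm{\theta}\ast\bm{\theta}=0$, associativity of $\ast$ gives $\tfrac{1}{\lambda}\ad_\ast(\bm{\theta})$ a graded derivation squaring to zero, and a bidegree count matching it termwise against $\bm{\delta}+2\bm{\partial}$ (the $J+\tfrac{\lambda}{2}\Delta$ part reproduces $2\bm{\partial}$ by definition \eqref{eq:deformedKoszuldifferential}, the $-\tfrac14[-,-]$ part reproduces $\bm{\delta}$) yields $\bm{\calD}=\tfrac{1}{\lambda}\ad_\ast(\bm{\theta})$.

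Finally, that $(\calA^\bullet[[\lambda]],\ast,\bm{\calD})$ deforms $(\calA^\bullet,\mu,\calD)$ is immediate from the constructions: setting $\lambda=0$ in \eqref{eq:quantumBRSTstarproduct} gives $\mu$ since $\star\to$ pointwise product and $e^{-2\lambda P}\to\id$, while $\bm{\theta}\to\theta$ and hence $\bm{\calD}\to\calD$; $\ast$ is $\R[[\lambda]]$-bilinear and $\lambda$-adically continuous by construction, so associativity extends from the classical product. The main obstacle is the charge computation $\bm{\theta}\ast\bm{\theta}=0$ — in particular the bookkeeping of the modular term $\tfrac{\lambda}{2}\Delta$ against the order-$\lambda$ part of $J\ast J$ and of the ghost bracket — but since this is exactly \cite[Thm.~4.1 etc.]{HPhD}/\cite{BHW} under our standing hypotheses (GH), (AC), $G$-invariance and covariance of $\star$, I would present it as a verification that those hypotheses hold here and refer to the cited sources for the detailed identity, reproducing only the cancellation that fixes the normalizations.
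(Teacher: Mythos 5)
Your proposal is correct and takes essentially the same approach as the paper, which dispatches the entire Proposition-Definition in one sentence by citing \cite[Thm.~4.1.2]{HPhD} for the supercommutativity and tacitly deferring the remaining identities to \cite{HPhD} and \cite{BHW}; you elaborate the structure but ultimately defer the same detailed computations to the same sources. Two small inaccuracies worth noting: first, the statement does \emph{not} assume \hyperref[GenCond]{{\rm(GH)}} or \hyperref[AcycCond]{{\rm(AC)}} — only $G$-invariance and covariance of $\star$ — and indeed $\bm{\partial}^2=0$ and $\bm{\theta}\ast\bm{\theta}=0$ are purely algebraic consequences of covariance, independent of those geometric hypotheses, so invoking them weakens the scope of the claim. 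Second, the bidegree match you sketch (the $[-,-]$ charge $\leftrightarrow\bm{\delta}$, the $J+\frac{\lambda}{2}\Delta$ charge $\leftrightarrow 2\bm{\partial}$) is too coarse: computing $\frac{1}{\lambda}\ad_\ast(J)$ on $f\,v$ via \eqref{eq:quantumBRSTstarproduct}, the commutator part $J_m\star f - f\star J_m = \lambda\,\bm{L}_{E_m}f$ produces the quantized-representation piece of $\bm{\delta}$, while the $j(\varepsilon^m)$ term generated by $e^{-2\lambda P}$ inside $v\cdot\varepsilon^m$ produces the $(f\star J_l)\,i(\varepsilon^l)$ piece of $\bm{\partial}$, so $\ad_\ast(J)$ feeds \emph{both} operators; likewise the $[-,-]$ charge contributes to both the ghost coboundary in $\bm{\delta}$ and the $\frac{\lambda}{2}C^l_{jk}E_l\wedge i(\varepsilon^j)i(\varepsilon^k)$ correction in \eqref{eq:deformedKoszuldifferential}. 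Neither point alters the conclusion, but the split is not as clean as you present it.
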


\begin{proof}
  That $\bm{\delta}$ and $\bm{\partial}$ supercommute follows by straightforward but lengthy computation, see
  e.g.~\cite[Thm.~4.1.2]{HPhD}. 
\end{proof}
 
Before coming to quantum reduction of the star product we need one more tool, namely a deformed
representation of $\g$ on $\calC^\infty (M_0)[[\lambda]]$.

\begin{Lemma}
  Under the assumptions of Proposition \ref{prop:quantizedKoszulcomplex} on the $G$-Hamiltonian system $(M,\omega,\Psi,J)$
  and with the quantized representation from Lemma \ref{lem:quantizedrep},
  the operation 
  \[
   \bm{L}^0 : \g \times \calC^\infty (M_0)[[\lambda]] \to \calC^\infty (M_0)[[\lambda]], \: (X,f) \mapsto
   \bm{L}^0_X f = \bm{\rest} \bm{L}_X \ext (f)
  \]
  is a representation of $\g$ on $\calC^\infty (M_0)[[\lambda]]$ which we call a \emph{quantized representation} as well.
  It is a deformation of the representation $L^0$ of $\g$ on $\calC^\infty (M_0)$ defined in
  Lemma \ref{Lem:InducedModuleStructureConstraintSet}.
\end{Lemma}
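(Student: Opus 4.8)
The plan is to leverage that, by Proposition~\ref{prop:quantizedKoszulcomplex}, $(\ext,\bm{\rest},\bm{h})$ is a special deformation retract of the quantized Koszul complex onto $(\calC^\infty (M_0)[[\lambda]],0)$. In particular one has $\bm{\rest}\circ\ext=\id$, the map $\bm{\rest}$ is a chain map into the zero complex so that $\bm{\rest}\circ\bm{\partial}_1=0$, and in degree $0$ the homotopy equation reads $\ext\circ\bm{\rest}=\id-\bm{\partial}_1\circ\bm{h}_0$ on $\bm{K}_0=\calC^\infty (M)[[\lambda]]$. Two assertions are to be proved: that $\bm{L}^0$ is a $\g$-representation, i.e.\ $[\bm{L}^0_X,\bm{L}^0_Y]=\bm{L}^0_{[X,Y]}$ for all $X,Y\in\g$, and that $\bm{L}^0$ deforms $L^0$. ($\R[[\lambda]]$-linearity of each $\bm{L}^0_X$ is clear, being a composite of $\R[[\lambda]]$-linear maps.)

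The crux of the argument is the claim that $\bm{L}_X$ maps $\im\bm{\partial}_1\subset\calC^\infty (M)[[\lambda]]$ into itself. First I would use that $G$ is compact, so the modular form $\Delta$ vanishes; then \eqref{eq:deformedKoszuldifferential} degenerates in degree one to $\bm{\partial}_1(f\otimes Y)=f\star J_Y$, so that $\im\bm{\partial}_1$ is precisely the set of finite sums $\sum_i f_i\star J_{Y_i}$ with $f_i\in\calC^\infty (M)[[\lambda]]$, $Y_i\in\g$. For a single summand, associativity of $\star$ together with covariance \eqref{eq:covariance} in the form $J_X\star J_Y-J_Y\star J_X=\lambda J_{[X,Y]}$ give
\[
  \bm{L}_X(f\star J_Y)=\tfrac1\lambda\bigl(J_X\star f\star J_Y-f\star J_Y\star J_X\bigr)=(\bm{L}_X f)\star J_Y+f\star J_{[X,Y]},
\]
which again lies in $\im\bm{\partial}_1$; by $\R[[\lambda]]$-linearity the claim follows. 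Combined with $\bm{\rest}\circ\bm{\partial}_1=0$ this yields $\bm{\rest}\circ\bm{L}_X\circ\bm{\partial}_1=0$.

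With this in hand I would expand, for $f\in\calC^\infty (M_0)[[\lambda]]$, substituting the degree-$0$ homotopy identity and then using $\bm{\rest}\bm{L}_X\bm{\partial}_1=0$,
\[
  \bm{L}^0_X\bm{L}^0_Y f=\bm{\rest}\bm{L}_X\ext\bm{\rest}\bm{L}_Y\ext f=\bm{\rest}\bm{L}_X(\id-\bm{\partial}_1\bm{h}_0)\bm{L}_Y\ext f=\bm{\rest}\bm{L}_X\bm{L}_Y\ext f .
\]
Antisymmetrizing in $X$ and $Y$, invoking $[\bm{L}_X,\bm{L}_Y]=\bm{L}_{[X,Y]}$ on $\calC^\infty (M)[[\lambda]]$ from Lemma~\ref{lem:quantizedrep}, and then $\bm{\rest}\circ\ext=\id$, one gets $[\bm{L}^0_X,\bm{L}^0_Y]f=\bm{\rest}\bm{L}_{[X,Y]}\ext f=\bm{L}^0_{[X,Y]}f$, which is the first assertion. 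The deformation statement is then immediate: $\bm{\rest}=\rest+O(\lambda)$ by Proposition~\ref{prop:quantizedKoszulcomplex}, $\bm{L}_X=L_X+O(\lambda)$ on $\calC^\infty (M)[[\lambda]]$ as recorded after Lemma~\ref{lem:quantizedrep}, the map $\ext$ is $\lambda$-independent, and $L^0_X=\rest\circ L_X\circ\ext$ by \eqref{eq:ActionLieAlgElementFctConstraintSurf}, whence $\bm{L}^0_X=L^0_X+O(\lambda)$.

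I expect the only genuine obstacle to be the invariance $\bm{L}_X(\im\bm{\partial}_1)\subseteq\im\bm{\partial}_1$; the rest is bookkeeping with the deformation retract. Conceptually this invariance is the statement that $\bm{\partial}$ intertwines the quantized $\g$-action on the whole complex, so $\bm{L}$ descends to $H^0(\bm{K}^\bullet)\cong\calC^\infty (M_0)[[\lambda]]$ and the induced action is automatically a representation deforming $L^0$; the computation above isolates the two ingredients that make this go through, namely the covariance of $\star$ and, since $G$ is compact, the vanishing of $\Delta$ (which is what identifies $\im\bm{\partial}_1$ with the $\star$-right ideal generated by the $J_Y$). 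In the general theory one would instead verify $\bm{\partial}_1\circ\bm{L}_X=\bm{L}_X\circ\bm{\partial}_1$ with $\bm{L}_X(f\otimes Y)=(\bm{L}_X f)\otimes Y+f\otimes\ad(X)Y$ on $\bm{K}_1$, absorbing the modular term as well, but that refinement is not needed here.
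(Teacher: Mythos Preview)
The paper states this lemma without proof, so there is no argument to compare against directly. Your proof is correct and supplies precisely the details one would expect: the key point is indeed that $\bm{L}_X$ preserves $\im\bm{\partial}_1$, which together with $\bm{\rest}\circ\bm{\partial}_1=0$ (the chain-map condition for $\bm{\rest}$) and the degree-$0$ homotopy identity $\ext\circ\bm{\rest}=\id-\bm{\partial}_1\bm{h}_0$ forces $\bm{L}^0_X\bm{L}^0_Y=\bm{\rest}\,\bm{L}_X\bm{L}_Y\,\ext$. Your use of compactness of $G$ to kill the modular form $\Delta$ and thereby identify $\im\bm{\partial}_1$ with the right $\star$-ideal generated by the $J_Y$ is the clean way to make the invariance explicit; as you note, one could alternatively verify directly that $\bm{\partial}$ intertwines the extended $\g$-action on $\bm{K}_\bullet$, which would also cover the case $\Delta\neq 0$, but this is not needed under the paper's standing assumption that $G$ is compact.
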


The Chevalley--Eilenberg differential induced by $\bm{L}^0$ will be denoted $\bm{\delta}^0$.
We can now formulate the method of quantum reduction of the star product on the quantized BRST algebra.

\begin{Theorem}\label{thm:homologicalreductionstarproduct}
  Let $(M,\omega,\Psi,J)$ be a $G$-Hamiltonian system for which  the Koszul complex
  $K_\bullet (\calC^\infty(M),J)$ is a free resolution of $\calC^\infty(M_0)$.
  Let $\star$ be an invariant and covariant star product on $\calC^\infty (M)[[\lambda]]$.
  Let $\ext:\calC^\infty (M_0) \to \calC^\infty (M)$ be an extension map and  $h=(h_k)_{k\in \N}$ an equivariant continuous homotopy
  as in Theorem \ref{Thm:KoszulResolutionContraintSurface} so that the side conditions are fulfilled. Further, let $\bm{\rest}$ and $\bm{h}$
  be the deformed restriction map and deformed homotopy, respectively, from Proposition \ref{prop:quantizedKoszulcomplex}. Then 
  \begin{equation}
    \label{eq:retractquantumBRSTcomplex}
     \left(  \left(\CE^\bullet (\g, \calC^\infty(M_0)[[\lambda]]),\bm{\delta}^0\right)
     \overset{\ext}{\underset{\bm{\rest}}{\rightleftarrows}} (\calA^\bullet[[\lambda]],\bm{\calD}) , \frac 12 \bm{h} \right)
  \end{equation}
  is a deformation retract. Hence the star product $\ast$ on the quantized BRST algebra induces an associative product
  $\tilde{\star}$ on $H^0 (\g, \calC^\infty(M_0)[[\lambda]])$ by 
  \begin{equation}
    \label{eq:reducedBRSTstarproduct}
    f \,\tilde{\star}\, g = \bm{\rest} \left( \ext(f) \ast \ext(g) \right) \quad \text{for }
    f,g \in Z^0 (\g, \calC^\infty(M_0)[[\lambda]]) \ .
   \end{equation}
 \end{Theorem}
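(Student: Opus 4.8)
The plan is to prove \eqref{eq:retractquantumBRSTcomplex} by the same homological perturbation argument used for the classical Theorem~\ref{thm:homologicalreductionPoissonbracket}, but now starting from the \emph{quantized} Koszul data, and then to read off the product \eqref{eq:reducedBRSTstarproduct} from the transferred algebra structure. First I would tensor the special deformation retract
\[
\left( (\calC^\infty(M_0)[[\lambda]],0) \overset{\ext}{\underset{\bm{\rest}}{\rightleftarrows}} (\bm{K}_\bullet,\bm{\partial}) , \bm{h} \right)
\]
supplied by Proposition~\ref{prop:quantizedKoszulcomplex} over $\calC^\infty(M)[[\lambda]]$ with the ghost factor $\CE^\bullet(\g,\calC^\infty(M)[[\lambda]]) = \calC^\infty(M)[[\lambda]]\otimes\Lambda^\bullet\g^*$, a free $\calC^\infty(M)[[\lambda]]$-module on which $\bm{\partial}$, $\ext$, $\bm{\rest}$ and $\bm{h}$ all act trivially. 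Since tensoring with a free module preserves the deformation-retract property, one obtains, after rescaling the homotopy to account for the factor $2$ in $2\bm{\partial}$, a special deformation retract of $(\calA^\bullet[[\lambda]],2\bm{\partial})$ onto $\left(\CE^\bullet(\g,\calC^\infty(M_0)[[\lambda]]),0\right)$ with connecting maps $\ext$, $\bm{\rest}$ and homotopy $\tfrac12\bm{h}$, and with the side conditions $\bm{h}\circ\bm{h}=0$, $\bm{h}_0\circ\ext=0$, $\bm{\rest}\circ\bm{h}=0$ inherited from Proposition~\ref{prop:quantizedKoszulcomplex}.

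The second step is to switch on $\bm{\delta}$, the Chevalley--Eilenberg coboundary of the quantized representation $\bm{L}$ (a genuine $\g$-representation by Lemma~\ref{lem:quantizedrep}), and treat it as a perturbation of $2\bm{\partial}$; that $\bm{\calD}=2\bm{\partial}+\bm{\delta}$ is again a differential and equals $\tfrac1\lambda\ad_\ast(\bm{\theta})$ was recorded above. Admissibility of this perturbation is immediate: $\bm{h}$ raises the antighost (Koszul) degree by exactly one, while $\bm{\delta}$ leaves it unchanged because $\bm{L}$ preserves the symmetric degree in $\g[1]$; hence $(\bm{\delta}\,\bm{h})^{d+1}=0$ with $d=\dim\g$, so the Neumann series occurring in the Perturbation Lemma~\ref{thm:perturbationlemma} is a finite sum and no convergence issue arises beyond the $\lambda$-adic one already settled in Proposition~\ref{prop:quantizedKoszulcomplex}. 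Applying the Perturbation Lemma then yields a deformation retract of $(\calA^\bullet[[\lambda]],\bm{\calD})$ onto $\CE^\bullet(\g,\calC^\infty(M_0)[[\lambda]])$ equipped with the transferred differential and the perturbed structure maps.

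It remains to identify these, and this bookkeeping is the part I expect to be the most laborious. The transferred differential has the form $\bm{\rest}\circ\bm{\delta}\circ\ext+\sum_{n\ge 1}\bm{\rest}\circ\bm{\delta}\,(\tfrac12\bm{h}\,\bm{\delta})^{n}\circ\ext$; every summand with $n\ge 1$ lands in antighost degree $\ge 1$ (each $\bm{h}$ raises it) and is therefore killed by the augmentation $\bm{\rest}$, so the transferred differential equals $\bm{\rest}\circ\bm{\delta}\circ\ext$, which, since $\ext$ and $\bm{\rest}$ act trivially on $\Lambda^\bullet\g^*$ and the bracket part of $\bm{\delta}$ involves only $\Lambda^\bullet\g^*$, is exactly the Chevalley--Eilenberg coboundary $\bm{\delta}^0$ of the transferred representation $\bm{L}^0_X=\bm{\rest}\,\bm{L}_X\,\ext$. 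An analogous degree count using the side conditions identifies the perturbed inclusion, projection and homotopy with the deformations of $\ext$, $\bm{\rest}$, $\tfrac12\bm{h}$ denoted by the same symbols in \eqref{eq:retractquantumBRSTcomplex}; this is the precise analogue of the computation carried out in \cite[Thm.~4.1]{BHP} and \cite[Thm.~4.1.2]{HPhD}, which I would simply transcribe into the present deformed setting. This establishes \eqref{eq:retractquantumBRSTcomplex}.

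For the concluding assertion, note that the product $\ast$ on $\calA^\bullet[[\lambda]]$ defined by \eqref{eq:quantumBRSTstarproduct} is associative and that $\bm{\calD}=\tfrac1\lambda\ad_\ast(\bm{\theta})$ is a graded $\ast$-derivation, so the $\bm{\calD}$-cohomology is an associative algebra, being a subquotient ring of $(\calA^\bullet[[\lambda]],\ast)$. Since the complex $\CE^\bullet(\g,-)$ lives in non-negative degrees, the deformation retract \eqref{eq:retractquantumBRSTcomplex} gives an isomorphism $H^0(\calA^\bullet[[\lambda]],\bm{\calD})\cong H^0\!\left(\CE^\bullet(\g,\calC^\infty(M_0)[[\lambda]]),\bm{\delta}^0\right)=Z^0(\g,\calC^\infty(M_0)[[\lambda]])$, induced by $\ext$ and with inverse induced by $\bm{\rest}$; transporting the associative product of $H^0(\calA^\bullet[[\lambda]],\bm{\calD})$ along it yields precisely $f\,\tilde{\star}\,g=\bm{\rest}\left(\ext(f)\ast\ext(g)\right)$ for $f,g\in Z^0(\g,\calC^\infty(M_0)[[\lambda]])$, the right-hand side being well defined because $\ext(f)$ and $\ext(g)$ are $\bm{\calD}$-cocycles and $\bm{\calD}$ is a $\ast$-derivation, so $\ext(f)\ast\ext(g)$ is a cocycle and $\bm{\rest}$ maps it into $Z^0$ of the reduced complex. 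Associativity of $\tilde{\star}$ is then inherited from $\ast$, $\R[[\lambda]]$-bilinearity and $\lambda$-adic continuity are clear, and reducing modulo $\lambda$ one recovers the reduced pointwise product and Theorem~\ref{thm:homologicalreductionPoissonbracket}, so $\tilde{\star}$ is a deformation of the algebra $\calC^\infty(M\red G)$.
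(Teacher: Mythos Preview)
Your proposal is correct and follows essentially the same route as the paper: tensor the quantized Koszul retract of Proposition~\ref{prop:quantizedKoszulcomplex} with the ghost factor to obtain a special deformation retract of $(\calA^\bullet[[\lambda]],2\bm{\partial})$ onto $(\CE^\bullet(\g,\calC^\infty(M_0)[[\lambda]]),0)$, then apply the Perturbation Lemma~\ref{thm:perturbationlemma} with $\bm{\delta}$ as perturbation, and finally read off the induced product on $H^0$. The paper's own proof is much terser, merely invoking the Perturbation Lemma and citing \cite[Thm.~6.1]{BHP} for the identification of the perturbed retract and \cite[Thm.~4.3.3]{HPhD} for associativity of $\tilde{\star}$; your write-up supplies the details these citations stand for.

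One point worth sharpening: where you invoke an ``analogous degree count using the side conditions'' to conclude that the perturbed inclusion, projection and homotopy are literally $\ext$, $\bm{\rest}$, $\tfrac12\bm{h}$ (rather than further corrections thereof), the paper singles out \emph{equivariance of $h$} as the reason. The mechanism is that $\bm{\delta}$ acts only through the ghost factor and the representation $\bm{L}$, while $\bm{h}$ acts only on the antighost factor; equivariance then forces $\bm{\delta}$ and $\bm{h}$ to commute as linear maps, so together with $\bm{h}\circ\bm{h}=0$ one gets $\bm{h}\,\bm{\delta}\,\bm{h}=0$ and the Neumann series in the Perturbation Lemma collapses to its first term. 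Your antighost-degree argument handles the transferred differential cleanly (since the Lemma gives $\Delta=\tau=\bm{\rest}\,\bm{\delta}\,\ext=\bm{\delta}^0$ directly), but for the homotopy and inclusion the equivariance argument is the cleaner justification; this is what the reference to \cite[Thm.~6.1]{BHP} is covering.
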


 \begin{proof}
   Since $\bm{\calD}$ is a perturbation of the differential $2\bm{\partial}$ fulfilling the assumptions
   of \ref{thm:perturbationlemma} one can apply that version of the perturbation lemma.
   By equivariance of $h$ one obtains the particular form of the homotopy in the perturbed deformation
   retract. See \cite[Thm.~6.1]{BHP} for further details.
   It remains to prove  associativity of the operation 
   \eqref{eq:reducedBRSTstarproduct}. This has been achieved in \cite[Thm.~4.3.3]{HPhD}.
 \end{proof}
 
 \begin{Corollary}
\label{cor:stronglyinvariant}
   If in addition to the assumptions of the preceding theorem the $\star$ product on the $G$-Hamiltonian system  $(M,\omega,\Psi,J)$
   is strongly invariant, then the representations $L^0$ and $\bm{L}^0$ of $\g$ on $\calC^\infty (M_0)[[\lambda]]$ coincide
   and the invariance spaces $H^0 (\g, \calC^\infty(M_0)[[\lambda]])$ and $H^0 (\g, \calC^\infty(M_0))[[\lambda]]$ are naturally isomorphic.
   If furthermore $G$ is connected, then, under the natural identifications
   \[ H^0 (\g, \calC^\infty(M_0)[[\lambda]]) \cong H^0 (\g, \calC^\infty(M_0))[[\lambda]]\cong \calC^\infty(M_0)^\g [[\lambda]] =
     \calC^\infty(M_0)^G[[\lambda]] \ , \]
   the formula
    \begin{equation}
    \label{eq:reducedstronglyinvariantBRSTstarproduct}
    f \, \tilde{\star}\,  g = \bm{\rest} \left( \ext(f) \ast \ext(g) \right) \quad \text{for }
    f,g \in \calC^\infty(M_0)^G[[\lambda]] 
   \end{equation}
   defines a star product on the symplectically reduced phase space $\left(\calC^\infty (M\red G),\{- , -\}_{M\red G} \right)$.
 \end{Corollary}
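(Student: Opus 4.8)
The statement comprises three assertions: \textbf{(a)} that $L^0$ and $\bm L^0$ coincide as $\g$-representations on $\calC^\infty(M_0)[[\lambda]]$; \textbf{(b)} the consequent identification $H^0(\g,\calC^\infty(M_0)[[\lambda]])\cong H^0(\g,\calC^\infty(M_0))[[\lambda]]$; and \textbf{(c)} that $\tilde{\star}$ is a star product on the Poisson algebra $(\calC^\infty (M\red G),\{-,-\}_{M\red G})$ when $G$ is connected. Since (b) and (c) fall out quickly once (a) is available, the plan is to put almost all the weight on (a). The starting observation is that strong invariance, Eq.~\eqref{eq:stronginvariance}, is exactly the identity $\bm L_X f=\tfrac1\lambda\ad_\star(J_X)f=\{J_X,f\}$ for all $f\in\calC^\infty(M)[[\lambda]]$, and $\{J_X,-\}=X_M=L_X$ follows from the moment-map relation $dJ_X=-X_M\lrcorner\,\omega$ together with the sign convention $X_f\lrcorner\,\omega=-df$. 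Hence $\bm L=L$ on $\calC^\infty(M)[[\lambda]]$; in particular $L_X$ is then a derivation of $\star$ as well as of the pointwise product, and it interacts with the moment map components through $\{J_X,J_m\}=J_{[X,E_m]}$ and, by covariance, $J_X\star J_m-J_m\star J_X=\lambda J_{[X,E_m]}$.

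To deduce (a) I would show that the deformed restriction map $\bm{\rest}$ of Proposition~\ref{prop:quantizedKoszulcomplex} actually equals $\rest$, from which $\bm L^0_X=\bm{\rest}\,\bm L_X\,\ext=\rest\,L_X\,\ext=L^0_X$ by Lemma~\ref{Lem:InducedModuleStructureConstraintSet}. Because $G$ is compact it is unimodular, so the modular form $\Delta$ vanishes and the quantized Koszul differential \eqref{eq:deformedKoszuldifferential} reduces in degree one to $\bm\partial_1\big(\sum_m g^m\otimes E_m\big)=\sum_m g^m\star J_m$. Using the relations just recorded one checks that $\ker\bm{\rest}=\im\bm\partial_1$ is $L$-invariant and coincides with $I_{M_0}[[\lambda]]=\ker\rest$: the inclusion $I_{M_0}[[\lambda]]\subseteq\im\bm\partial_1$ is obtained from the generating hypothesis \hyperref[GenCond]{{\rm (GH)}} by a $\lambda$-adic recursion in which strong invariance and $G$-invariance of $\star$ keep the successive remainders of $\sum_m g^m\star J_m-\sum_m g^m J_m$ inside $I_{M_0}$, and the reverse inclusion follows from $J_m|_{M_0}=0$ and the same control on $g\star J_m-g J_m$. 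As the side condition $h_0\circ\ext=0$ forces $\bm{\rest}\circ\ext=\rest\circ\ext=\id$, both $\bm{\rest}$ and $\rest$ are then the canonical projection $\calC^\infty(M)[[\lambda]]\to\calC^\infty(M)[[\lambda]]/I_{M_0}[[\lambda]]$ normalised by the common section $\ext$, hence are equal; this gives (a). Consequently $\bm\delta^0=\delta^0$, and since $L^0$ is $\R[[\lambda]]$-linear and $\lambda$-independent, a power series lies in $\ker\delta^0$ iff each of its coefficients does, so $H^0(\g,\calC^\infty(M_0)[[\lambda]])=\calC^\infty(M_0)^\g[[\lambda]]=H^0(\g,\calC^\infty(M_0))[[\lambda]]$, which is (b). I expect the identity $\bm{\rest}=\rest$ (equivalently $\im\bm\partial_1=I_{M_0}[[\lambda]]$) to be the main obstacle: the delicate point is keeping the $\lambda$-adic remainders of $g\star J_m-g J_m$ inside the vanishing ideal $I_{M_0}$, which genuinely uses both strong invariance and compactness of $G$ (via $\Delta=0$ and the equivariant extension/homotopy data of Theorem~\ref{Thm:KoszulResolutionContraintSurface}), and it is exactly this point that would fail for a merely covariant star product, for which $\bm{\rest}\neq\rest$ in general.

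For (c), assume $G$ connected, so that $\calC^\infty(M_0)^\g=\calC^\infty(M_0)^G\cong\calC^\infty (M\red G)$ by Lemma~\ref{Lem:InducedModuleStructureConstraintSet}, and under the identifications of (b) the operation $f\,\tilde{\star}\,g=\bm{\rest}\big(\ext(f)\ast\ext(g)\big)$ of Theorem~\ref{thm:homologicalreductionstarproduct} becomes a well-defined, $\R[[\lambda]]$-bilinear, associative product on $\calC^\infty (M\red G)[[\lambda]]$; it remains to verify the two conditions characterising a star product. In order $\lambda^0$: for $\lambda$-independent $f,g$ the extension map has trivial ghost part, so $\ext(f)\ast\ext(g)=\ext(f)\star\ext(g)=\ext(f)\ext(g)+O(\lambda)$, and applying $\bm{\rest}=\rest+O(\lambda)$ together with $\rest\circ\ext=\id$ and the fact that $\rest$ is an algebra homomorphism yields $f\,\tilde{\star}\,g=fg+O(\lambda)$. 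For the first-order commutator: $(\calA^\bullet[[\lambda]],\ast,\bm{\calD})$ is a deformation quantization of the classical BRST Poisson algebra of Proposition~\ref{prop:PoissonbracketBRSTalgebra}, so the semiclassical limit of $\ast$ is $\{-,-\}_{\calA}$; transporting this through the deformation retract \eqref{eq:retractquantumBRSTcomplex} and invoking \eqref{eq:reducedPoissonbracketBRSTbracket}, the semiclassical limit of $\tilde{\star}$ on $\calC^\infty (M\red G)$ is $\rest\{\ext(-),\ext(-)\}_{\calA}=\{-,-\}_{M\red G}$, up to the normalisation fixed by the conventions for $\star$. Hence $\tilde{\star}$ as given by \eqref{eq:reducedstronglyinvariantBRSTstarproduct} is a star product quantizing $(\calC^\infty (M\red G),\{-,-\}_{M\red G})$, as claimed.
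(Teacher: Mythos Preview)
Your argument for part (a) has a genuine gap. You attempt to prove the strong claim $\bm{\rest}=\rest$ by showing $\im\bm\partial_1=I_{M_0}[[\lambda]]$, and you justify the inclusion $I_{M_0}[[\lambda]]\subseteq\im\bm\partial_1$ by asserting that strong invariance ``keeps the successive remainders of $\sum_m g^m\star J_m-\sum_m g^m J_m$ inside $I_{M_0}$''. But strong invariance, Eq.~\eqref{eq:stronginvariance}, controls only the $\star$-commutator $J_X\star f-f\star J_X$; it says nothing about the difference $f\star J_m-fJ_m$. That difference is a sum of bidifferential operators applied to $(f,J_m)$ and involves derivatives of $J_m$, which need not vanish on $M_0$. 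So there is no reason to expect $\rest(g\star J_m)=0$, and your $\lambda$-adic recursion does not go through. The identity $\bm{\rest}=\rest$ is in fact stronger than what the corollary requires and is not what the paper proves.

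The paper's argument avoids this entirely. It never claims $\bm{\rest}=\rest$; it only needs $\bm{\rest}\,L_X\,\ext=\rest\,L_X\,\ext$. From strong invariance one has $\bm L=L$, as you correctly observe. The key point is then that the \emph{classical} action $L_X$ commutes with $\partial_1$, $\bm\partial_1$, and $h_0$: the first two because $J$ is equivariant and $\star$ is $G$-invariant (so $L_X(g\star J_m)=(L_Xg)\star J_m+g\star J_{[X,E_m]}$ matches $\bm\partial_1$ applied to $L_X(g\otimes E_m)$), and the last because $h_0$ was chosen equivariant in Theorem~\ref{Thm:KoszulResolutionContraintSurface}. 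Hence $L_X$ commutes with $(\id+(\bm\partial_1-\partial_1)h_0)^{-1}$, and together with the side condition $h_0\circ\ext=0$ (which gives $(\id+(\bm\partial_1-\partial_1)h_0)^{-1}\ext=\ext$) one obtains
\[
\bm L^0_X=\bm{\rest}\,L_X\,\ext=\rest\,(\id+(\bm\partial_1-\partial_1)h_0)^{-1}L_X\,\ext=\rest\,L_X\,(\id+(\bm\partial_1-\partial_1)h_0)^{-1}\ext=\rest\,L_X\,\ext=L^0_X\,.
\]
Your treatment of parts (b) and (c) is fine once (a) is in hand.
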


 \begin{proof}
   By definition in Eq.~\eqref{eq:stronginvariance}, strong invariance of the star product implies that
   the representations  $L$ and $\bm{L}$  coincide. Since by definition for $x \in \g$
   \[
     \bm{L}^0_X =
     \bm{\rest} L^0_X  \ext = \rest (\id + (\bm{\partial}_1 -\partial_1)h_0)^{-1} L^0_X \ext
   \]  
   and since $L_X$ commutes with $ \bm{\partial}_1 $, $\partial_1$, and $h_0$, one concludes  that
   $\bm{L}^0_X = \rest L_X \ext =L_X^0$. 
   But this implies that
   $H^0 (\g, \calC^\infty(M_0)[[\lambda]]) \cong H^0 (\g, \calC^\infty(M_0))[[\lambda]]$.
   The rest of the claim is a straightforward consequence of this. 
 \end{proof}
 
 The final result is due to Herbig \cite{HPhD}. See loc.~cit.~for a proof. 

\begin{Theorem}[cf.~{\cite[Prop.~4.3.6]{HPhD}}]
\label{thm:homologicalreduction}
  Let $G$ be a compact connected semisimple Lie group and $(M,\omega,\Psi,J)$ a $G$-Hamiltonian system satisfying
  the generating condition \hyperref[GenCond]{{\rm(GH)}} and the acyclicity condition \hyperref[AcycCond]{{\rm (AC)}}.
  Assume further that $\star$ is an invariant and covariant star product on $\calC^\infty (M)[[\lambda]]$.
  Then there exists a sequence of continuous linear maps \[ S_k :\calC^\infty (M_0)^G \to \calC^\infty (M_0) \]
  such that 
  \[
    S = \sum_{k\in \N} \lambda^k S_k : H^0 (\g, \calC^\infty (M_0) [[\lambda]] = \calC^\infty (M_0)^G[[\lambda]] \to
    \calC^\infty (M_0)[[\lambda]]
  \]
  has image $H^0 (\g, \calC^\infty (M_0)[[\lambda]])$ and is a topologically linear isomorphism onto its image.
  Moreover, if $\ast$ is the star product on the quantized BRST algebra and $\bm{\rest}$ the deformed restriction map
  from Proposition \ref{prop:quantizedKoszulcomplex}, then 
  \[
    f \,\hat{\star}\, g := S^{-1} \left( S(f )\,\tilde{\star}\, S(g) \right) = S^{-1} \bm{\rest} \left( \ext (S(f)) \ast \ext(S(g)) \right)
    \:\: \text{for }  f,g \in \calC^\infty (M\red G) \cong \calC^\infty (M_0)^G
  \]
  defines a star product on the symplectically reduced phase space $\left(\calC^\infty (M\red G),\{- , -\}_{M\red G} \right)$.
\end{Theorem}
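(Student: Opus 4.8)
The plan is to reduce the statement to Theorem~\ref{thm:homologicalreductionstarproduct} by proving that the quantized representation $\bm{L}^0$ of $\g$ on $\calC^\infty (M_0)[[\lambda]]$ is \emph{equivalent}, as a formal deformation of a $\g$-module, to the classical representation $L^0$. Concretely, I would construct an $\R[[\lambda]]$-linear map $S = \id + \sum_{k\geq 1} \lambda^k S_k$ on $\calC^\infty (M_0)[[\lambda]]$, with all $S_k \colon \calC^\infty (M_0) \to \calC^\infty (M_0)$ continuous and $G$-equivariant, such that $S \circ L^0_X = \bm{L}^0_X \circ S$ for all $X \in \g$. Since $S = \id + O(\lambda)$ it is invertible, and it maps the invariants $\calC^\infty (M_0)^\g [[\lambda]]$ bijectively and homeomorphically onto $Z^0 (\g, \calC^\infty (M_0)[[\lambda]]) = H^0 (\g, \calC^\infty (M_0)[[\lambda]])$; using that $G$ is connected, $\calC^\infty (M_0)^\g = \calC^\infty (M_0)^G \cong \calC^\infty (M\red G)$, so restricting $S$ to the invariants produces the maps $S_k$ of the statement together with the claimed topological isomorphism. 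Transporting the associative product $\tilde{\star}$ of Theorem~\ref{thm:homologicalreductionstarproduct} through $S$ then gives $f \, \hat{\star}\, g := S^{-1}\!\big(S(f) \, \tilde{\star}\, S(g)\big)$, which by~\eqref{eq:reducedBRSTstarproduct} equals $S^{-1} \bm{\rest}\big(\ext(S(f)) \ast \ext(S(g))\big)$ and is automatically $\R[[\lambda]]$-bilinear and associative, being the pull-back of an associative product along an $\R[[\lambda]]$-module isomorphism.

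The map $S$ is built by the usual order-by-order obstruction argument. Assume $S^{(n-1)} = \id + \sum_{k=1}^{n-1} \lambda^k S_k$ already intertwines $L^0$ and $\bm{L}^0$ modulo $\lambda^n$. The coefficient of $\lambda^n$ in $\bm{L}^0_X \circ S^{(n-1)} - S^{(n-1)} \circ L^0_X$ is an operator $b_n(X) \in \End\big(\calC^\infty (M_0)\big)$ depending only on $S_1,\dots,S_{n-1}$ and the known higher-order parts of $\bm{L}^0_X$; because $\bm{L}^0$ is a genuine $\g$-representation (the Lemma preceding Theorem~\ref{thm:homologicalreductionstarproduct}), $X \mapsto b_n(X)$ is a $1$-cocycle of $\g$ with values in $\End\big(\calC^\infty (M_0)\big)$, where $X$ acts on $\varphi$ by $X \cdot \varphi = L^0_X \varphi - \varphi\, L^0_X$, and extending the intertwiner to order $n$ amounts to writing $b_n = \delta S_n$ for some $0$-cochain $S_n$. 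Semisimplicity of $\g$ guarantees $H^1 (\g, -) = 0$ (Whitehead's lemma), and since $G$ is compact one can decompose $\calC^\infty (M_0)$ into its $G$-isotypic Fr\'echet subspaces, on which the Casimir element of $\g$ acts by scalars that vanish on the trivial isotype and are bounded away from $0$ elsewhere; inverting the Casimir on the complement of the invariants yields a \emph{continuous} homotopy trivialising $1$-cocycles, hence a continuous $S_n$ with $\delta S_n = b_n$, which may be made $G$-equivariant by averaging over $G$. Passing to the $\lambda$-adic limit produces $S$. (For a strongly invariant $\star$ one has $\bm{L}^0 = L^0$ and $S = \id$ works, recovering Corollary~\ref{cor:stronglyinvariant}.)

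It remains to check that $\hat{\star}$ is a star product quantising $\big(\calC^\infty (M\red G), \{-,-\}_{M\red G}\big)$; associativity has just been established. Modulo $\lambda$ all of $S$, $\bm{\rest}$ and $\ast$ reduce to $\id$, $\rest$ and the pointwise product $\mu$, and since $\rest$ is an algebra homomorphism with $\rest \circ \ext = \id$ one gets $f \,\hat{\star}\, g = fg + O(\lambda)$. For the $\lambda$-linear term, the antisymmetric part of $\tilde{\star}$ in order $\lambda$ equals the reduced Poisson bracket $\{-,-\}_{M\red G}$ by Theorem~\ref{thm:homologicalreductionPoissonbracket} (the product $\ast$ on $\calA^\bullet[[\lambda]]$ deforms $\mu$ with leading bracket $\{-,-\}_{\calA}$, whose restriction along $\ext$ and $\rest$ is $\{-,-\}_{M\red G}$), while conjugation by $S = \id + \lambda S_1 + \cdots$ only adds a term symmetric in its two arguments and hence does not change the antisymmetric part; thus $f \,\hat{\star}\, g - g \,\hat{\star}\, f = \lambda \{f,g\}_{M\red G} + O(\lambda^2)$. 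The functional-analytic core of the argument — and the main obstacle — is the second paragraph: one must establish the vanishing of $H^1$ of $\g$ with values in the (suitably topologised) endomorphisms of $\calC^\infty (M_0)$ \emph{together with a continuous splitting}, uniformly enough that the inductively defined $S_k$ remain continuous linear maps of Fr\'echet spaces. This is precisely where both hypotheses enter — compactness of $G$ supplies the well-behaved isotypic decomposition and the Casimir spectral gap, semisimplicity of $\g$ kills $H^1$ on the invariant isotype — and it is also what forces one to keep track of the auxiliary choices $(\ext, h, \bm{\rest})$ entering the definition of $\bm{L}^0$. This line of reasoning follows Herbig~\cite{HPhD}.
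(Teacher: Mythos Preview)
The paper does not actually prove this theorem; immediately after the statement it reads ``The final result is due to Herbig \cite{HPhD}. See loc.~cit.~for a proof.'' Your proposal therefore supplies more than the paper does: a detailed sketch of Herbig's argument, namely constructing an $\R[[\lambda]]$-linear intertwiner $S=\id+O(\lambda)$ between the classical and quantized $\g$-representations on $\calC^\infty(M_0)[[\lambda]]$ via order-by-order obstruction theory, trivialising the $1$-cocycle obstructions using Whitehead's lemma together with a continuous Casimir-based homotopy coming from the isotypic decomposition for compact $G$, and then transporting $\tilde{\star}$ through $S$. This is the correct strategy and matches the cited reference; your identification of the continuity of the coboundary splittings as the genuine technical crux is accurate, and your verification of the star-product axioms for $\hat{\star}$ is sound.
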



\section{Application to the model}
\label{sec:application}
We now want to apply the homological quantization method to the quantum lattice gauge model obtained by
deformation quantization, see Section \ref{sec:fedosov}.
By the discussion in the previous section, we have to check the generating
hypothesis \hyperref[GenCond]{(GH)} and the acyclicity condition
\hyperref[AcycCond]{(AC)} for the $G$-Hamiltonian system
$({T^*G}^N,\omega,\ul\Psi,J)$, together with the invariance and covariance of the star product
derived in Subsection \ref{sec:SympStrucCotangBdl}. As a consequence, we will be able to conclude that Theorems \ref{thm:homologicalreductionstarproduct} and \ref{thm:homologicalreduction} hold true for our model. This will be accomplished in the following subsections for the case $G=\SU (2)$. 

It is not evident whether these properties also hold for other star products, notably for the star product of Weyl type. To answer this question for the latter, one may use the analysis of the relation with the standard ordered star product as provided in Section 8 of \cite{BNW1}. This will be discussed elsewhere.

In the sequel, we will denote by $M$ the cotangent bundle $T^*G^N$ with the
natural analytic structure inherited from the Lie group $G$
and by $\omega$ the canonical symplectic form on $M = T^*G^N$.

\subsection{The generating hypothesis}


We wish to apply \cite[Thm.~6.3]{AGJ}, which relates the generating hypothesis to algebraic conditions for a covering by local models. 

Let $(\ul a, \ul A) \in M$ be given. The tangent space $\tg_{(\ul a , \ul A)} M$ is acted upon by the isotropy representations of the stabilizer subgroup $\gp_{(\ul a , \ul A)}$ of $(\ul a , \ul A)$ and the corresponding Lie subalgebra $\la_{(\ul a , \ul A)}$ (where the latter representation is the Lie algebra representation induced by the former one). Choose a $\gp_{(\ul a , \ul A)}$-invariant vector space complement $V$ of the tangent space of the orbit at $(\ul a , \ul A)$ in $\ker\big(\mm'_{(\ul a, \ul A)}\big) \cong \tg_{(\ul a,\ul A)} \mm^{-1}(0)$. For example, we may choose the orthogonal complement  with respect to the Riemannian metric induced by the scalar product on $\la$. By the theory of symplectic reduction \cite[Ch.\ 10]{BuchI}, $V$ is a symplectic subspace, called a symplectic slice, and  the induced action of $\gp_{(\ul a , \ul A)}$ on $V$ is Hamiltonian with momentum mapping
\beq\label{G-JV}
\mm^V : V \to \la_{(\ul a , \ul A)}^\ast
 \,,\quad
\langle \mm^V(v) , B \rangle = \frac 1 2 \omega_{(\ul a , \ul A)}\left( v , B \cdot v \right) ~\text{ for all }~ v \in V, ~ B \in \la_{(\ul a , \ul A)}\,,
\eeq
where $B \cdot v$ means the action via the isotropy representation. By the Symplectic Tubular Neighbourhood Theorem, the Hamiltonian Lie group action so defined is a local model for the original Hamiltonian Lie group action in a neighbourhood of $(\ul a , \ul A)$ in the sense of Theorem 4.1 in \cite{AGJ}. Therefore, Theorem 6.3 in this article yields that the generating hypothesis holds if and only if for every element of a covering of $M$ by symplectic tubular neighbourhoods, the ideal $I(\mm^V)$ generated in the polynomial ring $\RR[V]$ by the functions $\mm^V_B$ with $B \in \mf g_{(\ul a , \ul A)}$ is a real radical ideal. The latter means that $I(\mm^V)$ coincides with its real radical, that is,
$$
I(\mm^V)
 =
\left\{f \in \RR[V] : f^{2k} + \sum_{j=1}^r g_j^2 \in I(\mm^V) \text{ for some $k$ and some $g_j \in \RR[V]$} \right\}.
$$
In the special situation of a cotangent bundle, it suffices to consider symplectic tubular neighbourhoods about orbits of points in the zero section. 
Thus, let us determine $V$ and $\mm^V$ for elements $(\ul a , \ul 0)$ of the zero section. First, consider the general situation of the Hamiltonian Lie group action $(\ctg Q , \gp, \mm)$ associated with a Lie group action $(Q,\gp)$. Let 
$$
s_0 : Q \to \ctg Q \,,\quad q \mapsto s_0(q) := 0_q\,,
$$
denote the zero section. For every $q \in Q$, we have the natural splitting
\beq\label{G-spl-ctg}
\tg_{0_q} (\ctg Q) = \tg_q Q \oplus \ctg_q Q\,,
\eeq
given by the tangent mapping $(s_0)'_q : \tg_q Q \to \tg_{0_q}(\ctg Q)$ and the inclusion $\ctg_q Q \subset \tg_{0_q}(\ctg Q)$. One has
\al{\label{G-tg0q}
\tg_{0_q}(\gp \cdot 0_q) & = \tg_q (\gp \cdot q) \oplus \{0\}\,,
\\ \label{G-kerJ}
\ker\big(\mm'_{0_q}\big) & = \tg_q Q \oplus \left(\mm^{-1}(0) \cap \ctg_q Q\right)\,,
\\ \label{G-splFm-ctg}
\omega_{0_q}\big((X_1,\eta_1) , (X_2,\eta_2)\big) & = \eta_1(X_2) - \eta_2(X_1)\,,
 }
where $X_i \in \tg_qQ$ and $\eta_i \in \ctg_qQ$.
The last equation means that the symplectic form $\omega_{0_q}$ is given by the natural symplectic form of $\tg_q Q \oplus \ctg_q Q$, Moreover, the isotropy representation of $B \in \la_{0_q} = \la_q$ is given by 
\beq\label{G-ItpDst-ctg}
\bbma D(B) & 0 \\ 0 & - D(B)^{\mr T} \ebma\,,
\eeq
where $D :  \la_q \to \End(\tg_q Q)$ is the isotropy representation defined by the action of $\gp$ on $Q$. By \eqref{G-tg0q} and \eqref{G-kerJ}, if $V_q$ is a $G_q$-invariant vector space complement of $\tg_q(\gp \cdot q)$ in $\tg_q Q$, then the subspace $V \subset \tg_{0_q}(\ctg Q)$ defined relative to the splitting \eqref{G-spl-ctg} by
$$
V = V_q \oplus \left(\mm^{-1}(0) \cap \ctg_q Q\right)
$$
is a $G_{0_q}$-invariant complement of $\tg_{0_q}(\gp \cdot 0_q)$ in $\ker(\mm'_{0_q})$. On the one hand, by the special form of $\mm$ in the cotangent bundle situation, the subspace $\left(\mm^{-1}(0) \cap \ctg_q Q\right) \subset \ctg_q Q$ coincides with the annihilator of the subspace $\tg_q(\gp \cdot q) \subset \tg_q Q$. On the other hand, this annihilator may be identified with $V_q^\ast$. Thus, we may write 
\beq\label{G-V-allg-Idtfiz}
V = V_q \oplus V_q^\ast\,.
\eeq
Under this identification, according to \eqref{G-ItpDst-ctg}, the restriction to $V$ of the isotropy representation of $B \in \la_{0_q}$ is given by 
\beq\label{G-ItpDst-V}
\bbma D(B)_{\res V_q} & 0 \\ 0 & - \big(D(B)_{\res V_q}\big)^{\mr T} \ebma\,.
\eeq
Thus, by \eqref{G-splFm-ctg}, the momentum mapping $\mm^V$ defined by \eqref{G-JV} reads 
\beq\label{G-JV-allg}
\langle \mm^V(X,\eta) , B \rangle = \eta\big(D(B) X\big)\,.
\eeq
Now, we apply this to our model. Here, $Q = \gp^N$ and $q = \ul a$. Since the fundamental vector field generated by $B \in \la$ of the action by diagonal conjugation is given by 
$$
B_{\gp^N}(\ul a) = \mr R_{\ul a}' B - \mr L_{\ul a}' B = \mr L_{\ul a}' \big(\Ad(\ul a^{-1}) B - B\big)\,,
$$
we have 
$$
\tg_{\ul a} (\gp \cdot \ul a) = \mr L_{\ul a}' \big\{\Ad(\ul a^{-1}) B - B : B \in \la\big\}\,.
$$
For the complement $V_{\ul a}$, we choose the orthogonal complement with respect to the metric defined by some $G$-invariant scalar product $\langle \cdot , \cdot \rangle$ on $\la$. This leads to
$$
V_{\ul a} = \mr L_a' \{\ul X \in \la^N : \sum_i \Ad(a_i) X_i - X_i = 0\}\,.
$$
This means that under the metric isomorphism, $V_{\ul a}$ corresponds to $\mm^{-1}(0) \cap \ctg_{\ul a} \gp^N$. Using the metric to identify $V_{\ul a}^\ast$ with $V_{\ul a}$, we obtain $V = V_{\ul a} \oplus V_{\ul a}$, with the pairing given by the metric. In analyzing $J^V$, we may omit the transport to $\ul a$. Thus, we may work with 
 \al{\label{G-Va}
V_{\ul a} & = \{\ul X \in \la^N : \sum_i \Ad(a_i) X_i - X_i = 0\}\,,
\\  \label{G-V}
V
 & =
\left\{(\ul X , \ul Y) \in \la^N \oplus \la^N : \sum\nolimits_i \Ad(a_i) X_i - X_i = 0 = \sum\nolimits_i \Ad(a_i) Y_i - Y_i\right\}\,.
 }
The isotropy representation of $B \in \la_{\ul a}$ is given by $D(B)  = \ad(B)$. Thus, \eqref{G-JV-allg} yields
\beq\label{G-JV-2}
\langle \mm^V(\ul X , \ul Y) , B \rangle
 = 
\sum_i \langle Y_i , [B , X_i] \rangle
 =
\left\langle \sum\nolimits_i  [X_i , Y_i ] , B\right\rangle
 \,,\quad
B \in \la_{\ul a}\,.
\eeq

Next, we restrict attention to the case $G = \SU(2)$. Under the identification of $\su(2)$ with $\RR^3$ endowed with the cross product, see e.g.\ Example 5.2.8 in \cite{BuchI}, \eqref{G-JV-2} reads 
\beq\label{G-JV-R3}
\langle \mm^V(\ul X , \ul Y) , B \rangle = \sum_i (X_i  \times Y_i) \cdot B
 \,,\quad
B \in \la_{\ul a} \subset \RR^3\,.
\eeq
Here, the dot denotes the standard scalar product in $\RR^3$. Let $Z$ and $T$ denote the center and the toral subgroup of diagonal matrices, respectively. The following stabilizers occur \cite{FJRS}.

\ben

\item[$(Z)$] $\gp_{\ul a} = Z$. This is the generic case provided $N \geq 2$.

\item[$(T)$] $\gp_{\ul a}$ is conjugate to $T$. This happens if and only if all $a_i$ commute but at least one of them is not $\pm \II$. Since a tubular neighbourhood is $\gp$-invariant, without loss of generality we may assume $a_i \in T$ and $\gp_{\ul a} = T$.

\item[$(G)$] $\gp_{\ul a} = \SU(2)$. This holds in the case $\ul a \in Z^N$, that is, $a_i = \pm \II$ for all $i$.

\een

In case $\gp_{\ul a} = Z$, we have $\mf g_{\ul a} = 0$, so that $\mm^V \equiv 0$ and hence $I(\mm^V) = 0$, which is a real radical ideal, indeed. 

\paragraph{Stabilizer  $T$}

Here,  $a_1 , \dots , a_N \in T$. Under the isomorphism $\su(2) \cong \RR^3$, the Lie subalgebra of $T$ corresponds to the subspace spanned by $\vec e_1 \in \RR^3$. Hence, by \eqref{G-JV-R3},
\beq\label{G-JV-T}
\mm^V(\ul X , \ul Y) = \sum_i \big((X_i \times Y_i) \cdot \vec e_1\big) \vec e_1\,.
\eeq
Since the action is Abelian, we may use one of the criteria of Theorem 6.8 in \cite{AGJ} to show that the ideal generated by the functions $\mm^V_B$ with $B \in \mf g_{\ul a}$ in $\calC^\infty(V)$ is a real radical ideal. Then, Theorem 6.3 of that work ensures that the ideal generated by these functions in $\RR[V]$ is a real radical ideal, too\todo{proving theorem 6.8 for $\calC^\infty$ is harder than proving it for $\RR[V]$, since we only need it for the latter, we may give a simpler direct proof?}. The criterion provided by Theorem 6.8 we use is that the following condition holds at every point $(\ul X,\ul Y) \in (\mm^V)^{-1}(0)$.
\medskip

{\bf Nonpositivity condition}.
{\it 
For every $B \in \mf g_{\ul a}$, either $\mm^V_B = 0$ in some neighbourhood of $(\ul X,\ul Y)$ in $V$ or any neighbourhood of $(\ul X,\ul Y)$ contains points $(\ul X_\pm,\ul Y_\pm)$ such that $\mm^V_B(\ul X_-,\ul Y_-) < 0$ and $\mm^V_B(\ul X_+,\ul Y_+) > 0$.
}
\medskip

Clearly, we may restrict attention to $B = \vec e_1$. Without loss of generality, we may assume that $a_N \neq \pm\II$. Then, we may find $\vec x$ in the $x_2$-$x_3$ plane such that 
\beq\label{G-nonpos-1}
\sum_{i=1}^{N-1} \big(\Ad(a_i) \vec e_2 - \vec e_2\big) + \Ad(a_N) \vec x - \vec x = 0
\,.
\eeq
Then, denoting the rotation by $\pi/2$ about the $x_1$-axis by $R$, we have
\beq\label{G-nonpos-2}
\sum_{i=1}^{N-1} \big(\Ad(a_i) \vec e_3 - \vec e_3\big) + \Ad(a_N) R \vec x - R \vec x = 0
\,.
\eeq
Define curves
$$
\gamma_\pm(t)
 := 
 \left(
\ul X + t (\vec e_2 , \dots , \vec e_2 , \vec x)
 ,
\ul Y \pm t (\vec e_3 , \dots , \vec e_3 , R \vec x)
 \right)
\,.
$$
By \eqref{G-nonpos-1} and \eqref{G-nonpos-2}, $\gamma_\pm(t) \in V$ for all $t$. We compute
 \ala{
\mm^V_B\big(\gamma_\pm(t)\big)
 & =
\sum_{i=1}^{N-1} 
\big((X_i + t \vec e_2) \times (Y_i \pm t \vec e_3)\big) \cdot \vec e_1
 +
\big((X_N + t \vec x) \times (Y_N \pm t R \vec x)\big) \cdot \vec e_1
\\
 & =
\alpha_\pm t \pm \beta t^2
 }
with
 \ala{
\alpha_\pm
 & =
\sum_{i=1}^{N-1} (\vec e_3 \cdot Y_i \pm \vec e_2 \cdot X_i)
 +
(\vec e_1 \times \vec x) \cdot Y_N \pm \big((R\vec x) \times \vec e_1\big) \cdot X_N
\,,
\\
\beta 
 & =
N + \|\vec x\|^2 - 1
\,.
 }
If $\alpha_+ \neq 0$, the zeros of the polynomial $\alpha_+ t + N t^2$ are distinct and hence $\mm^V_B\big(\gamma_\pm(t)\big)$ takes both positive and negative values in any neighbourhood of $t=0$. A similar argument applies if $\alpha_- \neq 0$. Finally, if both $\alpha_\pm = 0$, then $\mm^V_B\big(\gamma_+(t)\big) > 0$ and $\mm^V_B\big(\gamma_-(t)\big) < 0$ for any $t \neq 0$.

This shows that, in the case where $\ul a$ has stabilizer $T$, the ideal generated by the functions $\mm^V_B$ with $B \in \mf g_{\ul a}$ in $\calC^\infty(V)$ is a real radical ideal.

\paragraph{Stabilizer $\SU(2)$}

Here, $\ul a \in Z^N$ and hence $\mf g_{\ul a} = \mf g \equiv \RR^3$. Hence, by \eqref{G-V} and \eqref{G-JV-R3},
\beq\label{G-JV-G}
V = \la^N \oplus \la^N
 \,,\qquad
\mm^V(\ul X , \ul Y) = \sum_i X_i \times Y_i\,.
\eeq

In this case, we have the following

\bsz\label{S-ErzBed-G}

For every $\ul a \in Z^N$, the ideal $I(\mm^V)$ is a real radical ideal.

\esz

\bpf

Denote $\mm^V_k := \mm^V \cdot \vec e_k$. Thus,
$$
\mm^V_k(\ul X,\ul Y) = \sum_i (X_i \times Y_i) \cdot \vec e_k = \sum_i X_{im} Y_{in} \ve_{mnk}
$$
(summation convention). By letting $\ul X , \ul Y \in \la_\CC^N$, we can extend $\mm^V_k$ to polynomial functions $\tilde\mm^V_k$ on $V_\CC$. Let $I_\CC$ denote the ideal in $\CC[V_\CC]$ generated by these extensions. 

To prove the assertion, we apply the criterion of Theorem 6.5 in \cite{AGJ} \todo{Originalzitat!}, which states that $I(\mm^V)$ is a real radical ideal in $\RR[V]$ if and only if 

\ben

\item $I_\CC$ is radical, meaning that
$$
I_\CC = \{f \in \CC[V_\CC] : f^k \in I_\CC \text{ for some } k\}\,,
$$

\item for every irreducible component $W$ of the zero locus $(\tilde\mm^V)^{-1}(0)$ of the $\tilde\mm^V_k$, the real dimension of (the smooth part of) $W \cap V$ coincides with the complex dimension of (the smooth part of) $W$.

\een

To check these conditions, we apply Theorem 7.8 in \cite{AGJ} \todo{Originalzitat!}, which states that if all $W$ contain a point $(\ul X,\ul Y)$ where the differentials (tangent mappings) $\mr d \tilde\mm^V_k(\ul X,\ul Y)$ are linearly independent, then $I_\CC$ is radical and the complex dimension of $W$ is $3(2N-1)$. We compute
$$
\mr d \tilde\mm^V_k(\ul X,\ul Y)
 = 
\sum_i  (\mr d X_i \times Y_i + X_i \times \mr d Y_i) \cdot \vec e_k
 = 
\sum_i \big( (Y_i \times \vec e_k) \cdot \mr d X_i + (\vec e_k \times X_i) \cdot \mr d Y_i\big) \ .
$$
Hence, for $\lambda_1 , \lambda_2 , \lambda_3 \in \CC$,
$$
\sum_k \lambda_k \mr d \mm^V_k(\ul X,\ul Y)
 =
\sum_i  (Y_i \times \vec\lambda) \cdot \mr d X_i + (\vec\lambda \times X_i) \cdot \mr d Y_i\,.
$$
This vanishes if and only if
$$
\vec\lambda \times X_i = 0
 ~\text{ and }~
\vec\lambda \times Y_i = 0
 ~\text{ for all $i$.}
$$
This system of linear equations has a nontrivial solution $\vec\lambda$ if and only if all $X_i$ and $Y_i$ are parallel. We check that the subset $M^V \subset (\tilde\mm^V)^{-1}(0)$ of points violating this condition is dense. Let $(\ul X,\ul Y) \in (\tilde\mm^V)^{-1}(0)$ such that all $X_i$ and $Y_i$ are parallel, i.e., $X_i = \xi_i \vec a$ and $Y_i = \upsilon_i \vec a$ with $\vec a \in \CC^3 \setminus 0$ and $\xi_i , \upsilon_i  \in \CC$. We construct a curve $\gamma(t)$ such that $\gamma(0) = (\ul X,\ul Y)$ and $\gamma(t) \in M^V$ for all $t \neq 0$. Choose $\vec b \in \CC^3 \setminus 0$ so that $\vec a$ and $\vec b$ are not parallel. We have to distinguish the following cases. If $\xi_1 , \upsilon_2 \neq 0$, we put
$$
\gamma(t) := \big((X_1 , t \xi_1 \vec b + (1-t) X_2 , X_3 , \dots , X_N) , (t \upsilon_2 \vec b  + (1-t) Y_1 , Y_2 , Y_3 , \dots , Y_N)\big)\,.
$$
If $\xi_1 \neq 0$ and $\upsilon_2 = 0$, then 
$$
\gamma(t) := \big((X_1 , t \xi_1 \vec b + (1-t) X_2 , X_3 , \dots , X_N) , (t^2 \vec b + (1-t) Y_1 , t \vec a , Y_3 , \dots , Y_N)\big)
$$
and analogously for $\xi_1 = 0$ and $\upsilon_2 \neq 0$. Finally,  if $\xi_1 = \upsilon_2 = 0$, then 
$$
\gamma(t) := \big((t \vec a , t \vec b + (1-t) X_2 , X_3 , \dots , X_N) , (t \vec b + (1-t) Y_1 , t \vec a , Y_3 , \dots , Y_N)\big)\,.
$$
We leave it to the reader to check that in each case, $\tilde\mm^V\big(\gamma(t)\big) = 0$ for all $t$. Then, $\gamma(t) \in M^V$ for all $t \neq 0$. 

As a consequence, Theorem 7.8 in \cite{AGJ} cited above yields that $I_\CC$ is radical and that the irreducible components of $(\tilde\mm^V)^{-1}(0)$ have complex dimension $3(2N-1)$. In view of Theorem 6.5 in \cite{AGJ} cited above, to prove the assertion it remains to show that for all irreducible components $W$ of $(\tilde\mm^V)^{-1}(0) $, the real dimension of $W \cap V$ is $3(2N-1)$, too. Now, $W \cap V$ is an irreducible component of $(\mm^V)^{-1}(0)$ and the argument showing that $M^V$ is dense in $(\tilde\mm^V)^{-1}(0)$ applies without change to the subset of $(\mm^V)^{-1}(0)$ of points $(\ul X,\ul Y)$, where the differentials $\mr d \mm^V_k$ are linearly independent. This yields the assertion.
\epf


\subsection{The acyclicity condition}


We apply Theorem 3.1 in \cite{BHP}. According to this theorem, if the generating hypothesis is satisfied, a sufficient condition for the acyclicity condition to hold is that the set of points $(\ul a , \ul A)$ where $\mm'_{(\ul a,\ul A)}$ is surjective is dense in $\mm^{-1}(0)$. To check this, we need the following lemma. Given $a \in G$, let $\mr C_\la(a)$ denote the centralizer of $a$ in $\la$, i.e.,
$$
\mr C_\la(a) := \{X \in \la : \Ad(a) X = X\}
\,.
$$

\ble\label{L-AzyBed-1}

The orthogonal complement of $\im\big(\mm'_{(\ul a,\ul A)}\big)$ in $\la$ is
$$
\left(\bigcap\nolimits_i \mr C_\la(a_i)\right) \cap \left(\bigcap\nolimits_i \mr C_\la\big(\Ad(a_i) A_i\big)\right)\,.
$$

\ele

\bpf

We compute
$$
\mm'_{(\ul a,\ul A)} (\mr L_{\ul a}' \ul X , \ul Y)
 =
\sum_i \big( \Ad(a_i)[X_i,A_i] + \Ad(a_i) Y_i - Y_i\big) \ .
$$
Hence, $B \in \la$ is orthogonal to $\im\big(\mm'_{(\ul a,\ul A)}\big)$ if and only if
$$
\left\langle\sum\nolimits_i \big( \Ad(a_i)[X_i,A_i] + \Ad(a_i) Y_i - Y_i , B \big) \right\rangle = 0
 ~\text{ for all }~ \ul X , \ul Y \in \la^N\,.
$$
This is equivalent to 
 \ala{
\left\langle \Ad(a_i)[X_i,A_i] , B \right\rangle
 & = 
\left\langle X_i , [A_i , \Ad(a_i^{-1}) B] \right\rangle  = 0
 ~\text{ for all }~ X_i \in \la\,,
\\ 
\left\langle \Ad(a_i) Y_i - Y_i , B \right\rangle
 & = 
\left\langle Y_i , \Ad(a_i^{-1}) B - B] \right\rangle  = 0
 ~\text{ for all }~ Y_i \in \la\,,
 }
that is, to
$$
[\Ad(a_i) A_i , B] = 0 ~\text{ and }~ \Ad(a_i) B = B ~\text{ for all }~ i\,.
$$
This yields the assertion.
\epf

Now, as before, let $G=\SU(2)$ and let $T \subset \SU(2)$ denote the subgroup of diagonal matrices.

\ble\label{L-AzyBed-2}

Let $a_1 , a_2 \in T \setminus \{\pm\II\}$. For every $B_1 \in \la$, there exists $B_2 \in \la$ such that $\Ad(a_1) B_1 - B_1 + \Ad(a_2) B_2 - B_2 = 0$.

\ele

\bpf

We identify the adjoint action of $\gp$ on $\la$ in the usual way with the action on $\RR^3$ defined via the covering homomorphism $\SU(2) \to \SO(3)$. Then, the Lie subalgebra $\mf t$ associated with $T$ is given by the $x_1$-axis and the linear transformations $\Ad(a)$ with $a \in T$ correspond bijectively to the rotations about this axis. If $\Ad(a_1) B_1 - B_1 = 0$, we may choose $B_2=0$. Otherwise, $B_1 \notin \mf t$ and hence also $\Ad(a_2) B_1 - B_1 \neq 0$. Since both $\Ad(a_1) B_1 - B_1$ and $\Ad(a_2) B_1 - B_1$ belong to the $x_2$-$x_3$-plane, there is $a \in T$ such that 
$$
\Ad(a_1) B_1 - B_1 = \lambda \Ad(a) (\Ad(a_2) B_1 - B_1)
\,.
$$
Putting $B_2 := - \lambda \Ad(a) B_1$, we obtain the desired result.
\epf

\bsz\label{S-AzyBed}

For $\gp = \SU(2)$, the acyclicity condition is satisfied.

\esz

\bpf

As noted above, in view of Theorem 3.1 in \cite{BHP}, it suffices to check that the set of points $(\ul a , \ul A)$ where $\mm'_{(\ul a,\ul A)}$ is surjective is dense in $\mm^{-1}(0)$ . In view of Lemma \rref{L-AzyBed-1}, $\mm'_{(\ul a,\ul A)}$ is surjective if and only if 
$$
I(\ul a,\ul A) := \left(\bigcap\nolimits_i \mr C_\la(a_i)\right) \cap \left(\bigcap\nolimits_i \mr C_\la\big(\Ad(a_i) A_i\big)\right)
$$
vanishes. Let $(\ul a,\ul A) \in \mm^{-1}(0)$ be given. The subspace $I(\ul a,\ul A)$ can be $0$, $\mf t$ or $\la$. In the first case, nothing has to be shown. In the other two cases, we will construct a path $ t \mapsto \ul A(t)$ such that 
\beq\label{G-S-AzyBed}
\ul A(0) = \ul A
 \,,\qquad
(\ul a , \ul A(t)) \in \mm^{-1}(0) ~~ \forall~t
 \,,\qquad
I(\ul a , \ul A(t)) = 0~~ \forall~t \neq 0\,.
\eeq
This will yield the assertion. In case $I(\ul a,\ul A) = \la$, one has $a_i = \pm \II$ and $A_ i= 0$ for all $i$. There exist $B_1$, $B_2 \in \la$ such that $\mr C_\la(B_1) \cap \mr C_\la(B_2) = 0$. Define 
$$
\ul A(t) := (t B_1 , t B_2 , 0 , \dots , 0) \,,\qquad t \in \RR\,.
$$
In case $I(\ul a,\ul A) = \mf t$, one has $a_i \in T$ and $A_i \in \mf t$ for all $i$. Then, $\Ad(a_i) A_i = A_i$ for all $i$. If $a_i = \pm\II$ for all $i$, then $\bigcap\nolimits_i \mr C_\la(a_i) = \mf g$, so for some $j$ we must have $\mr C_{\mf g}(A_j) = \mf t$. Assuming without loss of generality that $j=1$, we may choose $B \in \mf g \setminus \mf t$ and put 
\beq\label{G-S-AzyBed-1}
\ul A(t) := (A_1 , A_2 + t B , A_3 , \dots , A_N) \,,\qquad t \in \RR\,.
\eeq
Then, $I(\ul a , \ul A(t)) \subset \mr C_\la(A_1) \cap \mr C_\la(A_2 + t B) = \{0\}$ for all $t \neq 0$. If there are $j$ such that $a_j \neq \pm \II$ and $k$ such that $a_k = \pm\II$, then, assuming $j=1$ and $k=2$, we may choose $B \in \mf g \setminus \mf t$ and define $A(t)$ by \eqref{G-S-AzyBed-1}. Then, $I(\ul a , \ul A(t)) \subset \mr C_\la(a_1) \cap \mr C_\la(A_2 + t B) = \{0\}$ for all $t \neq 0$. Finally, if $a_i \neq \pm \II$ for all $i$, then we may choose $B_1 \in \la \setminus \mf t$, apply Lemma \rref{L-AzyBed-2} to find $B_2$ such that  $\Ad(a_1) B_1 - B_1 + \Ad(a_2) B_2 - B_2 = 0$, and put 
$$
\ul A(t) := (A_1 + t B_1 , A_2 + t B_2 , A_3 , \dots , A_N) \,,\qquad t \in \RR\,.
$$
Clearly, $A_1 + \Ad(a_1) B_1 \notin \mf t$. Thus, $I(\ul a , \ul A(t)) \subset \mr C_\la(a_1) \cap \mr C_\la(A_1 + t B) = \{0\}$ for all $t \neq 0$.
\epf


\subsection{Invariance conditions}


In this subsection, we check the invariance and covariance conditions introduced in subsection \ref{sec:quantizedversion}. 
In the sequel, we assume that $G$ is connected.

Concerning the invariance condition, we have the following well-known criterion, see \cite{GR,MuellerBahnsNeumaier,BBG}.

\bsz
\label{G-inv-star}

Let $(M,\omega, \mf g)$ be a symplectic $\mf g$-manifold, $\nabla$ a torsion-free, symplectic connection on $M$ and $\Omega \in \nu Z^2(M)[[\nu]]$ a series of closed two-forms. Then, the star-product constructed from these data is $\mf g$-invariant if and only if $\nabla$ and $\Omega$ are $\mf g$-invariant.

\esz

By the assumption of connectedness of $G$, we obtain the corresponding statement for the $G$-action. In the case at hand we have $\Omega = 0$ and, by Proposition \ref{S-nabla-inv}, the lifted connection is $G$-invariant. This implies the following. 
\bfg
\label{S-Ivr}

The Fedosov star product of the standard ordered type is $G$-invariant.

\efg

\begin{Remark}
Using the concrete form of the family of bi-differential operators $B$ derived in Subsection \ref{Fedosov-star}, 
one can check the $G$-invariance of the star product also by direct inspection. We leave this as an exercise to the 
reader. 
\end{Remark}

Covariance will be implied by the following lemma.

\ble\label{L-Cvr}

For $B \in \mf g$ and $f \in C^\infty(\ctg G^N)$, we have
$$
\mm_B \star f - f \star \mm_B = \mr i \lambda \{f , \mm_B\} + O(\lambda^2)
\,,
$$
where in case $f$ is fiber-homogeneous of order $n$, the highest order in $\lambda$ is $\lambda^n$.

\ele

\bbw

We observe that according to \eqref{G-ImpAbb}, the function $\mm_B$ is linear in the fiber variable $\ul\alpha$. Hence, $(0,\ul\ve^I) (0,\ul\ve^K) (\ul E_{I_1},0) \cdots (\ul E_{I_r},0) \mm_B = 0$ for any $I$, $K$ and $I_1 , \dots , I_r$. As a consequence, in the expansions \eqref{G-star-s} of $\mm_B \star f$ and of $f \star \mm_B$, only the contributions of $B_0$ and $B_1$ survive. Thus,
 \ala{
\mm_B & \star f - f \star \mm_B
\\
 = &
\frac{\lambda}{\mr i} 
 \bigg(
B_1(\mm_B,f) - B_1(f,\mm_B)
 + 
B_0\big((0,\ul\ve^I) \mm_B , (\ul E_I,0) f\big)
 -
B_0\big((0,\ul\ve^I) f , (\ul E_I,0) \mm_B\big)
 \bigg)
\\
 &
 -
\sum_{m=2}^\infty
\left(\frac{\lambda}{\mr i} \right)^m
 \left(
\frac{1}{(m-1)!} 
B_1
 \big(
(0,\ul\ve^{I_1}) \cdots (0,\ul\ve^{I_{m-1}}) f
 , 
(\ul E_{I_1},0) \cdots (\ul E_{I_{m-1}},0) \mm_B
 \big)
\right.
\\
 & \hspace{3cm} + \left.
\frac{1}{m!} 
B_0
 \big(
(0,\ul\ve^{I_1}) \cdots (0,\ul\ve^{I_m}) f
 , 
(\ul E_{I_1},0) \cdots (\ul E_{I_m},0) \mm_B
 \big)
 \right)
\,.
 }
Clearly, if $f$ is fiber-homogeneous of order $n$, the sum over $m$ runs up to $m=n$. Consider the contributions of first order in $\lambda$. The formulae for $B_0$ and $B_1$ given in Remark \rref{Bem-Mr} yield 
 \ala{
\mm_B \star f - f \star \mm_B
 & =
\frac{\lambda}{\mr i} 
 \bigg(
\big((0,\ul\ve^I) \mm_B\big) \cdot \big((\ul E_I,0) f\big)
 -
\big((0,\ul\ve^I) f\big) \cdot \big((\ul E_I,0) \mm_B\big)
\\
 & \hspace{1cm} +
[\ul E_I , \ul E_K]^\sim \cdot \big((0,\ul\ve^I) \mm_B\big) \cdot \big((0,\ul\ve^K) f\big)
 \bigg)
 +
O(\lambda^2)
\,.
 }
On the other hand, expanding the partial differentials $\mr d_G f$ and $\mr d_{\mf g^\ast} f$ defined by \eqref{G-dGf} wrt.\ the bases $\{\ul\ve^I\}$ in $(\mf g^\ast)^N$ and $\{\ul E_I\}$ in $\mf g^N$, we find
$$
\mr d_G f = \big((\ul E_I,0)f\big) \ul\ve^I
\,,\qquad
\mr d_{\mf g^\ast} f = \big((0,\ul\ve^I)f\big) \ul E_I
\,.
$$
Then,
$$
[\mr d_{\mf g^\ast} \mm_B , \mr d_{\mf g^\ast} f]
 =
\big((0,\ul\ve^I)\mm_B\big) \cdot \big((0,\ul\ve^K)f\big) [\ul E_I,\ul E_K]
$$
and \eqref{G-PoKl} yields 
 \ala{
\{\mm_B , f\}
 & =
\big((\ul E_I,0)f\big) \cdot \big((0,\ul\ve^I)\mm_B\big)
 -
\big((\ul E_I,0)\mm_B\big) \cdot \big((0,\ul\ve^I)f\big)
\\
 & \hspace{1cm} +
\big((0,\ul\ve^I)\mm_B\big) \cdot \big((0,\ul\ve^K)f\big) \cdot [\ul E_I,\ul E_K]^\sim
\,.
 }
This yields the assertion.
\ebw

\bsz\label{S-Cvr}

The Fedosov star product of the standard ordered type is covariant.

\esz

\bbw

We have to show that 
$$
\mm_B \star \mm_C - \mm_C \star \mm_B = \mr i \lambda \mm_{[B,C]}
$$
for all $B, C \in \mf g$. Since both $\mm_B$ and $\mm_C$ are linear in the fiber variable, Lemma \rref{S-Cvr} yields
$$
\mm_B \star \mm_C - \mm_C \star \mm_B = \mr i \lambda \{\mm_C , \mm_B\}
\,.
$$
Since $\mm$ is equivariant, $\{\mm_C , \mm_B\} = \mm_{[B,C]}$, cf.\ eg.\ \cite[Prop.\ 10.1.14]{BuchI}.
\ebw


To summarize, the standard ordered Fedosov star product on $\calC^\infty (T^*G^N)$
is invariant and covariant for $G=\SU(2)$, but by Lemma \ref{L-Cvr}
it is not necessarily strongly invariant. Combining this with the fact that conditions
\hyperref[GenCond]{{\rm(GH)}} and \hyperref[AcycCond]{{\rm (AC)}}
are satisfied, Theorems \ref{thm:homologicalreductionstarproduct} and \ref{thm:homologicalreduction}
can be applied in this situation, but not Corollary \ref{cor:stronglyinvariant}.
More precisely, one concludes the following.

\begin{Corollary}
    Let $G=\SU(2)$ and $(T^*G^N,\omega,\ul\Psi,J)$ be the associated lattice gauge model.
    Then the  standard ordered Fedosov star product $\star$ on $\calC^\infty (T^*G^N)$ gives rise to
    a star product $\ast$ on the quantized BRST algebra which then, after appropriate choices of an
    extension map $\ext$, a homotopy $h$ and deformations $\bm{\rest}$ and $\bm{h}$ of the restriction map
    and homotopy, respectively, induces a star product $\tilde{\star}$ on
    $H^0 (\g, \calC^\infty (T^*G^N)[[\lambda]])$ by
  \begin{displaymath}
    f \,  \tilde{\star}\,  g = \bm{\rest} (\ext (f) \ast \ext (g))
    \quad\text{for } f, g \in Z^0 (\g, \calC^\infty (T^*G^N)[[\lambda]]) \  .
  \end{displaymath}
  Moreover, 
  there exists a star product $\hat{\star}$ on the reduced phase space $T^*G^N\red G$ 
  of the form
  \begin{displaymath}
    f \, \hat{\star} \, g = S^{-1} (S (f)\, \tilde{\star} \, S(g))
    \quad\text{for } f, g \in \calC^\infty (T^*G^N\red G) \cong \calC^\infty (T^*G^N_0)^G \ ,
  \end{displaymath}
  where
  $S : \calC^\infty(T^*G^N_0)^G [[\lambda]] \to H^0 (\g,\calC^\infty (T^*G^N_0)[[\lambda]])$ is a
  topological linear isomorphism onto
  $H^0 (\g,\calC^\infty (T^*G^N_0)[[\lambda]]) \subset \calC^\infty (T^*G^N_0) [[\lambda]]$ of the form 
  $S = \sum_{k\in \N} \lambda^k S_k$ with $S_k : \calC^\infty (T^*G^N_0) \to \calC^\infty (T^*G^N_0)$
  continuous linear. 
\end{Corollary}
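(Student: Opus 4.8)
The plan is to assemble the results proved in Sections~\ref{sec:model}--\ref{sec:application} and feed them into Theorems~\ref{thm:homologicalreductionstarproduct} and~\ref{thm:homologicalreduction}. By the construction of Section~\ref{sec:model}, $(T^*G^N,\omega,\ul\Psi,J)$ with $G=\SU(2)$ is a $G$-Hamiltonian system for a compact, connected, semisimple Lie group, and the standard ordered Fedosov star product $\star$ on $\calC^\infty(T^*G^N)$ was constructed in Section~\ref{sec:fedosov}. So all that remains is to verify the hypotheses of those two theorems: the generating hypothesis \hyperref[GenCond]{(GH)}, the acyclicity condition \hyperref[AcycCond]{(AC)}, and the invariance and covariance of $\star$.

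For \hyperref[GenCond]{(GH)}, reducing to local models via \cite[Thm.~6.3]{AGJ} it suffices that for each symplectic tubular neighbourhood in a cover of $M$ the ideal $I(\mm^V)\subset\RR[V]$ be real radical, and in the cotangent bundle setting one may restrict to neighbourhoods of orbits through the zero section. The stabilizer types occurring for $G=\SU(2)$ are the centre $Z$ (where $\mm^V\equiv 0$, so $I(\mm^V)=0$ is trivially real radical), a maximal torus $T$ (handled by the nonpositivity criterion of \cite[Thm.~6.8]{AGJ}, as verified above via the curves $\gamma_\pm$), and all of $\SU(2)$ (Proposition~\ref{S-ErzBed-G}); since these exhaust the possibilities, \hyperref[GenCond]{(GH)} holds. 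Condition \hyperref[AcycCond]{(AC)} is Proposition~\ref{S-AzyBed}. Together, \hyperref[GenCond]{(GH)} and \hyperref[AcycCond]{(AC)} imply — by the Proposition following the statement of \hyperref[AcycCond]{(AC)} — that $K_\bullet(\calC^\infty(M),J)$ is a free resolution of $\calC^\infty(M_0)$. Invariance of $\star$ is Corollary~\ref{S-Ivr}, which, $G$ being connected, also yields $\g$-invariance; covariance is Proposition~\ref{S-Cvr}.

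Now Theorem~\ref{thm:homologicalreductionstarproduct} applies verbatim: choosing an equivariant continuous extension map $\ext$ and an equivariant continuous homotopy $h$ obeying the side conditions (which exist by Theorem~\ref{Thm:KoszulResolutionContraintSurface}), together with the deformations $\bm{\rest}$ and $\bm{h}$ from Proposition~\ref{prop:quantizedKoszulcomplex}, one obtains the deformation retract \eqref{eq:retractquantumBRSTcomplex} and hence the associative product $\tilde{\star}$ on $H^0(\g,\calC^\infty(T^*G^N)[[\lambda]])$ given by $f\,\tilde{\star}\,g=\bm{\rest}(\ext(f)\ast\ext(g))$ for $f,g\in Z^0(\g,\calC^\infty(T^*G^N)[[\lambda]])$. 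Since $G=\SU(2)$ is compact, connected and semisimple, Theorem~\ref{thm:homologicalreduction} then supplies the topological linear isomorphism $S=\sum_{k\in\N}\lambda^k S_k$ onto $H^0(\g,\calC^\infty(T^*G^N_0)[[\lambda]])\subset\calC^\infty(T^*G^N_0)[[\lambda]]$, with $S_k:\calC^\infty(T^*G^N_0)\to\calC^\infty(T^*G^N_0)$ continuous linear, and $f\,\hat{\star}\,g:=S^{-1}(S(f)\,\tilde{\star}\,S(g))$ is a star product on the reduced phase space $T^*G^N\red G\cong\calC^\infty(T^*G^N_0)^G$. Finally, Lemma~\ref{L-Cvr} shows that $\star$ need not be strongly invariant, so Corollary~\ref{cor:stronglyinvariant} is unavailable and the passage through $S$ is genuinely needed. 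The assembly itself presents no real difficulty; the substance of the argument lies in the verifications of \hyperref[GenCond]{(GH)} and \hyperref[AcycCond]{(AC)} — in particular Proposition~\ref{S-ErzBed-G}, where radicality of the complexified ideal and the matching of real and complex dimensions of the irreducible components rest on the density argument for $M^V$ combined with the criteria of \cite{AGJ} — and these are already in place.
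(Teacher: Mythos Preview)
Your proposal is correct and mirrors the paper's own argument: the paper does not give a separate proof of this Corollary but states it as the summary of Section~\ref{sec:application}, noting in the paragraph immediately preceding it that invariance (Corollary~\ref{S-Ivr}), covariance (Proposition~\ref{S-Cvr}), condition \hyperref[GenCond]{(GH)} (the case analysis over stabilizer types together with Proposition~\ref{S-ErzBed-G}) and condition \hyperref[AcycCond]{(AC)} (Proposition~\ref{S-AzyBed}) are all in place, so that Theorems~\ref{thm:homologicalreductionstarproduct} and~\ref{thm:homologicalreduction} apply while Corollary~\ref{cor:stronglyinvariant} does not because of Lemma~\ref{L-Cvr}. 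Your write-up is a faithful and slightly more explicit unpacking of exactly that reasoning.
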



\section{Outlook}

There is a variety of challenging open problems which may be subject to future work:
\ben
\item
Clearly, the star product on the reduced phase space is given in a complicated implicit way. To make it more explicit, one has to study the deformation retract structure entering the whole construction in more detail. 

\item
The model under consideration carries a natural K\"ahler structure. Thus, it will be interesting to derive the corresponding 
star product of Wick type. Moreover, it will be easy to find the star product of Weyl type. Thereafter, it will be possible 
to compare the properties of these products with the star product of standard order type dealt with in this paper. 

\item
  One should try to extend the results of Section \ref{sec:application} to other Lie groups, notably to
  $\SU(n)$ for $n>2$. In particular, it would be interesting to find examples for which the conditions
  \hyperref[GenCond]{{\rm(GH)}} and \hyperref[AcycCond]{{\rm (AC)}} are not fulfilled. 
  These are nontrivial tasks, because in each new case one deals with a new, different stratified structure
  and, thus, it seems to be hard to find general arguments. Moreover, it is likely that in some cases
  \hyperref[GenCond]{{\rm(GH)}} will be fulfilled, but \hyperref[AcycCond]{{\rm (AC)}} not,
  or even the other way around.
  For the analysis of Condition \hyperref[GenCond]{{\rm(GH)}} it is crucial to study the ideal generated
  by the components of the linearized moment map $J^V$ for various Lie groups or classes of Lie groups to
  see whether it is radical or not. This should be possible by means of real algebraic geometry.  
  If \hyperref[GenCond]{{\rm(GH)}} is fulfilled, then by Theorem 3.1 in \cite{Schwarzbier}, checking condition
  \hyperref[AcycCond]{{\rm (AC)}} boils down to checking that the set of points for 
  which the tangent mapping of the moment map is surjective is dense in the zero level set of the moment map.
  In case Condition \hyperref[AcycCond]{{\rm (AC)}} is not satisfied, there is no finite resolution
  of the classical observable algebra of the reduced phase space (as a module of the observable algebra of
  the unreduced space), but there still is a resolution of inifinite length, namely the
  Koszul--Tate resolution \cite{Avramov}. This method already has proved a powerful tool in the
  quantization of gauge  theories \cite{BarBraHen}.
  One can expect that it will be one too for quantized homological
  reduction where Condition \hyperref[AcycCond]{{\rm (AC)}} is not satisfied.
\item
  Our paper deals with formal deformation quantization only. It is a challenge to clarify whether the
  homological reduction method may be developed for strict deformation quantization
  (see e.g.\ \cite{Landsman}) as well. As already mentioned in the introduction, this would make it possible
  to compare the quantum observable algebra structure obtained here with the observable algebra obtained via
  canonical quantization described above in closer terms.  
  It appears to be promising to use methods from complex analysis as they were used in \cite{Schmitt}
  for a strict quantization of coadjoint orbits. 
\end{enumerate}


\appendix

\section{Tools  from homological algebra}
\label{ToolsHomAlg}
For the convenience of the reader we recall here some examples of complexes from homological algebra
which are crucial for our paper and the fundamental concepts of homological perturbation theory.
For more details on the former we refer the reader to \cite{HilStaCHA,GelMan,Weibel}, for the latter to
\cite{Stasheff,LamHPT,CrainicPerturbation}. 

\begin{Example}
  Let $R$ be a unital ring. Then every (ungraded) $R$-module $M$ can be understood as a
  cochain complex $M^\bullet$ concentrated in degree $0$ by putting
\[
  M^k =
  \begin{cases}
    M & \text{for } k=0 ,\\
    0 & \text{else.} 
  \end{cases}
\]
Likewise one constructs the chain complex $M_\bullet$ concentrated in degree $0$.
\end{Example}

\begin{Example}\label{AppKoszulCplx}
  Let $R$ be a commutative ring, $E$ a free $R$-module of finite rank $d$, and $x: E \to R$
  an $R$-linear map. Then the \emph{Koszul complex} on $x$ is the chain complex of $R$-modules
  \[
    K_\bullet (x) : 0 \longleftarrow R \overset{\partial}{\longleftarrow} \Lambda^1 E 
    \overset{\partial}{\longleftarrow} \Lambda^2 E \ \overset{\partial}{\longleftarrow}  \ldots \overset{\partial}{\longleftarrow}
    \Lambda^d E  \longleftarrow 0 \ ,
  \]
  where the \emph{Koszul differential} $\partial : \Lambda^k E \to  \Lambda^{k-1} E$ is given by
  \[
    \partial ( e_1 \wedge \ldots \wedge e_k) =  \sum_{l=1}^k x(e_l) \,  e_1 \wedge \ldots \wedge\widehat{e_l}\wedge\ldots\wedge e_k 
    \quad\text{for all } e_1 \ldots , e_k \in E \ .
  \]
  Under an isomorphism $E \cong R^d$, the map $x$ can be identified with a sequence $x_1,\ldots x_d$ of $R$-linear maps
  $x_l: R \to R$. It is a classical result in commutative algebra that the Koszul complex $K_\bullet (x)$ is acyclic
  if $x_1,\ldots x_d$ is a regular sequence that is if $x_l$ is a not a zero-divisor on $R/(x_1,\ldots , x_{l-1})$
  for $l = 1,\ldots , d$. In this case, $H_0 (K_\bullet (x)) $ coincides with the quotient ring
  $S = R /(x_1,\ldots ,x_d)$, and the Koszul complex is a free resolution of $S$ in the category of $R$-modules. 
\end{Example}

\begin{Example}\label{AppCEcomplex}
  Let $\g$ be a Lie algebra and $V$ a $\g$-module.
  The \emph{Chevalley--Eilenberg complex} $\left( \CE^\bullet (\g, V) , \delta\right)$  then is the cochain complex
  \[
    \CE^\bullet (\g, V) = \Hom \left( \Lambda^k \g , V \right) : 0 \rightarrow
    N \overset{\delta}{\rightarrow} V \otimes \g^* \rightarrow \ldots  \rightarrow V \otimes \Lambda^\ell \g^* \rightarrow 0 
  \]
  with the \emph{Chevalley--Eilenberg coboundary} $\delta : \CE^k (\g,V) \to \CE^{k+1} (\g,V)$  given by
\begin{equation*}
  \begin{split}
     \delta f \, (X_1,\ldots ,X_{k+1}) = \, & \sum_{1\leq i \leq k+1} (-1)^{i+1} X_i \cdot
     f(X_1,\ldots ,\widehat{X}_i , \ldots , X_{k+1} )  \\
     & + \sum_{1\leq i < j \leq k+1} (-1)^{i+j} f([X_i,X_j], X_1,\ldots ,\widehat{X}_i , \ldots
     \widehat{X}_j , \ldots , X_{k+1} ) \ ,
  \end{split}
\end{equation*}
for all $f \in \CE^k (\g,V)$ and $X_1,\ldots,X_{k+1} \in \g$. Chevalley and Eilenberg showed in \cite{CheEil}
that $\delta^2 =0$, so $\big( \CE^\bullet (\g,V), \delta \big)$ is a cochain complex indeed. 
Its cohomology is the \emph{Lie algebra cohomology} of $\g$ with values in the
$\g$-module $V$ and is denoted $H^\bullet (\g,V)$. Note that $H^0 (\g, V )$ coincides with the invariant part $V^\g$. 
\end{Example}

Of particular importance for our considerations is the following concept.

\begin{Definition}[cf.~{\cite[2.1 \& 2.3]{CrainicPerturbation}}]
  By a \emph{deformation retract} one understands a triple $(i,p,h)$ 
  consisting of a quasi-isomorphism of cochain complexes $i:(C^\bullet,\delta) \to (D^\bullet,d)$, a
  quasi-inverse $p: (D^\bullet,d) \to (C^\bullet,\delta)$ so that $p \circ i =\id_{C^\bullet}$ and a degree $-1$ graded map
  $h =(h^k)_{k\in \Z}$ which is a chain homotopy from $i \circ p$ to $\id_{C^\bullet}$
  that is which satisfies
  \[
    i \circ p - \id = h \circ d + d \circ h  \ .
  \]
  One usually denotes a deformation retract in the form 
  \begin{equation}
    \label{eq:DR}
    \left( (C^\bullet,\delta) \overset{i}{\underset{p}{\rightleftarrows}} (D^\bullet,d) , h \right) \ .
  \end{equation}
  A deformation retract $(i,p,h)$ is called \emph{special} if the conditions
\begin{equation}
  \label{eq:sideconditions}
  h \circ h = 0  \ , \quad h \circ i = 0 \ ,\quad  p \circ h = 0 \ ,
\end{equation}
are satisfied. Sometimes these conditions are referred to as
\emph{side conditions 1, 2}, and \emph{3}, respectively. 
\end{Definition}

Note that by properly changing the homotopy $h$ of a deformation retract one can achieve that the three side conditions hold
true, see \cite{LamHPT}.

In homological perturbation theory \cite{LamHPT,CrainicPerturbation,HueCAS,HueOBTHH} one studies the behavior of a deformation retract $(i,p,h)$
under perturbation. By the latter one understands a differential $\textsf{D} : D^\bullet \to D^\bullet$ of the form $\textsf{D} = d + t $ which
means that $t$ is a graded map of the same degree as $d$ and $(d + t)^2 = 0$.

\begin{Namedthm}{Perturbation Lemma}
\label{thm:perturbationlemma}  
  Let $\left( (C^\bullet,\delta) \overset{i}{\underset{p}{\rightleftarrows}} (D^\bullet,d) , h \right) $ be a
  deformation retract of filtered complexes satisfying side condition (3) and $\textsf{D} = d + t $ a perturbation. Assume that $\tau := pti$ satisfies
  $\tau p = p t$ and that $th +ht$ raises the filtration. Then $\id + \, th +ht$ is invertible, $\Delta = \delta + \tau$ is a differential on
  $C^\bullet$ and 
  \begin{equation}
    \label{eq:pDR}
    \left( (C^\bullet, \Delta) \overset{\textsf{i}}{\underset{p}{\rightleftarrows}} ( D^\bullet, \textsf{D}) , H \right) 
  \end{equation}
  is a deformation retract with $H=h (\id + th + ht)^{-1}$ and $\textsf{i} = i -H(ti - i\tau)$.
  If all side conditions hold for $(i,p,h)$, then they hold for \eqref{eq:pDR}.
\end{Namedthm}

\begin{proof}
  See the appendix of \cite{BHP}. 
\end{proof}
There exists a number of variants of the perturbation lemma for which we refer to the literature, in particular to
\cite[Lemmata A.1 \& A.2]{BHP} and \cite[2.4 \& 3.2]{CrainicPerturbation}. 

\begin{Remark}
  All of the above definitions and constructions can be performed when replacing
  the category of $R$-modules by an arbitrary abelian category or, with some additional care,
  by an additive subcategory of an abelian category.
  This is of relevance for homological reduction since there one essentially
  works within the category of Fr\'echet spaces which is an additive but not abelian
  subcategory of the abelian category of vector spaces over the field of real respectively
  complex numbers.
  In particular this means that the $k$-th (co)homology of a complex of Fr\'echet spaces
  might not be a Fr\'echet space again. But in any case it is still a vector space
  equipped with a compatible (possibly non-Hausdorff) vector space topology. In our construction
  of homological reduction and quantization we will point out this issue when necessary.  
\end{Remark}


\end{document}